\newlength{\commentWidth}
\let\oldnl\nl
\newcommand{\nonl}{\renewcommand{\nl}{\let\nl\oldnl}}
\theoremstyle{acmdefinition}
\newtheorem{remark}[theorem]{Remark}}
\DeclareMathOperator{\poly}{poly}
\DeclareMathOperator{\width}{width}
\DeclareMathOperator{\ent}{ent}
\DeclareMathOperator{\rev}{rev}
\let\epsilon\varepsilon
\let\eps\varepsilon
\newcommand{\NN}{\ensuremath{\mathbb{N}}}
\definecolor{orange}{RGB}{235,90,0}
\definecolor{darkorange}{RGB}{175,30,0}
\definecolor{turkis}{RGB}{131,182,182}
\definecolor{darkturkis}{RGB}{31,82,82}
\definecolor{green}{RGB}{102,180,0}
\definecolor{darkgreen}{RGB}{51,90,0}
\definecolor{myblue}{RGB}{0,0,213}
\definecolor{mydarkblue}{RGB}{0,0,100}
\definecolor{mybrightblue}{HTML}{74B0E4}
\definecolor{mybrighterblue}{HTML}{B3EAFA}
\definecolor{lila}{RGB}{102,0,102}
\definecolor{darkred}{RGB}{139,0,0}
\definecolor{darkyellow}{RGB}{188,135,2}
\definecolor{brightgray}{RGB}{200,200,200}
\definecolor{darkgray}{RGB}{50,50,50}
\definecolor{amaranth}{rgb}{0.9, 0.17, 0.31}
\definecolor{alizarin}{rgb}{0.82, 0.1, 0.26}
\definecolor{amber}{rgb}{1.0, 0.75, 0.0}
\definecolor{green(ryb)}{rgb}{0.4, 0.69, 0.2}
\definecolor{hanblue}{rgb}{0.27, 0.42, 0.81}
\definecolor{grannysmithapple}{rgb}{0.66, 0.89, 0.63}
\begin{document}

\title{On the Complexity of Computing the Co-lexicographic Width of a Regular Language}


\author{Ruben Becker}
\email{rubensimon.becker@unive.it}
\affiliation{%
  \institution{Ca' Foscari University of Venice}
  \city{Venice}
  \country{Italy}}
  
\author{Davide Cenzato}
\email{davide.cenzato@unive.it}
\affiliation{%
  \institution{Ca' Foscari University of Venice}
  \city{Venice}
  \country{Italy}}

\author{Sung-Hwan Kim}
\email{sunghwan.kim@unive.it}
\affiliation{%
  \institution{Ca' Foscari University of Venice}
  \city{Venice}
  \country{Italy}}

\author{Tomasz Kociumaka}
\email{tomasz.kociuma@mpi-inf.mpg.de}
\affiliation{%
  \institution{Max Planck Institute for Informatics, Saarland Informatics Campus}
  \city{Saarbr\"ucken}
  \country{Germany}
}

\author{Bojana Kodric}
\email{bojana.kodric@unive.it}
\affiliation{%
  \institution{Ca' Foscari University of Venice}
  \city{Venice}
  \country{Italy}}

\author{Alberto Policriti}
\email{alberto.policriti@uniud.it}
\affiliation{%
  \institution{University of Udine}
  \city{Udine}
  \country{Italy}}

\author{Nicola Prezza}
\email{nicola.prezza@unive.it}
\affiliation{%
  \institution{Ca' Foscari University of Venice}
  \city{Venice}
  \country{Italy}}

\renewcommand{\shortauthors}{Becker et al.}

\begin{abstract}

    Co-lex partial orders were recently introduced in (Cotumaccio et al., SODA 2021 and Journal of the ACM 2023) as a powerful tool to index finite state automata, with applications to regular expression matching. They generalize Wheeler orders (Gagie et al., Theoretical Computer Science 2017) and naturally reflect the co-lexicographic order of the strings labeling source-to-node paths in the automaton. 
    Briefly, the co-lex width $p$ of a finite-state automaton measures how \emph{sortable} its states are with respect to the co-lexicographic order among the strings they accept. Automata of co-lex width $p$ can be compressed to $O(\log p)$ bits per edge and admit regular expression matching algorithms running in time proportional to $p^2$ per matched character. 

    The \emph{deterministic co-lex width} of a regular language $\mathcal L$ is the smallest width of such a co-lex order, among all DFAs recognizing $\mathcal L$.
    Since languages of small co-lex width admit efficient and elegant solutions to hard problems such as automata compression and pattern matching in the substring closure of the language, computing the co-lex width of a language is relevant in applications requiring efficient solutions to those problems.
    The paper introducing co-lex orders determined that the deterministic co-lex width $p$ of a language $\mathcal L$ can be computed in time proportional to $m^{O(p)}$, given as input any DFA $\mathcal A$ for $\mathcal L$, of size (number of transitions) $m = |\mathcal A|$. Despite this complexity being polynomial for constant values of $p$ (in particular Wheeler languages, for which $p=1$), the constant in the exponent of this running time is large and the exact complexity of the problem is still not known. 
    
    In this paper, using new techniques, we show that it is possible to
    decide in $O(m^p)$ time if the deterministic co-lex width of the language recognized by a given minimum DFA is strictly smaller than some integer $p\ge 2$. 
    We complement this upper bound with a matching conditional lower bound based on the Strong Exponential Time Hypothesis.
    The problem is known to be PSPACE-complete when the input is an NFA (D'Agostino et al., Theoretical Computer Science 2023); thus, together with that result, our paper essentially settles the complexity of the problem.
\end{abstract}

\begin{CCSXML}
<ccs2012>
   <concept>
       <concept_id>10003752.10003766.10003776</concept_id>
       <concept_desc>Theory of computation~Regular languages</concept_desc>
       <concept_significance>500</concept_significance>
       </concept>
   <concept>
       <concept_id>10003752.10003809</concept_id>
       <concept_desc>Theory of computation~Design and analysis of algorithms</concept_desc>
       <concept_significance>500</concept_significance>
       </concept>
   <concept>
       <concept_id>10003752.10003777.10003779</concept_id>
       <concept_desc>Theory of computation~Problems, reductions and completeness</concept_desc>
       <concept_significance>500</concept_significance>
       </concept>
 </ccs2012>
    <ccs2012>
    <concept>
    <concept_id>10003752.10003809.10010031.10010033</concept_id>
    <concept_desc>Theory of computation~Sorting and searching</concept_desc>
    <concept_significance>500</concept_significance>
    </concept>
    </ccs2012>
\end{CCSXML}

\ccsdesc[500]{Theory of computation~Regular languages}
\ccsdesc[500]{Theory of computation~Design and analysis of algorithms}
\ccsdesc[500]{Theory of computation~Problems, reductions and completeness}
\ccsdesc[500]{Theory of computation~Sorting and searching}

\keywords{Sorting, Indexing,  Regular Languages, Wheeler automata, Parameterized Complexity}


\maketitle

\section{Introduction}
Wheeler automata were introduced by Gagie et al.~\cite{gagie:tcs17:wheeler} as a natural generalization of prefix-sorting techniques --- standing at the core of the most successful string processing algorithms --- to labeled graphs. Informally speaking, an automaton on alphabet $\Sigma$ is Wheeler if the co-lexicographic (\emph{colex} for short) order of the strings labeling source-to-states paths yields a \emph{total} order of the states.
As shown by Gagie et al.~\cite{gagie:tcs17:wheeler}, Wheeler automata can be encoded in just $O(\log|\Sigma|)$ bits per edge and they support efficient membership and \emph{pattern matching} queries. More precisely, finding all nodes reached by a path (starting in any node) labeled with a given query string can be done in linear time. These properties make them a powerful tool in applications such as regular expression matching and bioinformatics; in the latter, one popular way to cope with the rapidly-increasing number of available fully-sequenced genomes, is to encode them in a pangenome graph: aligning short DNA sequences allows one to discover whether the sequences at hand contain variants recorded (as sub-paths) in the graph~\cite{pangenomeGraphs}. 

Wheeler languages --- that is, regular languages recognized by Wheeler automata --- were later studied by  Alanko et al.\ in \cite{alanko:iac21:wheeler}. In that paper, the authors showed that Wheeler DFAs and Wheeler NFAs have the same expressive power: they recognize the same subset of the regular languages. As a matter of fact, the class of Wheeler languages proved to possess several other remarkable properties, in addition to representing the class of regular languages for which efficient indexing data structures exist. For instance, such languages can be characterized very elegantly with a convex version of the Myhill-Nerode theorem, and the smallest DFAs and NFAs for such languages have asymptotically the same number of states.

The main drawback of Wheeler automata and languages is that they represent a relatively sparse family with respect to the set of all automata/regular languages. In other words, very few automata admit a total order of their states reflecting the co-lexicographic order of strings that can be read on the automaton's walks. As shown by Cotumaccio et al.\ in a line of recent works \cite{cotumaccio:soda21:psortable,Cotumaccio22,CotumaccioJACM23}, a very natural solution to this issue is to drop the totality requirement and look at \emph{colex partial orders} (the formal definition is given in Definition \ref{def: colex order}). While the co-lexicographic width can be defined for general automata, in the deterministic case that we consider in this paper (i.e.\ DFAs), such orders have a very natural interpretation (see Figure \ref{fig:dfawidth}): consider the (possibly infinite) set $I_v$ of strings labeling all walks starting in the source node and ending in node $v$ of a DFA $\mathcal A$. Associate each state $v$ with the (open) interval $\mathcal I(v) = (\inf I_v,\sup I_v)$ on the co-lexicographically-sorted set of finite strings $\Sigma^*$. The colex order $<$ of $\mathcal A$'s states is then the natural partial order of such intervals $(\inf I_v,\sup I_v)$, where $u<v$ if and only if $\sup I_u \preceq \inf I_v$. The \emph{width} $p$ of such a partial order --- deemed the \emph{colex width} of the DFA --- is the cardinality of the largest set of mutually-overlapping such intervals. 
As shown in the foundational work on partial colex orders \cite{CotumaccioJACM23}, the colex width $p$ parameterizes several hard problems on automata: (i)~membership of a string in $\mathcal L(\mathcal A)$ and in its substring closure can be determined in time proportional to $p^2$ per matched character, (ii)~any NFA of width $p$ with $n$ states admits an equivalent DFA with at most $2^p(n-p+1)$ states, and (iii)~any automaton of width $p$ can be encoded in just $O(\log p + \log|\Sigma|)$ bits per transition.

\begin{figure}[ht!]
    \centering
    \subfloat[DFA $\mathcal{A}(=\mathcal{A}_{min})$]{
  \begin{tikzpicture}[
    x=2.0cm,y=1.5cm,
    v/.style={circle, draw=black, minimum size=7mm},
    ]
    
    \node  (v0) at (-0.5,1) {};
    \node[v]  (v1) at (0,1) {$v_1$};
    \node[v]  (v2) at (1,1) {$v_2$};
    \node[v]  (v3) at (2,1) {$v_3$};
    \node[v]  (v4) at (0,0) {$v_4$};
    \node[v]  (v5) at (1,0) {$v_5$};
    \node[v,accepting]  (v6) at (2,0) {$v_6$};

    \path [-stealth, thick]
        (v0) edge (v1)
        (v1) edge [above] node {0} (v2)
        (v1) edge [left] node {1} (v4)
        (v2) edge [above] node {1} (v3)
        (v2) edge [right] node {0} (v5)
        (v3) edge [bend left] node[right] {0} (v6)
        (v4) edge [above,left,pos=0.6] node {1} (v2)
        (v5) edge [below] node {0} (v4)
        (v5) edge [below] node {1} (v6)
        (v6) edge [bend left] node[left] {1} (v3)
        (v6) edge [loop right] node {0} (v6)
        ;
    
  \end{tikzpicture}
}
\subfloat[Infima and suprema]{
  \footnotesize
  \begin{tabular}[b]{l|r|r}
    \hline
    $v$ & $\inf I_v$ & $\sup I_v$ \\\hline
    1 & $\epsilon$ & $\epsilon$ \\
    2 & $0$ & $11$ \\
    3 & $01$ & $111$ \\
    4 & $000$ & $1$ \\
    5 & $00$ & $110$ \\
    6 & $\cdots0000$ & $1101$ \\
    \hline
  \end{tabular}
}

\subfloat[Interval representation]{
  \footnotesize
  \begin{tikzpicture}[
    x=0.9cm,y=-0.25cm,
    v/.style={circle, draw={black}, fill={white}, inner sep=1pt},
    vv/.style={circle, draw={black}, fill={white}, inner sep=1pt},
    w/.style={circle, fill={red}, inner sep=1pt},
    ]
    
    \node    (hl) at (0,0) {}; 
    \node    (hr) at (12,0) {}; 
    \node[label={[label distance=0cm,text depth=1em,rotate=-90] left: $\epsilon$}] at (1,0) {};
    \node[label={[label distance=0cm,text depth=1em,rotate=-90] left: $0$}] at (2,0) {};
    \node[label={[label distance=0cm,text depth=1em,rotate=-90] left: $00$}] at (3,0) {};
    \node[label={[label distance=0cm,text depth=1em,rotate=-90] left: $000$}] at (4,0) {};
    \node[label={[label distance=0cm,text depth=1em,rotate=-90] left: $\cdots0000$}] at (5,0) {};
    \node[label={[label distance=0cm,text depth=1em,rotate=-90] left: $110$}] at (6,0) {};
    \node[label={[label distance=0cm,text depth=1em,rotate=-90] left: $1$}] at (7,0) {};
    \node[label={[label distance=0cm,text depth=1em,rotate=-90] left: $01$}] at (8,0) {};
    \node[label={[label distance=0cm,text depth=1em,rotate=-90] left: $1101$}] at (9,0) {};
    \node[label={[label distance=0cm,text depth=1em,rotate=-90] left: $11$}] at (10,0) {};
    \node[label={[label distance=0cm,text depth=1em,rotate=-90] left: $111$}] at (11,0) {};
    
    \node[v,label=left:$v_1$] (v1l) at (1,1) {}; 
    \node[v] (v1r) at (1,1) {};

    \node[v,label=left:$v_2$] (v2l) at (2,1) {};
    \node[v] (v2r) at (10,1) {};

    \node[v,label=left:$v_3$] (v3l) at (8,2) {};
    \node[v] (v3r) at (11,2) {};

    \node[v,label=left:$v_4$] (v4l) at (4,2) {};
    \node[v] (v4r) at (7,2) {};
    \node[v,label=left:$v_5$] (v5l) at (3,3) {};
    \node[v] (v5r) at (6,3) {};
    \node[vv,label=left:$v_6$] (v6l) at (5,4) {};
    \node[v] (v6r) at (9,4) {};

    \path [thick]
        (hl) edge (hr)
        (v2l) edge (v2r)
        (v3l) edge (v3r)
        (v4l) edge (v4r)
        (v5l) edge (v5r)
        (v6l) edge (v6r)
        ;

    \node[w] (w0) at (5.5,1) {};
    \node[w] (w1) at (5.5,2) {};
    \node[w] (w2) at (5.5,3) {};
    \node[w,label={below,text=red}:width$(\mathcal{A})$\text{$=$}4] (w3) at (5.5,4) {};
    \draw [red]
        (5.4,0.6) to (5.6,0.6)
        (5.6,0.6) to (5.6,4.4)
        (5.6,4.4) to (5.4,4.4)
        (5.4,4.4) to (5.4,0.6)
    ;
  \end{tikzpicture}
}
    \caption{Interval representation of infima and suprema strings of a (minimum) DFA.}
    \label{fig:dfawidth}
\end{figure}
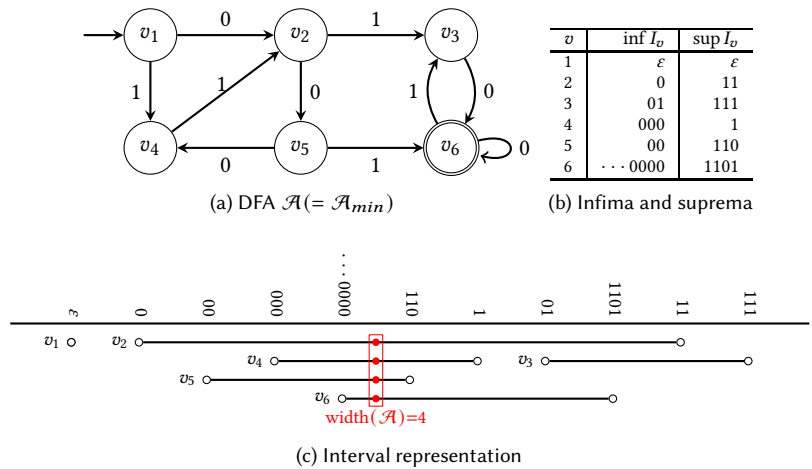

In view of such important properties, it is of interest to determine exactly the smallest colex width of automata accepting a given regular language $\mathcal L$.
If no restriction is imposed on the accepting automata (i.e.\ they can be arbitrary NFAs), then the corresponding quantity of interest is called the \emph{nondeterministic colex width} $\width^N(\mathcal L)$. If, on the other hand, one is interested in the smallest width of a DFA accepting $\mathcal L$, then the quantity is called the \emph{deterministic colex width} $\width^D(\mathcal L)$. As shown in \cite{CotumaccioJACM23}, these two quantities define two hierarchies of the regular languages, classifying them by their propensity to be compressed, sorted, and indexed. 
Interestingly, the two hierarchies do not coincide except for the lowest level, $\width^N(\mathcal L) = \width^D(\mathcal L) = 1$, capturing precisely the class of Wheeler languages. 

This paper is devoted to characterizing the fine-grained complexity of computing $\width^D(\mathcal L)$ given a DFA accepting the language $\mathcal L$.
More precisely, given a DFA $\mathcal A$ with $m$ transitions and an integer $p \ge 2$, in Theorem~\ref{thm:simple algorithm} we show that the problem of determining whether $\width^D(\mathcal L(\mathcal A)) < p$ can be solved in $O(m^p)$ time. We then refine this result in Theorem \ref{thm: parameterized running time} with a faster optimized algorithm; a C++ implementation of the algorithm behind Theorem \ref{thm: parameterized running time} is publicly available at \url{https://github.com/regindex/DeterministicWidth}.
Finally, in Theorem \ref{thm: hardness} we 
complement our upper bounds with matching lower bounds holding under the Strong Exponential Time Hypotheses (SETH).

While computing the width of a given DFA $\mathcal A$ is an easy problem (see also Table \ref{tab:width complexities}), we stress out that computing $\width^D(\mathcal L(\mathcal A))$ is a much harder task. For instance, the smallest DFA $\mathcal A'$ of minimum width equivalent to $\mathcal A$ could be exponentially larger than $\mathcal A$, even if $\width^D(\mathcal L(\mathcal A))  = 1$ \cite{manzini_et_al:LIPIcs.CPM.2024.23}. This means that, even in the case $\width^D(\mathcal L(\mathcal A))  = 1$ (Wheeler languages), in the worst case an algorithm building explicitly $\mathcal A'$ would be exponentially slower than our solution (running in quadratic time on Wheeler languages).

In the next section we introduce all necessary definitions, give a more formal definition of the above problem (as well as related ones), and discuss the state of the art in the field.

\section{Preliminaries, Problems, and State of the Art}\label{sec:preliminaries}

\subsection{Model of computation}\label{sec:RAM model}

Our algorithms work in the word-RAM model with memory word size of $w$ bits and a finite memory of size  $2^w$ words. Letting $N$ be the size (number of words) of the input, we assume that $w \geq \log_2 (N+1)$, i.e.\ that an address of one memory word is sufficient to access any portion of the input. We assume that standard arithmetic operations (including multiplication and division) between integers of $O(1)$ words, as well as bitwise operations between integers of $O(1)$ words and de-referencing a memory address, take constant time. 

We notice that, while sometimes in the literature the word-RAM model is assumed to have infinite memory (see, e.g., Hagerup \cite{HagerupRAM}), this creates issues with algorithms --- like the ones we present in this paper --- working in super-polynomial space (see \cite{bille2015regular}). In particular, in this setting one cannot assume anymore that an address to the working space uses just one machine word. This, in turn, requires to resort to mechanisms such as multiple addressing (for instance, Hagerup assumes double addressing~\cite{HagerupRAM}), which notably complicate algorithms' descriptions as, for instance, de-referencing an address does not take constant time anymore. Since our algorithms may use exponential working space in the worst case, for simplicity we decided to stick to the requirement (also common in the literature) that the working space does not exceed $2^w$ words, which is also a very reasonable assumption in practice. In any case, later (Remark \ref{remark: exp-space in word RAM}) we briefly argue that our results also hold in extended word-RAM models with multiple-addressing, where no limitation is imposed on the working space).

Letting our working space be expressed in big-O notation as $O(W)$ (words), we will sometimes simplify our analyses by saying that the above requirement (i.e.\ \emph{the working space cannot exceed $2^w$ words}) is equivalent to requiring $\log_2(W+1) \leq w$. While this is not strictly true (due to the constant hidden in the big-O notation), it becomes true by adjusting $w$ by an additive constant, an operation which can be simulated, for instance, by double addressing \cite{HagerupRAM} and which does not increase the asymptotic complexity of atomic operations in the model.

\subsection{Intervals and Strings}

With $[N]$ we denote the set of integers $\{1,\ldots, N\}$ and with $[M, N]$ the set of integers $\{M,\ldots, N\}$.
Let $\Sigma$ be a finite alphabet. Without loss of generality, in this paper we take $\Sigma = [\sigma]$ for some integer $\sigma \geq 2$.
A \emph{finite string} $\alpha\in \Sigma^*$  
is a finite concatenation of characters from $\Sigma$. 
The notation $|\alpha|$ indicates the length of the string $\alpha$.
The symbol $\epsilon$ denotes the empty string. 
The notation $\alpha[i]$ denotes the $i$-th character from the beginning of $\alpha$, with indices starting from 1.
Letting $\alpha,\beta\in\Sigma^*$, $\alpha\cdot \beta$ (or simply $\alpha\beta$) denotes the concatenation of strings $\alpha$ and $\beta$.
An \emph{$\omega$-string} $\beta \in \Sigma^\omega$ (or \emph{infinite string} / \emph{string of infinite length}) is an infinite numerable concatenation of characters from $\Sigma$. 
In this paper, we work with \emph{left-infinite} $\omega$-strings, meaning that $\beta \in \Sigma^\omega$ is constructed from the empty string $\epsilon$ by prepending an infinite number of characters to it. In particular, the operation of appending a character $a\in\Sigma$ at the end of a $\omega$-string $\alpha \in \Sigma^\omega$ is well-defined and yields the $\omega$-string $\alpha a$. 
The notation $\alpha^\omega$, where $\alpha \in \Sigma^*$, denotes the concatenation of an infinite (numerable) number of copies of string $\dots \alpha \cdot \alpha$ (prepended to the left). Note that this notation is the reverse of what is usually done in the theory of \(\omega\)-languages, where \(\omega\)-strings grow to the right and \(\alpha^\omega\) denotes a right-infinite string.
The co-lexicographic (or co-lex) order $\prec$ of two strings $\alpha,\beta\in \Sigma^* \cup \Sigma^\omega$ is defined as follows. (i) $\epsilon \prec \alpha$ for every $\alpha \in \Sigma^+ \cup \Sigma^\omega$, and (ii) if $\alpha = \alpha'a$ and $\beta=\beta'b$ (with $a,b\in \Sigma$ and $\alpha',\beta'\in \Sigma^* \cup \Sigma^\omega$), $\alpha \prec \beta$ holds if and only if $(a\prec b)  \vee (a=b \wedge \alpha' \prec \beta')$.
In this paper, the symbols $\prec$ and $\preceq$ will be used both to denote the total order between the alphabet's characters and the co-lexicographic order between strings/$\omega$-strings.

\subsection{Randomization Techniques}\label{subsec: randomization}

We will use Karp-Rabin hashing \cite{KR} (also known as polynomial hashing \cite{Dietzfelbinger1992}) of sets:

\begin{definition}[Karp-Rabin hashing~\cite{KR}] 
Let $q$ be a prime number and let $x \in [q-1]$. 
The Karp-Rabin fingerprint (or hash value) of a set $S \subseteq [n]$ is defined as\footnote{Technically, with this definition the term $x^0$ never appears since our sets are subsets of $[n] = \{1, \dots, n\}$ (we chose to do this for readability). This does not affect the hashing scheme's properties.} $\kappa(S) = \sum_{i\in S} x^i \mod q$.
\end{definition}

In other words, $\kappa(S)$ is the hash of the characteristic bitvector of $S$ (i.e., the bitvector $B \in \{0,1\}^n$ such that $B[i]=1$ if and only if $i\in S$).
If $x$ is chosen uniformly from $[q-1]$ and $S\neq S'$, then $\kappa(S) = \kappa(S')$ (i.e.\ the fingerprints collide) with probability bounded by $n/q$ \cite{Dietzfelbinger1992}.
By choosing $q \geq 2^{cw+1}$ for any constant $c$, the prime $q$ and fingerprints $\kappa(S)$ fit in $O(1)$  words and the collision probability is at most $2^{-cw}$. 
If $i\notin S$, then it is easy to see that $\kappa(S\cup \{i\}) = \kappa(S) + x^i \mod q$. Similarly, if $i\in S$, then $\kappa(S \setminus \{i\}) = \kappa(S) - x^i \mod q$.
Therefore, if $(x^i \mod q)$ is precomputed, these calculations can be performed in constant time in the word RAM model.

In this work, if $M$ is a multiset then $\kappa(M)$ is defined to be the Karp-Rabin fingerprint of the set associated with $M$, i.e.\ $\kappa(M) = \kappa(\{x\ :\ x \in M\})$ (in other words, multiplicities are ignored).

In this paper, \emph{with high probability} (abbreviated \emph{w.h.p.}) means with probability at least $1 - N^{-c}$ for an arbitrarily large constant $c$ fixed at the beginning, where $N$ is the size (number of memory words) of the input.

\subsection{DFAs, Wheeler DFAs, and Co-Lex Width}\label{sec:DFAs}

In this paper, we work with deterministic finite state automata (DFAs): 

\begin{definition}[DFA]
    A DFA $\mathcal A$ is a quintuple $(Q,\Sigma,\delta,s,F)$ where $Q$ is a finite set of states, $\Sigma$ is an alphabet set, $\delta:Q\times\Sigma\rightarrow Q$ is a transition function, $s\in Q$ is a source state, and $F\subseteq Q$ is a set of final states.
\end{definition}

The \emph{regular language} accepted by a DFA $\mathcal A = (Q,\Sigma,\delta,s,F)$ is $\mathcal L(\mathcal A) = \{\alpha\in \Sigma^*\ :\ \delta(s,\alpha)\in F\}$. 

For a state $u\in Q\setminus \{s\}$, we write $\lambda(u)$ for the set of all characters $a\in\Sigma$ such that $u$ has an in-going transition labeled $a$. For the source state $s$, in addition to the characters of all its in-going transitions, we furthermore add an artificial character $\#$ to $\lambda(s)$ such that $\#\prec a$ for all $a\in \Sigma$. For $u,v\in Q$ and $a\in\Sigma$ with $\delta(u,a)=v$, we sometimes write $u\xrightarrow{a} v$ and define $\lambda(u,v)=a$.

We extend the domain of the transition function $\delta$ in two ways: (1) to words $\alpha\in \Sigma^*$ as customary in the literature, i.e., for $a\in \Sigma$, $\alpha\in \Sigma^*$, and $q\in Q$, we let $\delta(q,a\cdot \alpha) = \delta(\delta(q,a),\alpha)$ and $\delta(q,\epsilon) = q$. (2) To sets of states: for any $S\subseteq Q$, we define $\delta(S,a) = \{\delta(u,a)\ :\ u\in S\}$.

Since in this article we deal with properties of languages, without loss of generality we always assume that DFAs are \emph{accessible} --- that is, for every $u\in Q$ there exists $\alpha\in \Sigma^*$ such that $\delta(s,\alpha) = u$ (i.e.\ any state in $Q\setminus \{s\}$ is reachable from the source) --- and \emph{co-accessible} --- that is, for every $u\in Q$, there exists $\alpha\in \Sigma^*$ such that $\delta(u,\alpha) \in F$ (that is, every state can reach a final state). Any DFA can be pruned so that it becomes accessible and co-accessible, without affecting the accepted language.

In this work, $n= |Q|$  denotes the number of states and $m = |\delta| = |\{(u,v,a)\in Q\times Q\times \Sigma :\delta(u,a)=v\}|$ the number of transitions of the input DFA.
Without loss of generality, we assume that $Q = [n]$, $\Sigma = [\sigma]$ for some $\sigma \geq 1$, and that the alphabet is \emph{effective}, i.e.\ that every character in $\Sigma$ appears on some transition (in particular, $\sigma \le m$).
We can also assume, without loss of generality,  $m\geq n-1$: otherwise, there exist states (different than the source $s$) with no incoming transitions. Those states can be safely removed from the automaton, without affecting the recognized language (a property that suffices, since in this paper we discuss algorithms computing properties of languages).

While automata specify initial and final states, in some cases we will only be interested in their topology. A \emph{semiautomaton (semi-DFA if deterministic)} is an automaton that does not specify initial and final states:

\begin{definition}[semi-DFA]
    A semi-DFA $\mathcal A$ is a triple $(Q,\Sigma,\delta)$ where $Q$ is a finite set of states, $\Sigma$ is an alphabet set, and $\delta:Q\times\Sigma\rightarrow Q$ is a transition function.
\end{definition}

Given a DFA $\mathcal A$ and an integer $p\geq 2$, we define its \emph{power semi-DFA} $\mathcal A^p$ as follows:

\begin{definition}[Power semi-DFA]\label{def:power semiDFA}
   Let $\mathcal{A}=(Q, \Sigma, \delta, s, F)$ be a DFA, and let $p\geq 2$. The power semi-DFA $\mathcal A^p$ is the triple $\mathcal A^p = (Q^p, \Sigma, \delta')$, where $\delta' : Q^p \times \Sigma \rightarrow Q^p$ is the transition function such that, for every $U=(u_1,\cdots,u_p),V=(v_1,\cdots,v_p)\in Q^p$ and $a\in\Sigma$, $\delta'(U,a)=V$ iff $\delta(u_i,a)=v_i$ for every $i\in[p]$.
\end{definition}

Given a DFA $\mathcal A$, the set $I_q$ consists of all words \emph{reaching} $q$ from the initial state, formally:

\begin{definition}\label{def:I_q}
    Let $\mathcal A = (Q, \Sigma, \delta, s, F) $ be a DFA. For $u\in Q$, the set $I_u$ is defined as
            $
            		I_u =\{\alpha \in \Sigma^* : u = \delta(s, \alpha)\}
            $
\end{definition}

An important role in our work is played by the \emph{infimum} and \emph{supremum} strings associated with every state:

\begin{definition}[Infimum and supremum strings \cite{alanko_et_al:LIPIcs.CPM.2024.1}]\label{def:inf sup}
Let $\mathcal A = (Q, \Sigma, \delta, s, F) $ be a DFA.
The infimum string $\inf I_u$ and the supremum string $\sup I_u$ of a state $u\in Q$ are defined as:
\begin{align*}
\inf I_u &= \gamma\in\Sigma^*\cup\Sigma^\omega \mbox{ s.t. } (\forall \beta\in\Sigma^*\cup\Sigma^\omega)(\beta 
\preceq I_u \rightarrow \beta \preceq \gamma \preceq I_u) \\
\sup I_u &= \gamma\in\Sigma^*\cup\Sigma^\omega \mbox{ s.t. } (\forall \beta\in\Sigma^*\cup\Sigma^\omega)(I_u \preceq \beta \rightarrow 
I_u \preceq \gamma \preceq \beta) 
\end{align*}
where the notation $\gamma \preceq I_u$ (similarly for $I_u \preceq \gamma $) stands for $(\forall \alpha \in I_u)(\gamma \preceq \alpha)$.
\end{definition}

We furthermore define:

\begin{definition}\label{def:interval of u}
    Let $\mathcal A = (Q, \Sigma, \delta, s, F) $ be a DFA. For $u\in Q$, we define  $\mathcal I(u):=(\inf I_u, \sup I_u) = \{\alpha \in \Sigma^*\ :\ \inf I_u \prec \alpha \prec \sup I_u\} \subseteq \Sigma^*$.
\end{definition}

Note that $\mathcal I(u)$ is an open interval and that $\inf I_u$ and $\sup I_u$ could be left-infinite strings (see  Figure~\ref{fig:dfawidth}).

A classic result from language theory \cite{nerode1958linear} states that the minimum DFA --- denoted with $\mathcal A_{\min}$ --- recognizing the language $\mathcal{L}(\mathcal A)$ 
of any DFA $\mathcal A$
is unique. 
The DFA $\mathcal A_{\min}$  can be computed from $\mathcal A$ in $O(m\log n)$ time with a classic partition-refinement algorithm due to Hopcroft
\cite{hopcroft1971n}.

We will now formally define the co-lex width of a DFA. For this purpose, we first introduce co-lex orders.
\begin{definition}[Co-lex Order \cite{CotumaccioJACM23}] \label{def: colex order}
    Let $\mathcal A = (Q, \Sigma, \delta, s, F)$ be a DFA.
	A \emph{co-lex order} for $\mathcal A$ is a strict partial order $<$ of $Q$ such that the following two conditions hold:
    \begin{enumerate}[(1)]
        \item\label{first colex axiom} For every $u, v\in Q$, if $u < v$, then $\max \lambda(u) \le \min \lambda(v)$. 
        \item\label{second colex axiom} For every $u, v, u',v'\in Q$ and $a\in\Sigma$, if $u=\delta(u', a)$, $v=\delta(v', a)$ and $u < v$, then $u' < v'$.  
    \end{enumerate}
\end{definition}

We remark that condition (1) is more general than the analogous condition required in the definition of Wheeler graphs \cite{gagie:tcs17:wheeler}; in their paper, the authors require \emph{input-consistency}: each incoming edge of a given state is required to have the same label, so that $\lambda(u)$ is a singleton for every $u\in Q$. The generalization of Definition \ref{def: colex order} allows us to work with arbitrary automata and was first proposed in \cite{CotumaccioJACM23}.

As mentioned before, the same notion can be naturally extended to arbitrary NFAs (we omit the details since in this work we focus on DFAs only). We remark that $\#\in \lambda(s)$ implies that for no $u\in Q$ does it hold that $u < s$.
The \emph{width} of a strict partial order $<$ on $Q$ is the size of its largest antichain, i.e., the largest set of pairwise \emph{incomparable} states, where two distinct states $u,v\in Q$ are said to be incomparable if neither $u < v$ nor $v < u$ holds. 

Following Cotumaccio et al.~\cite{CotumaccioJACM23}, we define the co-lex width of a given automaton and the deterministic and nondeterministic co-lex widths of a regular language: 

\begin{definition}[Co-lex Width] \label{def: co-lex width}
    Let $\mathcal A$ be a finite state automaton. 
    \begin{itemize}
        \item The \emph{co-lex width} of $\mathcal A$, $\width(\mathcal A)$, is defined as the minimum width of a co-lex order for $\mathcal A$, i.e., $\width(\mathcal A) = \min\{\width(<)\ |\ <\mathrm{\ is\ a\ colex\ order\ for\ }\mathcal A\}$.
        \item The \emph{deterministic co-lex width} $\width^D(\mathcal L)$ of a regular language $\mathcal L$, is defined as the minimum co-lex width of a DFA $\mathcal A$ accepting $\mathcal L$: $\width^D(\mathcal L) = \min\{\width(\mathcal A)\ |\ \mathcal A\ \mathrm{is\ a\ DFA\ and\ } \mathcal L(\mathcal A) = \mathcal L\}$. 
        \item The \emph{nondeterministic co-lex width} $\width^N(\mathcal L)$ of a regular language $\mathcal L$, is defined as the minimum co-lex width of an NFA $\mathcal A$ accepting $\mathcal L$: $\width^N(\mathcal L) = \min\{\width(\mathcal A)\ |\ \mathcal A\ \mathrm{is\ an\ NFA\ and\ } \mathcal L(\mathcal A) = \mathcal L\}$.
    \end{itemize}
\end{definition}

As shown by Kim et al.~\cite{KimOP23}, 
in the case of DFAs the co-lex width has a very intuitive interpretation: 

\begin{lemma}[Thm. 10 of \cite{KimOP23}]\label{def:width characterization intervals}
    Given any DFA $\mathcal A = (Q, \Sigma, \delta, s, F)$, the co-lex order $<$ such that $\width(\mathcal A) = \width(<)$ is such that, for any two states $u,v\in Q$:
    $$
    u < v \Leftrightarrow \sup I_u  \preceq \inf I_v 
    $$
\end{lemma}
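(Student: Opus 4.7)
The plan is to work with the candidate relation $<^*$ on $Q$ defined by $u <^* v$ iff $u \ne v$ and $\sup I_u \preceq \inf I_v$, show that $<^*$ is itself a co-lex order for $\mathcal{A}$, and then show that every other co-lex order $<$ on $\mathcal{A}$ is contained in $<^*$. Combined, these force every antichain of $<^*$ to remain an antichain of any co-lex order $<$, so $\width(<^*) \le \width(<)$ for all co-lex orders $<$; together with $<^*$ being itself a co-lex order, this yields $\width(\mathcal{A}) = \width(<^*)$.

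Checking that $<^*$ is a strict partial order is routine: irreflexivity is by definition, transitivity uses $\inf I_v \preceq \sup I_v$ chained with the two defining inequalities, and antisymmetry holds since $u <^* v$ together with $v <^* u$ would squeeze $\inf I_u = \sup I_u = \inf I_v = \sup I_v$ into a single string lying in both $I_u$ and $I_v$, contradicting determinism. For axiom~(1), if some $a \in \lambda(u)$ and $b \in \lambda(v)$ satisfied $b \prec a$, then picking $\alpha \in I_u$ ending in $a$ and $\beta \in I_v$ ending in $b$ forces $\alpha \preceq \sup I_u \preceq \inf I_v \preceq \beta$, while the last-character comparison forces $\beta \prec \alpha$, a contradiction. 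For axiom~(2), from $u = \delta(u', a)$, $v = \delta(v', a)$, and $u <^* v$, determinism yields $u' \ne v'$, and for arbitrary $\alpha' \in I_{u'}$, $\beta' \in I_{v'}$ the chain $\alpha' a \preceq \sup I_u \preceq \inf I_v \preceq \beta' a$ strips to $\alpha' \preceq \beta'$; letting $\alpha'$ and $\beta'$ vary separately gives $\sup I_{u'} \preceq \beta'$ and then $\sup I_{u'} \preceq \inf I_{v'}$, i.e., $u' <^* v'$.

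The remaining direction, that any co-lex order $<$ on $\mathcal{A}$ is contained in $<^*$, is the main technical step. I would prove by induction on $|\alpha|+|\beta|$ that $\alpha \preceq \beta$ whenever $u < v$, $\alpha \in I_u$, and $\beta \in I_v$. The base case $\alpha = \epsilon$ is immediate; the base case $\beta = \epsilon$ forces $v = s$, whence $\# \in \lambda(v)$ combined with axiom~(1) and accessibility pushes $u = s = v$, contradicting $u < v$. In the step case, write $\alpha = \alpha' a$ and $\beta = \beta' b$ with $u = \delta(u', a)$, $v = \delta(v', b)$, $\alpha' \in I_{u'}$, $\beta' \in I_{v'}$, and use axiom~(1) to extract $a \preceq b$. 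If $a \prec b$ then $\alpha \prec \beta$; if $a = b$, axiom~(2) applied to $u \ne v$ yields $u' < v'$ (and $u' \ne v'$ by determinism), and the inductive hypothesis supplies $\alpha' \preceq \beta'$, hence $\alpha \preceq \beta$. Passing to $\sup$ over $\alpha \in I_u$ and $\inf$ over $\beta \in I_v$ then delivers $\sup I_u \preceq \inf I_v$, i.e., $u <^* v$.

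The delicate point throughout is that $\inf I_u$ and $\sup I_u$ may be $\omega$-strings, so the $\preceq$ comparisons and the passages between pointwise inequalities on $I_u, I_v$ and the inequality $\sup I_u \preceq \inf I_v$ must consistently be interpreted inside $\Sigma^* \cup \Sigma^\omega$; beyond this bookkeeping the arguments are essentially combinatorial on the DFA's transition structure. Combining the three ingredients above then gives $\width(\mathcal{A}) = \width(<^*)$, as required.
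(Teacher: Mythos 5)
The paper does not prove this statement: it is imported verbatim as Theorem~10 of Kim et al.\ \cite{KimOP23}, so there is no in-paper argument to compare against. Your proposal is a correct, self-contained reconstruction of that result, and it follows the standard route: show that the relation $<^*$ given by $u\ne v$ and $\sup I_u \preceq \inf I_v$ satisfies both co-lex axioms, show that every co-lex order is a subrelation of $<^*$ (your induction on $|\alpha|+|\beta|$ is the right engine for this, and the passage from the pointwise inequalities $\alpha\preceq\beta$ to $\sup I_u\preceq\inf I_v$ via the least-upper-bound/greatest-lower-bound properties is sound), and conclude that antichains of $<^*$ are antichains of every co-lex order, so $<^*$ is width-minimizing. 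Your insertion of the side condition $u\ne v$ is also the correct reading of the lemma, since $\sup I_u=\inf I_u$ when $I_u$ is a singleton and the relation would otherwise fail irreflexivity.

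One small hole worth patching: in your verification of axiom~(1) you pick $\beta\in I_v$ ending in $b$ for $b=\min\lambda(v)$, but when $v=s$ the minimum label is the artificial symbol $\#$, which labels no actual transition, so no such $\beta$ exists. The case is vacuous rather than wrong --- since $\epsilon\in I_s$ we have $\inf I_s=\epsilon$, while $\sup I_u\succ\epsilon$ for every accessible $u\ne s$, so $u<^* s$ never holds --- but you should say this explicitly rather than let the argument silently assume $b\in\Sigma$. With that observation added, the proof is complete.
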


In particular, the above theorem implies that we can adopt an intuitive interval representation for the colex order of a DFA. Later, we will use the following property: 

\begin{corollary}\label{cor: intervals vs colex width}
    Given any DFA $\mathcal A = (Q, \Sigma, \delta, s, F)$, the set of open-ended intervals $\{\mathcal I(u) = (\inf I_u, \sup I_u)\ :\ u\in Q\}$ has width (i.e.\ largest subset of mutually-intersecting intervals) at most $\width(\mathcal A)$.
\end{corollary}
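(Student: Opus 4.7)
The corollary is essentially a direct consequence of Lemma~\ref{def:width characterization intervals}: that lemma gives an explicit co-lex order $<$ of width $\width(\mathcal A)$ defined by $u<v \Leftrightarrow \sup I_u\preceq \inf I_v$. The plan is to argue that any family of pairwise intersecting open intervals from $\{\mathcal I(u) : u\in Q\}$ picks out a corresponding antichain in $<$, whose size is bounded by the width of $<$, which equals $\width(\mathcal A)$.

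\textbf{Main steps.} Let $S\subseteq Q$ be a set of states such that the intervals $\{\mathcal I(u) : u \in S\}$ are pairwise distinct and pairwise intersecting. First, I would show that for any two distinct $u,v\in S$ the states $u$ and $v$ are incomparable in $<$. Indeed, since the open intervals $(\inf I_u,\sup I_u)$ and $(\inf I_v,\sup I_v)$ share some string $\alpha\in\Sigma^*$, we get $\inf I_u \prec \alpha \prec \sup I_v$ and $\inf I_v \prec \alpha \prec \sup I_u$; hence neither $\sup I_u\preceq \inf I_v$ nor $\sup I_v\preceq \inf I_u$ holds, and by Lemma~\ref{def:width characterization intervals} neither $u<v$ nor $v<u$. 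Consequently, the chosen representatives form an antichain in $<$, so $|S| \le \width(<) = \width(\mathcal A)$. Taking the supremum over all such families of mutually intersecting intervals yields the claim.

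\textbf{Obstacle.} There is essentially no hard step here; the only mild subtlety is book-keeping around distinct states that may share the same infimum/supremum pair (so the same open interval appears only once in the set) and around degenerate intervals with $\inf I_u = \sup I_u$, which are empty and can participate in a ``mutually intersecting'' family only as singletons. Both cases are handled without issue: picking one representative state per interval already suffices, and a singleton antichain has size $1\le \width(\mathcal A)$.
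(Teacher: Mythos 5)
Your proof is correct and follows essentially the same route as the paper's: both derive from Lemma~\ref{def:width characterization intervals} that any two states with intersecting open intervals are incomparable under the width-realizing co-lex order, so a pairwise-intersecting family yields an antichain of size at most $\width(\mathcal A)$. Your extra book-keeping about repeated and degenerate intervals is a harmless refinement of the same argument.
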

\begin{proof}
Let $<$ be the co-lex order $<$ such that $\width(\mathcal A) = \width(<)$. 
For any two states $u,v\in Q$, if $(\inf I_u, \sup I_u) \cap (\inf I_v, \sup I_v) \ne \emptyset$, then by Lemma \ref{def:width characterization intervals}, $u$ and $v$ are not comparable by $<$, i.e.\ neither $u<v$ nor $v<u$ holds. 
This concludes the proof. Note that the opposite implication does not necessarily hold (in particular, $\width(\mathcal A)$ may be larger than the width of the above interval set). In particular, if $\inf I_u = \sup I_u$ then $(\inf I_u = \sup I_u) \cap (\inf I_v = \sup I_v) = \emptyset$ for any $v\neq u$ (because we are working with open-ended intervals). However, $u$ and $v$ may not be comparable according to $<$, e.g., if $\inf I_v \prec  
 \inf I_u = \sup I_u \prec \sup I_v$.
\end{proof}

Wheeler DFAs as introduced by Gagie et al.~\cite{gagie:tcs17:wheeler} are exactly those DFAs that have co-lex width 1. Wheeler languages \cite{alanko:iac21:wheeler} are regular languages admitting a Wheeler NFA (equivalently, DFA, as shown in \cite{alanko:iac21:wheeler}). 
As an example of low-width language families, Cotumaccio et al. \cite{CotumaccioJACM23} showed that
any regular language that can be obtained by the boolean combination (negation, union, intersection) of a constant number of Wheeler languages (for example, finite languages) has constant deterministic and nondeterministic widths.

As noted above, Gagie et al.~require \emph{input-consistency} of the automaton, i.e., $|\lambda(v)|=1$ for all $v\in Q\setminus\{s\}$. 
Importantly we remark that, from a language-theoretic perspective, restricting to input-consistent automata yields the same notion of deterministic width as the one defined in Definition \ref{def: co-lex width}; this quantity is therefore robust in this sense. This is true because (i) 
our new characterization of the deterministic width (Theorem \ref{thm: main: width - cycle}) depends solely on the minimum DFA for the language, and (ii) any DFA $\mathcal A$ can be easily turned into an equivalent input-consistent DFA $\mathcal A'$ (see also \cite{CotumaccioJACM23}).

We proceed with an example to illustrate the notion of co-lex width.

\begin{figure}[ht]
    \centering
    \subfloat[DFA $\mathcal{A}'$]{
  \begin{tikzpicture}[
    x=2.0cm,y=1.5cm,
    v/.style={circle, draw=black, minimum size=8mm},
    ]
    
    \node  (v0) at (-0.5,1) {};
    \node[v]  (v1) at (0,1) {$v_1$};
    \node[v, label=center:$v'_2$]  (v2a) at (1.3,1.3) {};
    \node[v, label=center:$v''_2$]  (v2b) at (0.7,0.7) {};
    \node[v]  (v3) at (2,1) {$v_3$};
    \node[v]  (v4) at (0,0) {$v_4$};
    \node[v]  (v5) at (1,0) {$v_5$};
    \node[v,accepting]  (v6) at (2,0) {$v_6$};

    \path [-stealth, thick]
        (v0) edge (v1)
        (v1) edge [above] node {0} (v2a)
        (v1) edge [left] node {1} (v4)
        (v2a) edge [above] node {1} (v3)
        (v2a) edge [right, pos=0.75] node {0} (v5)
        (v2b) edge [below, pos=0.75] node {1} (v3)
        (v2b) edge [left, pos=0.6] node {0} (v5)
        (v3) edge [bend left] node[right] {0} (v6)
        (v4) edge [above,left,pos=0.6] node {1} (v2b)
        (v5) edge [below] node {0} (v4)
        (v5) edge [below] node {1} (v6)
        (v6) edge [bend left] node[left] {1} (v3)
        (v6) edge [loop right] node {0} (v6)
        ;
    
  \end{tikzpicture}
}
\subfloat[Infima and suprema]{
  \footnotesize
  \begin{tabular}[b]{l|r|r}
    \hline
    $v$ & $\inf I_v$ & $\sup I_v$ \\\hline
    1 & $\epsilon$ & $\epsilon$ \\
    2$'$ & $0$ & $0$ \\
    2$''$ & $0001$ & $11$ \\
    3 & $01$ & $111$ \\
    4 & $000$ & $1$ \\
    5 & $00$ & $110$ \\
    6 & $\cdots0000$ & $1101$ \\
    \hline
  \end{tabular}
}

\subfloat[Interval representation]{
  \footnotesize
  \begin{tikzpicture}[
    x=0.8cm,y=-0.25cm,
    v/.style={circle, draw={black}, fill={white}, inner sep=1pt},vv/.style={circle, draw={black}, fill={white}, inner sep=1pt},
    w/.style={circle, fill={red}, inner sep=1pt},
    ]
    
    \node    (hl) at (0,0) {}; 
    \node    (hr) at (13,0) {}; 
    \node[label={[label distance=0cm,text depth=1em,rotate=-90] left: $\epsilon$}] at (1,0) {};
    \node[label={[label distance=0cm,text depth=1em,rotate=-90] left: $0$}] at (2,0) {};
    \node[label={[label distance=0cm,text depth=1em,rotate=-90] left: $00$}] at (3,0) {};
    \node[label={[label distance=0cm,text depth=1em,rotate=-90] left: $000$}] at (4,0) {};
    \node[label={[label distance=0cm,text depth=1em,rotate=-90] left: $\cdots0000$}] at (5,0) {};
    \node[label={[label distance=0cm,text depth=1em,rotate=-90] left: $110$}] at (6,0) {};
    \node[label={[label distance=0cm,text depth=1em,rotate=-90] left: $1$}] at (7,0) {};
    \node[label={[label distance=0cm,text depth=1em,rotate=-90] left: $01$}] at (8,0) {};
    \node[label={[label distance=0cm,text depth=1em,rotate=-90] left: $0001$}] at (9,0) {};
    \node[label={[label distance=0cm,text depth=1em,rotate=-90] left: $1101$}] at (10,0) {};
    \node[label={[label distance=0cm,text depth=1em,rotate=-90] left: $11$}] at (11,0) {};
    \node[label={[label distance=0cm,text depth=1em,rotate=-90] left: $111$}] at (12,0) {};
    
    \node[v,label=left:$v_1$] (v1l) at (1,1) {}; 
    \node[v] (v1r) at (1,1) {};

    \node[v,label=left:$v'_2$] (v2al) at (2,1) {};
    \node[v] (v2ar) at (2,1) {};

    \node[v,label=left:$v''_2$] (v2bl) at (9,1) {};
    \node[v] (v2br) at (11,1) {};

    \node[v,label=left:$v_3$] (v3l) at (8,2) {};
    \node[v] (v3r) at (12,2) {};

    \node[v,label=left:$v_4$] (v4l) at (4,1) {};
    \node[v] (v4r) at (7,1) {};
    \node[v,label=left:$v_5$] (v5l) at (3,2) {};
    \node[v] (v5r) at (6,2) {};
    \node[vv,label=left:$v_6$] (v6l) at (5,3) {};
    \node[v] (v6r) at (10,3) {};

    \path [thick]
        (hl) edge (hr)
        (v2al) edge (v2ar)
        (v2bl) edge (v2br)
        (v3l) edge (v3r)
        (v4l) edge (v4r)
        (v5l) edge (v5r)
        (v6l) edge (v6r)
        ;
  \end{tikzpicture}
}
    \caption{DFA $\mathcal{A}'$ with $\width(\mathcal{A}')$=3 and $\mathcal{L}(\mathcal{A}')=\mathcal{L}(\mathcal{A})$ where $\mathcal{A}$ is the DFA in Figure \ref{fig:dfawidth}, which is a certificate of $\width(\mathcal{L}(\mathcal{A}))<4$.}
    \label{fig:dfawidth:p3}
\end{figure}
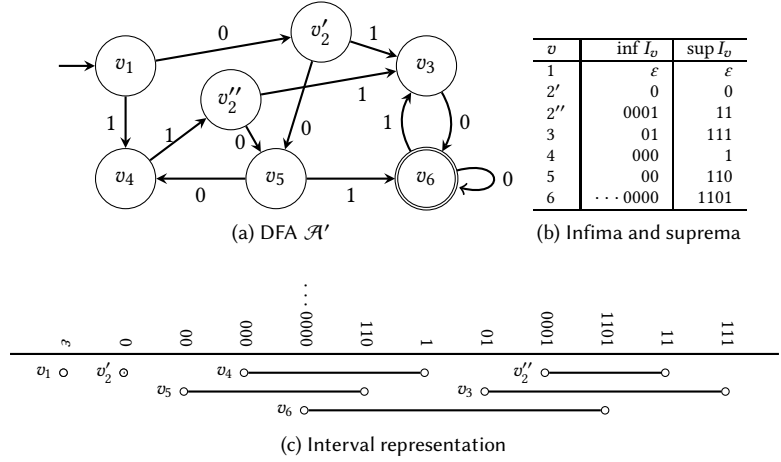

\begin{example}\label{example: language width}
    Recall the DFA $\mathcal{A}$ in Figure~\ref{fig:dfawidth}. From its interval representation, we can easily see that $\width(\mathcal{A})=4$ because the intervals $\mathcal I(v_2), \mathcal I(v_4), \mathcal I(v_5)$ and $\mathcal I(v_6)$ overlap. However, the deterministic width $\width^D(\mathcal{L}(\mathcal{A}))$ of its language $\mathcal{L}(\mathcal{A})$ can be smaller. In Figure~\ref{fig:dfawidth:p3}, we can see that there exists a DFA $\mathcal{A}'$ of smaller width (i.e., $\width(\mathcal{A}')=3$) that accepts the same language $\mathcal{L}(\mathcal{A})$; in other words, $\width^D(\mathcal{L}(\mathcal{A}))\le 3<4$. However, it is not obvious to determine if there exists any DFA of even smaller width that accepts $\mathcal{L}(\mathcal{A})$. In fact, it turns out that there does exist a DFA of width $2$ accepting $\mathcal{L}(\mathcal{A})$ as we show in Figure~\ref{fig:dfawidth:p2}. Is this the best we can do? It turns out that it is, i.e., one can show that no DFA of width 1 can accept $\mathcal{L}(\mathcal{A})$. While this is non-trivial to observe by just looking at the automaton $\mathcal A$ itself, in this paper we present a novel characterization of a language's width allowing us to decide whether $\width^D(\mathcal{L}(\mathcal{A}))< p$ 
    by only inspecting the smallest DFA $\mathcal A_{\min}$ equivalent to $\mathcal A$.
\end{example}

\subsection{Computational Problems and State of The Art}
The above definitions and the considerations made in Example \ref{example: language width} yield the following natural computational problem that is at the core of our paper:

\begin{cproblem}{DfaDetWidth}
    Input: DFA \(\mathcal A\) with $n$ states and $m$ transitions and integer $p \ge 2$.
    
    Output: decide if $\width^D(\mathcal L(\mathcal A)) < p$.
\end{cproblem}\medskip

\begin{figure}[ht!]
    \centering
\subfloat[DFA $\mathcal{A}''$]{
  \begin{tikzpicture}[
    x=2.0cm,y=3.5cm,
    v/.style={circle, draw=black, minimum size=8mm},
    ]
    
    \node  (v0) at (-0.5,1) {};
    \node[v]  (v1) at (0,1) {$v_1$};
    \node[v, label=center:$v'_2$]  (v2a) at (2.3,1.3) {};
    \node[v, label=center:$v'''_2$]  (v2c) at (2.3,1.0) {};
    \node[v, label=center:$v''_2$]  (v2b) at (2.3,0.7) {};
    
    \node[v, label=center:$v_3$]  (v3) at (4.0,1.0) {};
    
    \node[v, label=center:$v'''_4$]  (v4c) at (0.0,0.6) {};
    \node[v, label=center:$v'_4$]  (v4a) at (0.5,0.3) {};
    \node[v, label=center:$v''_4$]  (v4b) at (1.0,0.0) {};
    
    \node[v, label=center:$v''_5$]  (v5b) at (2.3,0.0) {};
    \node[v, label=center:$v'_5$]  (v5a) at (2.3,-0.3) {};
    
    \node[v,accepting, label=center:$v''_6$]  (v6c) at (3.4,-0.3) {};
    \node[v,accepting, label=center:$v'''_6$]  (v6d) at (4,0) {};
    \node[v,accepting, label=center:$v'_6$]  (v6a) at (4.6,0.3) {};

    \path [-stealth, thick]
        (v0) edge (v1)
        (v1) edge [left] node {1} (v4c)
        (v1) edge [bend left, above] node {0} (v2a)

        (v2a) edge [bend left=45, right, pos=0.4] node {0} (v5a)
        (v2b) edge [bend left=10, right] node {0} (v5b)
        (v2c) edge [bend left=45, right] node {0} (v5b)

        (v5a) edge [bend left=60, below] node {0} (v4a)
        (v5b) edge [bend left=10, below] node {0} (v4b)
        
        (v4a) edge [bend left=45, above] node {1} (v2b)
        (v4b) edge [bend left,above] node {1} (v2b)
        (v4c) edge [bend left, above] node {1} (v2c)

        (v2a) edge [bend left=45, above] node {1} (v3)
        (v2b) edge [bend left, above, pos=0.75] node {1} (v3)
        (v2c) edge [bend left=45, above, pos=0.75] node {1} (v3)

        (v5a) edge [bend right=45, above] node {1} (v6c)
        (v5b) edge [bend right=10, above, pos=0.85] node {1} (v6d)

        (v6a) edge [loop right] node {0} (v6a)
        (v6c) edge [bend right=40, below] node {0} (v6a)
        (v6d) edge [above] node {0} (v6a)

        (v6a) edge [bend left=10, left] node {1} (v3)
        (v6c) edge [bend left=40, left, pos=0.75] node {1} (v3)
        (v6d) edge [bend left=10, left, pos=0.75] node {1} (v3)

        (v3) edge [bend left=45, right, pos=0.5] node {0} (v6a)
        ;
    
  \end{tikzpicture}
}

\subfloat[Interval representation]{
  \footnotesize
  \begin{tikzpicture}[
    x=0.65cm,y=-0.25cm,
    v/.style={circle, draw={black}, fill={white}, inner sep=1pt},vv/.style={circle, draw={black}, fill={white}, inner sep=1pt},
    w/.style={circle, fill={red}, inner sep=1pt},
    ]
    
    \node    (hl) at (0.4,0) {}; 
    \node    (hr) at (19.6,0) {}; 
    \node[label={[label distance=0cm,text depth=1em,rotate=-90] left:     $\epsilon$}] at ( 1,0) {};
    \node[label={[label distance=0cm,text depth=1em,rotate=-90] left:            $0$}] at ( 2,0) {};
    \node[label={[label distance=0cm,text depth=1em,rotate=-90] left:           $00$}] at ( 3,0) {};
    \node[label={[label distance=0cm,text depth=1em,rotate=-90] left:          $000$}] at ( 4,0) {};
    \node[label={[label distance=0cm,text depth=1em,rotate=-90] left:   $\cdots0000$}] at ( 5,0) {};
    \node[label={[label distance=0cm,text depth=1em,rotate=-90] left:       $000100$}] at ( 6,0) {};
    \node[label={[label distance=0cm,text depth=1em,rotate=-90] left:         $1100$}] at ( 7,0) {};
    \node[label={[label distance=0cm,text depth=1em,rotate=-90] left:        $00010$}] at ( 8,0) {};
    \node[label={[label distance=0cm,text depth=1em,rotate=-90] left:          $110$}] at ( 9,0) {};
    \node[label={[label distance=0cm,text depth=1em,rotate=-90] left:         $1110$}] at (10,0) {};
    \node[label={[label distance=0cm,text depth=1em,rotate=-90] left:            $1$}] at (11,0) {};
    \node[label={[label distance=0cm,text depth=1em,rotate=-90] left:           $01$}] at (12,0) {};
    \node[label={[label distance=0cm,text depth=1em,rotate=-90] left:          $001$}] at (13,0) {};
    \node[label={[label distance=0cm,text depth=1em,rotate=-90] left:         $0001$}] at (14,0) {};
    \node[label={[label distance=0cm,text depth=1em,rotate=-90] left:        $11001$}] at (15,0) {};
    \node[label={[label distance=0cm,text depth=1em,rotate=-90] left:       $000101$}] at (16,0) {};
    \node[label={[label distance=0cm,text depth=1em,rotate=-90] left:         $1101$}] at (17,0) {};
    \node[label={[label distance=0cm,text depth=1em,rotate=-90] left:           $11$}] at (18,0) {};
    \node[label={[label distance=0cm,text depth=1em,rotate=-90] left:          $111$}] at (19,0) {};

    \node[v,label=left:$v_1$] (v1l) at (1,1) {}; 
    \node[v] (v1r) at (1,1) {};

    \node[v,label=left:$v'_2$] (v2al) at (2,1) {};
    \node[v] (v2ar) at (2,1) {};-
    \node[v,label=left:$v''_2$] (v2bl) at (14,1) {};
    \node[v] (v2br) at (15,1) {};-
    \node[v,label=left:$v'''_2$] (v2cl) at (18,1) {};
    \node[v] (v2cr) at (18,1) {};

    \node[v,label=left:$v_3$] (v3l) at (12,2) {};
    \node[v] (v3r) at (19,2) {};

    \node[v,label=left:$v'_4$] (v4al) at (4,1) {};
    \node[v] (v4ar) at (4,1) {};
    \node[v,label=left:$v''_4$] (v4bl) at (6,1) {};
    \node[v] (v4br) at (7,1) {};
    \node[v,label=left:$v'''_4$] (v4cl) at (11,1) {};
    \node[v] (v4cr) at (11,1) {};
    
    \node[v,label=left:$v'_5$] (v5al) at (3,1) {};
    \node[v] (v5ar) at (3,1) {};
    \node[v,label=left:$v''_5$] (v5bl) at (8,1) {};
    \node[v] (v5br) at (9,1) {};
    
    \node[vv,label=left:$v'_6$] (v6al) at (5,2) {};
    \node[v] (v6ar) at (10,2) {};
    \node[v,label=left:$v''_6$] (v6bl) at (13,1) {};
    \node[v] (v6br) at (13,1) {};
    \node[v,label=left:$v'''_6$] (v6cl) at (16,1) {};
    \node[v] (v6cr) at (17,1) {};

    \path [thick]
        (hl) edge (hr)
        (v2al) edge (v2ar)
        (v2bl) edge (v2br)
        (v2cl) edge (v2cr)
        (v3l) edge (v3r)
        (v4al) edge (v4ar)
        (v4bl) edge (v4br)
        (v4cl) edge (v4cr)
        (v5al) edge (v5ar)
        (v5bl) edge (v5br)
        (v6al) edge (v6ar)
        (v6bl) edge (v6br)
        (v6cl) edge (v6cr)
        ;
  \end{tikzpicture}
}
    \caption{DFA $\mathcal{A}''$ with $\width(\mathcal{A}'')$=2 and $\mathcal{L}(\mathcal{A}'')=\mathcal{L}(\mathcal{A})$ where $\mathcal{A}$ is as in Figure \ref{fig:dfawidth}.}
    \label{fig:dfawidth:p2}
\end{figure}
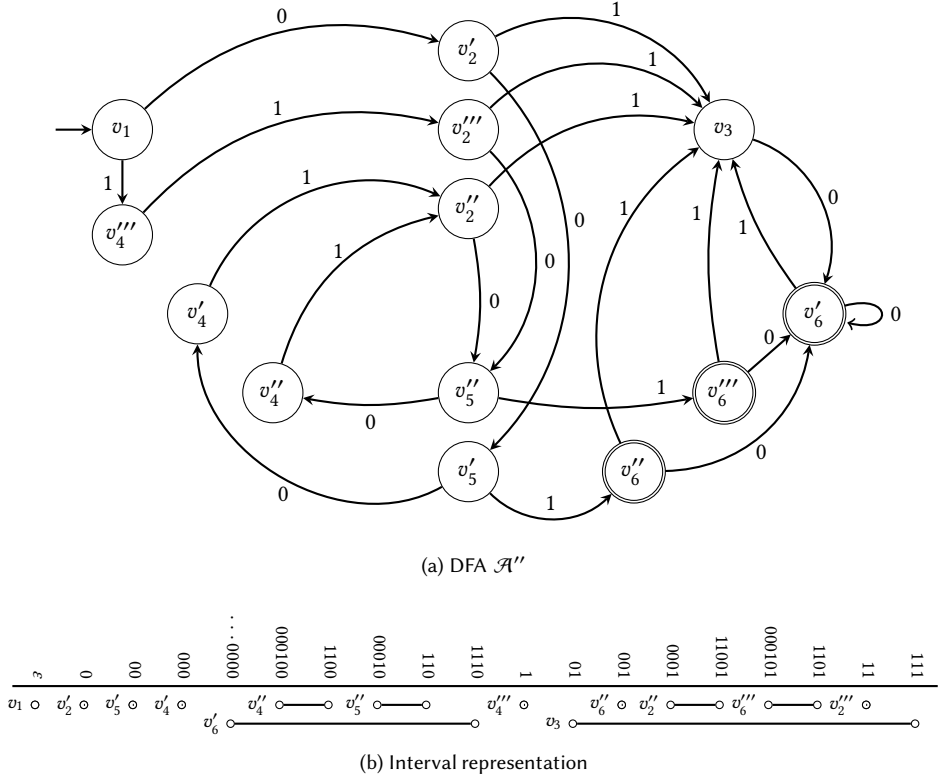

Five other related computational problems have been considered in the literature, depending on whether
(1) the input is a DFA or an NFA, and (2) the quantity to be computed is the co-lex width of the automaton, the deterministic co-lex width, or the non-deterministic co-lex width of the language recognized by the automaton.
In the vein of the terminology introduced for the \textsc{DfaDetWidth} problem we can refer to the additional five problems as: 

\begin{itemize}
    \item \textsc{DfaWidth}: given a DFA $\mathcal A$ and an integer $p\ge 2$, decide if $\width(\mathcal A) < p$.     
    \item \textsc{DfaNDetWidth}: given a DFA $\mathcal A$ and an integer $p\ge 2$, decide if \\$\width^N(\mathcal L(\mathcal A)) < p$.
    \item  \textsc{NfaWidth}: given an NFA $\mathcal A$ and an integer $p\ge 2$, decide if $\width(\mathcal A) < p$.
    \item \textsc{NfaDetWidth}: given an NFA $\mathcal A$ and an integer $p\ge 2$, decide if \\$\width^D(\mathcal L(\mathcal A)) < p$. 
    \item \textsc{NfaNDetWidth}: given an NFA $\mathcal A$ and an integer $p\ge 2$, decide if\\ $\width^N(\mathcal L(\mathcal A)) < p$.
\end{itemize}

As a matter of fact, a large discrepancy in the computational complexity of those problems occurs depending on whether the input is a DFA or an NFA, see Table \ref{tab:width complexities} (updated with our new results with respect to Table 1 of \cite{CotumaccioJACM23}).

\begin{table}[ht!]
\footnotesize
\renewcommand{\arraystretch}{2}
\centering
\begin{tabular}{|c|c|c|}\hline
\backslashbox{{output}\hspace{2pt}}{{\hspace{-5pt}input}}
&\makebox[12em]{{ $\mathcal A$ : DFA}}&\makebox[12em]{{ $\mathcal A$ : NFA}}\\\hline\hline
{$\text{width}(\mathcal A) \stackrel{?}{<} p $} & $O(n^2)$ \cite{cotumaccio2023prefix}, $O(m\log n)$ \cite{BeckerCCKKOP23} &NP-COMPLETE \cite[Thm. 2]{gibney2022complexity} \\\hline
{\hspace{-3pt}${\text{width}^D}(\mathcal L(\mathcal A)) \stackrel{?}{<} p$} \hspace{-3pt} & $O(m^p)$ [Thm.~\ref{thm:simple algorithm}], $\omega(m^{p-\epsilon})$ [Thm.~\ref{thm: hardness}]
& PSPACE-HARD \cite[Thm. 10]{DAgostinoMP23}\\\hline
{\hspace{-3pt}${\text{width}^N}(\mathcal L(\mathcal A))\stackrel{?}{<} 2$ \hspace{-3pt}} &  $O(m^{2})$ [Thm.~\ref{thm:simple algorithm}], $\omega(m^{2-\epsilon})$ [Thm.~\ref{thm: hardness}]   & PSPACE-HARD \cite[Thm. 10]{DAgostinoMP23}\\\hline
\end{tabular}
\vspace{1em}
\caption{Known lower and upper bounds for the six problems considered in the literature involving the computation of a colex width. In row-major order: \textsc{DfaWidth}, \textsc{NfaWidth}, \textsc{DfaDetWidth}, \textsc{NfaDetWidth}, \textsc{DfaNDetWidth}, \textsc{NfaNDetWidth}. $m$ and $n$ are the number of transitions and states of the input automaton,  $2 \le p \leq n$ is an integer (part of the problem instance), and $\epsilon>0$ is any constant. The lower bounds of Theorem~\ref{thm: hardness} are conditional on SETH.
 }\label{tab:width complexities}
\end{table}

The problem \textsc{NfaWidth} is already NP-complete as it includes the NP-hard problem of recognizing Wheeler automata~\cite{gibney2022complexity} (i.e.\ $\text{width}(\mathcal A) < 2$), and a colex partial order of width $<p$ is a polynomial certificate for the problem. 
This complexity increases for the problems \textsc{NfaDetWidth} and \textsc{NfaNDetWidth}, which are PSPACE-hard~\cite{DAgostinoMP23}. 

\textsc{DfaWidth}, instead, turns out to be polynomial-time solvable. Cotumaccio and Prezza~\cite{cotumaccio:soda21:psortable} and Cotumaccio et al.~\cite{CotumaccioJACM23} were the first to give polynomial algorithms solving this problem in $O(m^2 + n^{5/2})$ and $O(m^2)$ time with high probability, respectively. 
This was later improved by
Kim et al.~\cite{KimOP23}, who gave two algorithms running in $O(m n)$ and $O(n^2 \log n)$ time, by Becker et al.~\cite{BeckerCCKKOP23}, who showed how to solve the problem in near-linear time $O(m\log n)$ via partition refinement, and by Cotumaccio \cite{cotumaccio2023prefix}, who gave a recursive algorithm running in $O(n^2)$ time\footnote{In those works, it is assumed without loss of generality that all the edges entering in the same state bear the same label; this implies $m \leq n^2$, so the additive term $m$ gets absorbed by $n^2$.} (thus improving over the previous results in the dense case).

\textsc{DfaDetWidth}, the problem at the core of our paper,  has been studied for the first time by 
Alanko et al.~\cite{alanko:iac21:wheeler} in the special case $\width^D(\mathcal L) \stackrel{?}{<} 2$ (that is, recognizing Wheeler languages from an accepting DFA), for which a polynomial-time algorithm was provided. Later, Cotumaccio et al.~\cite{CotumaccioJACM23}, gave a dynamic programming algorithm solving the general problem in $m^{O(p)}$ time\footnote{While they only claim the bound $m^{O(p)}$, a more careful analysis shows that the running time of their algorithm is at least $\Omega(m^{5p})$ on sparse DFAs.}. No hardness results have been known for this problem prior to our work. 

For \textsc{DfaNDetWidth} little is known: In the special case where $\width^N(\mathcal L) = 1$ the two notions of deterministic and non-deterministic width coincide~\cite{CotumaccioJACM23}: $\width^N(\mathcal L) = \width^D(\mathcal L) = 1$. Hence, recognizing whether the non-determinisitic co-lex width of the language recognized by a DFA is strictly smaller than $2$ (i.e.\ equal to 1) can be done in polynomial time. 
Apart from this and the PSPACE-hardness of \textsc{NfaNDetWidth}, nothing else (not even computability) is known for the two problems \textsc{DfaNDetWidth} and \textsc{NfaNDetWidth}  for when the nondeterministic width is strictly larger than 1. The reason why a simple NFA-enumeration strategy does not work, is that no bound is known on the number of states of an NFA realizing the nondeterministic width of the language (as a function of the input DFA/NFA's size).

\subsection{Entanglement of a Regular Language}
Before describing our contribution, we need one additional important tool.
Cotumaccio et al.~\cite{CotumaccioJACM23} showed that the deterministic co-lex width of a language can be characterized using the notion of \emph{entanglement} of a language:

\begin{definition}[Entanglement, see Def.~4.7 in~\cite{CotumaccioJACM23}]
    Let $\mathcal A=(Q, \Sigma, \delta, s, F)$ be a DFA. A set of states $S\subseteq Q$ is \emph{entangled} if there exists a monotone (in colex order) sequence of strings $(\alpha_i)_{i\in \NN}$ such that for all $u\in S$, it holds that $u=\delta(s, \alpha_i)$ for infinitely many $i\in\NN$. The entanglement of $\mathcal A$, denoted with $\ent(\mathcal A)$, is defined as the maximum cardinality of a set of entangled states, i.e., 
    \[
        \ent(\mathcal A):=\max\{|S|: S\subseteq Q \text{ and $S$ is entangled}\}.
    \]
    \label{def: entangle}
\end{definition}

Cotumaccio et al.~\cite{CotumaccioJACM23} prove the following theorem.

\begin{theorem}[Thm.~4.21 in~\cite{CotumaccioJACM23}]
    Let $\mathcal A = (Q, \Sigma, \delta, s, F)$ be a DFA recognizing a language $\mathcal L$ and let $\mathcal A_{\min}$ be the minimum DFA recognizing $\mathcal L$. Then 
    \[
        \width^D(\mathcal L) = \ent(\mathcal A_{\min}).
    \]
    \label{thm: widthD L equals ent Amin}
\end{theorem}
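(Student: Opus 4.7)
The plan is to sandwich $\width^D(\mathcal L)$ by establishing the chain
\[
\ent(\mathcal A_{\min}) \;\le\; \width^D(\mathcal L) \;\le\; \width(\mathcal A_{\min}) \;\le\; \ent(\mathcal A_{\min}),
\]
in which the middle inequality is immediate because $\mathcal A_{\min}$ is itself a DFA accepting $\mathcal L$. The two outer inequalities are the nontrivial directions.

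For the lower bound $\ent(\mathcal A_{\min}) \le \width^D(\mathcal L)$, I fix any DFA $\mathcal A = (Q, \Sigma, \delta, s, F)$ for $\mathcal L$ together with a co-lex order $<$ realizing $\width(\mathcal A)$, take a maximum entangled set $S = \{u_1, \ldots, u_k\} \subseteq Q_{\min}$ witnessed by a strictly monotone sequence $(\alpha_i)_{i \in \NN}$, and lift it to a size-$k$ antichain in $<$. Using the canonical Myhill--Nerode morphism $\phi\colon Q \to Q_{\min}$ and finiteness of $Q$, a pigeonhole argument selects, for each $j$, a state $w_j \in \phi^{-1}(u_j)$ that is reached by $\alpha_i$ for infinitely many $i$ with $\delta_{\min}(s_{\min}, \alpha_i) = u_j$; distinctness of the $u_j$'s transfers to distinctness of the $w_j$'s. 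To show the $w_j$'s are pairwise incomparable, suppose toward contradiction $w_j < w_{j'}$. A routine induction on string length using the two axioms of Definition~\ref{def: colex order} shows that, in \emph{any} co-lex order, $u < v$ implies $\alpha \preceq \beta$ for every $\alpha \in I_u$ and $\beta \in I_{v}$. Picking $i > i'$ with $\delta(s, \alpha_i) = w_j$ and $\delta(s, \alpha_{i'}) = w_{j'}$ then forces $\alpha_i \preceq \alpha_{i'}$, contradicting strict monotonicity of $(\alpha_i)$.

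For the upper bound $\width(\mathcal A_{\min}) \le \ent(\mathcal A_{\min})$, I work with the canonical candidate relation $<^\ast$ on $Q_{\min}$ defined by $u <^\ast v \iff \sup I_u \preceq \inf I_v$. The first step is to check $<^\ast$ is a valid co-lex order: axiom~(1) follows because the last character of $\sup I_u$ equals $\max \lambda(u)$ and the last of $\inf I_v$ equals $\min \lambda(v)$; axiom~(2) follows by peeling off the shared last character from both endpoints and using minimality of $\mathcal A_{\min}$ to identify the predecessors' inf/sup strings. Once $<^\ast$ is certified as a co-lex order, its antichains are exactly the families of states whose open intervals $\mathcal I(u_j)$ pairwise overlap. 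I then show this width is at most $\ent(\mathcal A_{\min})$ by manufacturing an entangled set from such a family: the pairwise overlaps force the existence of a common point $\gamma \in \Sigma^* \cup \Sigma^\omega$ lying in the closure of every $\mathcal I(u_j)$, and a diagonal construction approaching $\gamma$ monotonically through strings drawn alternately from $I_{u_1}, \ldots, I_{u_k}$ produces a strictly monotone sequence hitting each $u_j$ infinitely often (uniqueness of the state reached by each string, granted by minimality of $\mathcal A_{\min}$, ensures this).

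The main obstacle lies in the last step of the upper bound: synthesizing one single strictly monotone sequence whose subsequences are each cofinal in a different $u_j$. Pairwise overlaps only give local witnesses (one shared string per pair of intervals), whereas the definition of entanglement demands a global witness that is simultaneously cofinal for all $k$ states at once. Making this work requires a careful compactness argument on $\Sigma^* \cup \Sigma^\omega$ under the co-lex order together with the distinguishability of states in $\mathcal A_{\min}$; the verification that the \emph{same} $\gamma$ is approached from within every $I_{u_j}$ (rather than merely from within overlapping intervals in an abstract sense) is the delicate combinatorial core of the argument.
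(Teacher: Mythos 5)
First, note that the paper does not prove this statement at all: it is imported verbatim as Theorem~4.21 of~\cite{CotumaccioJACM23}, so there is no in-paper proof to compare against. Judged on its own merits, your lower-bound half is essentially the standard argument and looks sound: lifting a maximal entangled set of $\mathcal A_{\min}$ through the Myhill--Nerode morphism by pigeonhole, and using the (correct, routinely provable) fact that $u<v$ in any co-lex order forces $I_u \preceq I_{v}$ elementwise, does yield an antichain of the same size in every DFA for $\mathcal L$, hence $\ent(\mathcal A_{\min}) \le \width^D(\mathcal L)$.

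The upper-bound half, however, contains a fatal gap: the inequality $\width(\mathcal A_{\min}) \le \ent(\mathcal A_{\min})$ that your sandwich relies on is simply false, and the paper itself contains the counterexample. The minimum DFA of Figure~\ref{fig:dfawidth} has $\width(\mathcal A_{\min})=4$, while Example~\ref{example: language width} and Figure~\ref{fig:dfawidth:p2} show $\width^D(\mathcal L)=2$; by the very theorem you are proving, $\ent(\mathcal A_{\min})=2<4=\width(\mathcal A_{\min})$. The step that fails is the claim that a family of states with pairwise-overlapping intervals can always be turned into an entangled set: pairwise overlap of the $\mathcal I(u_j)$ is a strictly weaker condition than entanglement (for instance, two states with finite, interleaved $I_u$'s have overlapping intervals but can never both be hit infinitely often by one monotone sequence; more generally, the whole content of Theorem~\ref{thm: main: width - cycle} is that entanglement additionally requires the states to sit on cycles synchronized by a common label $\gamma$). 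The ``delicate compactness argument'' you defer to in your last paragraph is therefore not a technical hurdle but an attempt to prove a false statement. The genuine difficulty of the direction $\width^D(\mathcal L)\le\ent(\mathcal A_{\min})$ is that the witnessing DFA of width $\ent(\mathcal A_{\min})$ is in general \emph{not} $\mathcal A_{\min}$ (and may be exponentially larger, cf.~\cite{manzini_et_al:LIPIcs.CPM.2024.23}); Cotumaccio et al.\ prove it by explicitly constructing such an equivalent automaton (their ``Hasse automaton''), and no argument confined to $\mathcal A_{\min}$ itself can work.
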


\section{A new Characterization of the Deterministic Co-Lex Width} \label{sec: main characterization}

In this section, we present a new \emph{topological} characterization of the deterministic co-lex width of regular languages. Before stating the complete characterization for a regular language being of width at least $p$ in Theorem~\ref{thm: main: width - cycle} below, we start with a very simple result that gives a sufficient condition: the existence of $2p$ cycles labeled with the same string in the smallest DFA $\mathcal A_{\min}$ for the language.

\begin{lemma}
    Let $\mathcal A = (Q, \Sigma, \delta, s, F)$ be a DFA recognizing a language $\mathcal L$. If
    there exists a non-empty string $\gamma$ such that
    the minimum DFA $\mathcal A_{\min}$ accepting $\mathcal L$ contains $2p$ pairwise distinct nodes $u_i$, $i \in [2p]$, with $\delta(u_i, \gamma)=u_i$ for all $i\in [2p]$, then $\width^D(\mathcal L)\ge p$.
\end{lemma}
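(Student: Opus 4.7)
The plan is to invoke Theorem~\ref{thm: widthD L equals ent Amin} and exhibit an entangled subset of $\mathcal A_{\min}$ of size at least $p$ drawn from $\{u_1,\dots,u_{2p}\}$. For each $u_i$, fix any string $\beta_i\in I_{u_i}$, which exists because $\mathcal A_{\min}$ is accessible. Since $\delta(u_i,\gamma)=u_i$, every $\beta_i\gamma^k$ with $k\in\NN$ also lies in $I_{u_i}$. The key technical observation I would prove is that each sequence $(\beta_i\gamma^k)_{k\in\NN}$ is strictly monotone in co-lex order and converges to the left-infinite string $\gamma^\omega$.

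To verify monotonicity, I would compare $\beta_i\gamma^k$ with $\beta_i\gamma^{k'}$ for $k<k'$: both strings agree on their rightmost $k|\gamma|$ positions, and beyond those positions $\beta_i\gamma^{k'}$ continues matching the $\gamma^\omega$-periodic pattern while $\beta_i\gamma^k$ starts reading $\beta_i$ from right to left. The co-lex comparison is therefore governed by the first position from the right at which $\beta_i$ disagrees with the periodic left-extension of $\gamma$, a position that does not depend on $k$ or $k'$. This simultaneously shows that the sequence is strictly increasing iff $\beta_i\prec\gamma^\omega$ and strictly decreasing iff $\beta_i\succ\gamma^\omega$ (equality is impossible because $\beta_i$ is finite). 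Convergence to $\gamma^\omega$ follows by noting that, for any finite $\tau$ with $\tau\neq\gamma^\omega$, the first differing position between $\tau$ and $\gamma^\omega$ (from the right) eventually lies within the shared $\gamma^k$-suffix, so $\beta_i\gamma^k$ ends up on the same side of $\gamma^\omega$ as $\tau$.

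Next, I would apply pigeonhole to extract $p$ states sharing a common direction. Since $\gamma^\omega$ is the limit of the sequence $\beta_i\gamma^k\in I_{u_i}$ and is squeezed by $\inf I_{u_i}\preceq\beta_i\gamma^k\preceq\sup I_{u_i}$, a standard order-limit argument gives $\gamma^\omega\in[\inf I_{u_i},\sup I_{u_i}]$. For each $u_i$ we can choose $\beta_i$ with $\beta_i\prec\gamma^\omega$ iff $\inf I_{u_i}\prec\gamma^\omega$, and $\beta_i\succ\gamma^\omega$ iff $\sup I_{u_i}\succ\gamma^\omega$; because $I_{u_i}$ contains only finite strings, these two conditions cannot both fail. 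Letting $n_\prec,n_\succ$ count the states admitting the increasing, resp.\ decreasing, choice, we get $n_\prec+n_\succ\ge 2p$ and so $\max(n_\prec,n_\succ)\ge p$. Without loss of generality, we obtain a set $S$ of size $\ge p$ whose chosen sequences all increase toward $\gamma^\omega$.

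Finally, I would merge the sequences into a single monotone-increasing one. Let $T=\{\beta_u\gamma^k : u\in S,\, k\in\NN\}$. For each fixed $u\in S$ and each $\tau\in T$, only finitely many elements of $(\beta_u\gamma^k)_k$ are $\preceq\tau$ (the sequence strictly increases toward $\gamma^\omega$), so only finitely many elements of $T$ are $\preceq\tau$ overall. Hence $T$ admits an enumeration $t_1\prec t_2\prec\cdots$ in strictly increasing co-lex order, in which each $u\in S$ appears infinitely often because its contribution to $T$ is infinite. This witnesses that $S$ is entangled in $\mathcal A_{\min}$, giving $\ent(\mathcal A_{\min})\ge |S|\ge p$, and Theorem~\ref{thm: widthD L equals ent Amin} delivers $\width^D(\mathcal L)\ge p$. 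The principal obstacle is the careful verification of monotonicity and convergence of $(\beta_i\gamma^k)_k$ in the extended co-lex order on $\Sigma^*\cup\Sigma^\omega$; once established, the pigeonhole and the merge become routine.
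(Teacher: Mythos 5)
Your proof is correct and follows essentially the same strategy as the paper's: pick a witness string in each $I_{u_i}$, split the $2p$ states by pigeonhole into those whose witnesses lie below versus above the limit of the pumped cycle, and turn the larger side into an entangled set via Theorem~\ref{thm: widthD L equals ent Amin}. The only (cosmetic) difference is that you pivot on $\gamma^\omega$ and merge the pumped sequences by sorting their union, whereas the paper pivots on a sufficiently lengthened finite $\gamma$ and interleaves the sequences block by block; your variant is slightly cleaner in that it avoids the "make $\gamma$ long enough not to be a suffix of any $\alpha_i$" step.
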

\begin{proof}
    First of all, notice that there exists $\alpha_i\in I_{u_i}$ for each $i\in [2p]$ as otherwise, if $u_i$ was not reached from the source, the state $u_i$ could be eliminated from $\mathcal A_{\min}$, contradicting minimality. Now, w.l.o.g., we can assume that, for some $0 \leq \ell \leq 2p$:
    \[
        \alpha_1 \preceq \ldots \preceq \alpha_\ell \preceq \gamma \preceq \alpha_{\ell + 1} \preceq \ldots \preceq \alpha_{2p}.
    \]
    Note that by repeating the cycle a sufficiently large number of times we can assume $\gamma$ to be sufficiently long such that it is not a suffix of any of the $\alpha_i$. 
    Then, it follows that, for every $j\in \NN_{\ge 0}$, 
    \[
        \alpha_{\ell + 1} \gamma^{j + 1} 
        \preceq \ldots \preceq
        \alpha_{2p} \gamma^{j + 1}
        \preceq 
        \alpha_{\ell + 1} \gamma^{j} 
        \preceq \ldots \preceq
        \alpha_{2p} \gamma^{j}
    \]
    and, on the other hand, 
    \[
        \alpha_{1} \gamma^{j} 
        \preceq \ldots \preceq
        \alpha_{\ell} \gamma^{j}
        \preceq 
        \alpha_{1} \gamma^{j + 1} 
        \preceq \ldots \preceq
        \alpha_{\ell} \gamma^{j + 1}.
    \]
    It follows that both the set of states $\{u_1, \ldots, u_\ell\}$ as well as the set of states $\{u_{\ell + 1}, \ldots, u_{2p}\}$ is entangled. Hence, $\width^D(\mathcal L) = \ent(\mathcal A_{\min}) \ge \max\{\ell, 2p - \ell\} \ge p$, using Theorem~\ref{thm: widthD L equals ent Amin}.
\end{proof}
We stress that this lemma only gives a sufficient condition for the width of the language being at least $p$. The main contribution of this section instead is to give a complete characterization, i.e., a sufficient and necessary condition. It turns out that also this condition is based on cycles, however $p$ cycles are sufficient (rather than $2p$ as before) if we in addition require the intervals $\mathcal I(u_i)$ of the nodes $u_i$ to intersect pairwise. 
Our new complete characterization of the co-lex width of a regular language is given in the following theorem.  
\begin{theorem}\label{thm: main: width - cycle}
    Let $\mathcal A = (Q, \Sigma, \delta, s, F)$ be a DFA recognizing a language $\mathcal L$. Then, $\width^D(\mathcal L)\ge p$ if and only if 
    there exists a non-empty string $\gamma$ such that
    the minimum DFA $\mathcal A_{\min}$ accepting $\mathcal L$ contains $p$ pairwise distinct nodes $u_i$, $i \in [p]$, with $\delta(u_i, \gamma)=u_i$ for all $i\in [p]$ and $\mathcal I(u_i) \cap \mathcal I(u_j) \neq \emptyset$ for all $i,j\in [p]$.
\end{theorem}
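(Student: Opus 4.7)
The proof splits into the two implications of the iff. Throughout, I invoke Theorem~\ref{thm: widthD L equals ent Amin}, which translates $\width^D(\mathcal L)\ge p$ into the existence of an entangled set of size $p$ in $\mathcal A_{\min}$.

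For the $(\Leftarrow)$ direction, I will show that $\{u_1,\dots,u_p\}$ is itself entangled. A key structural observation---derived from the fact that each $u_i$ has $\gamma$-self-loop and hence contains the orbit $\alpha\gamma^n$ whenever $\alpha\in I_{u_i}$---is that the co-lex comparison of $\alpha\gamma^n$ with $\gamma^\omega$ reduces (after stripping the common $\gamma^n$ suffix) to that of $\alpha$ with $\gamma^\omega$, yielding the trichotomy: $\sup I_{u_i}=\gamma^\omega$ (if $I_{u_i}\prec\gamma^\omega$ entirely), $\inf I_{u_i}=\gamma^\omega$ (the symmetric case), or $\gamma^\omega\in\mathcal I(u_i)$. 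A pair with $I_{u_i}$ entirely $\prec\gamma^\omega$ and $I_{u_j}$ entirely $\succ\gamma^\omega$ would make $\mathcal I(u_i)=\{\alpha\in\Sigma^*:\inf I_{u_i}\prec\alpha\prec\gamma^\omega\}$ and $\mathcal I(u_j)=\{\alpha\in\Sigma^*:\gamma^\omega\prec\alpha\prec\sup I_{u_j}\}$, hence disjoint---contradicting pairwise overlap. So, WLOG by symmetry, every $u_i$ admits some $\alpha_i\in I_{u_i}$ with $\alpha_i\prec\gamma^\omega$; determinism of $\mathcal A_{\min}$ forces these $\alpha_i$ to be pairwise distinct. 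Two further facts follow from $\alpha_i\prec\gamma^\omega$: $(\alpha_i\gamma^n)_n$ is strictly increasing, and for every $\beta\prec\gamma^\omega$ some $\alpha_i\gamma^n\succ\beta$.

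Relabel so that $\alpha_1\prec\dots\prec\alpha_p$, and pick integers $0=n_0<n_1<\dots$ with each gap $n_{k+1}-n_k$ large enough to ensure $\alpha_1\gamma^{n_{k+1}-n_k}\succ\alpha_p$ (possible by the second fact). The sequence
\[
\alpha_1\gamma^{n_0},\,\alpha_2\gamma^{n_0},\,\dots,\,\alpha_p\gamma^{n_0},\;\alpha_1\gamma^{n_1},\,\dots,\,\alpha_p\gamma^{n_1},\;\alpha_1\gamma^{n_2},\,\dots
\]
is then strictly increasing: within each block, stripping the common $\gamma^{n_k}$ suffix reduces the comparison to that of the sorted $\alpha_i$'s; across the boundary between blocks $k$ and $k+1$, stripping $\gamma^{n_k}$ reduces the comparison $\alpha_p\gamma^{n_k}$ vs $\alpha_1\gamma^{n_{k+1}}$ to $\alpha_p\prec\alpha_1\gamma^{n_{k+1}-n_k}$, which holds by the gap condition. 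Each $u_i$ appears in every block and is therefore visited infinitely often, so $\{u_1,\dots,u_p\}$ is entangled and hence $\width^D(\mathcal L)=\ent(\mathcal A_{\min})\ge p$.

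For the $(\Rightarrow)$ direction, Theorem~\ref{thm: widthD L equals ent Amin} yields an entangled set $\{v_1,\dots,v_p\}$ in $\mathcal A_{\min}$ witnessed by an (WLOG increasing) monotone sequence $(\beta_k)$ with co-lex supremum $\beta^*$. For each $v_i$, the subsequence reaching $v_i$ accumulates at $\beta^*$ from below, so for every length $L$ there is such a string factoring as $\beta'\cdot\sigma_L$, where $\sigma_L$ is the length-$L$ right-suffix of $\beta^*$ and the intermediate state $q=\delta(s,\beta')$ satisfies $\delta(q,\sigma_L)=v_i$. Finiteness of $\mathcal A_{\min}$ then allows a pigeonhole argument on the intermediate states $q$ across $L$ combined with an idempotent-selection in the finite monoid of $\delta(\cdot,\cdot)$-transformations of suffixes of $\beta^*$, producing a non-empty $\gamma$ whose $\omega$-power coincides with an eventual right tail of $\beta^*$ and which acts as a self-loop at $p$ distinct states $u_1,\dots,u_p$ (one for each $v_i$). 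Pairwise overlap of the $\mathcal I(u_i)$ is inherited from the fact that all $u_i$ are reached by strings arbitrarily close to $\beta^*$ from below, yielding a common finite element in each $\mathcal I(u_i)$. The main obstacle is this $(\Rightarrow)$ direction, where the pigeonhole/idempotent argument in the finite transformation monoid of $\mathcal A_{\min}$ must simultaneously ensure both the \emph{distinctness} of the extracted $u_i$ and the transfer of the pairwise interval-overlap property from the abstract monotone witness to the extracted states.
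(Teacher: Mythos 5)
Your $(\Leftarrow)$ direction is correct and essentially the paper's own argument: the same trichotomy on the position of $\gamma^\omega$ relative to $I_{u_i}$, the same use of pairwise overlap to exclude the mixed case (one state entirely below $\gamma^\omega$, another entirely above), and an interleaved increasing sequence $\alpha_1\gamma^{n_k}\prec\dots\prec\alpha_p\gamma^{n_k}\prec\alpha_1\gamma^{n_{k+1}}\prec\dots$ witnessing entanglement; the paper lengthens $\gamma$ once (so that $\alpha_p\prec\alpha_1\gamma$) instead of taking growing gaps $n_k$, a cosmetic difference.

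The $(\Rightarrow)$ direction, however, has a genuine gap --- one you flag yourself as ``the main obstacle'' without resolving. Two things are missing. First, the pigeonhole/idempotent extraction must be performed on ordered $p$-tuples of predecessor states \emph{simultaneously}, not on each $v_i$ separately; otherwise you obtain $p$ cycles labeled by possibly different factors of $\beta^*$ rather than a single common $\gamma$, and you have no control over whether the extracted $u_i$ are pairwise distinct. Second, and more seriously, your justification for the pairwise overlap of the $\mathcal I(u_i)$ --- that ``all $u_i$ are reached by strings arbitrarily close to $\beta^*$ from below'' --- is false as stated: the $u_i$ are \emph{predecessors}, reached by strings $\beta'$ such that $\beta'\sigma_L$ is close to $\beta^*$; the strings $\beta'$ themselves need not be close to $\beta^*$, nor need the intervals $\mathcal I(u_i)$ share a common element (only pairwise overlap is claimed). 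The paper closes both gaps with one tool, Lemma~\ref{lemma: langwidth: entangled properties}: (a) every entangled set of size $p$ has an entangled predecessor of the same size $p$ (equality of cardinalities follows from determinism), and (b) every entangled set has pairwise-overlapping intervals. Iterating the predecessor construction $n^p+1$ times and pigeonholing on the resulting $p$-tuples yields a repeated tuple $(u_1,\dots,u_p)$; the common $\gamma$ is the label of the path between the two occurrences, distinctness is automatic since each intermediate set has cardinality exactly $p$, and the overlap of the $\mathcal I(u_i)$ follows because $\{u_1,\dots,u_p\}$ is itself entangled. To salvage your monoid-based route you would need to prove an analogue of this lemma --- in particular, that entanglement (not mere proximity to $\beta^*$) propagates backwards to the predecessor tuple --- which is precisely the content currently missing.
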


We will prove this theorem by showing that the $p$ states $\{u_1,\cdots,u_p\}$ in the theorem are entangled if and only if such cycles exist, and finally combining this result again with Theorem~\ref{thm: widthD L equals ent Amin}. The importance of our new characterization lies in the fact that it allows us to reduce the problem of computing the deterministic width of a language to the problem of detecting particular cycles in a graph. In the next sections, we show how to solve this problem by resorting to power semi-DFAs (see Definition \ref{def:power semiDFA}), and we complete this upper bound with a matching conditional lower bound and other hardness results.

Before proving Theorem~\ref{thm: main: width - cycle}, we  establish some important observations. We start by observing that every state belonging to a cycle that does not correspond to the state's infimum (supremum, resp.) string, is reached by a string being co-lex smaller (greater, resp.) than the infinite string labelling the cycle.

\begin{lemma}
    For a DFA $\mathcal{A}=(Q, \Sigma, \delta, s, F)$, let $u\in Q$ be a state such that $\delta(u,\gamma)=u$ for some non-empty string $\gamma\in\Sigma^*$. If $\gamma^\omega\ne\inf I_u$ ($\ne\sup I_u$, respectively) then, there must exist a string $\alpha\in I_u$ such that $\alpha\prec\gamma^\omega$ ($\gamma^\omega \prec \alpha$, respectively).
    \label{lemma: langwidth: string greater than loop exists}
\end{lemma}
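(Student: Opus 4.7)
My plan is to handle the $\inf$ case in detail; the $\sup$ case will follow by a symmetric argument with inequalities reversed. I will organize the proof into three steps: (i) prove the general inequality $\inf I_u \preceq \gamma^\omega$, using only accessibility of $u$ and the cycle $\delta(u,\gamma)=u$; (ii) combine this with the hypothesis $\gamma^\omega \ne \inf I_u$ to obtain the strict inequality $\inf I_u \prec \gamma^\omega$; and (iii) invoke the greatest-lower-bound characterization of $\inf I_u$ to extract an $\alpha \in I_u$ strictly below $\gamma^\omega$.

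For step (i), I will pick any $\beta \in I_u$ (which exists by accessibility of $\mathcal{A}$) and consider the strings $\beta\gamma^k$ for $k\ge 0$, which all belong to $I_u$ because $\delta(u,\gamma)=u$. The key structural fact I will exploit is the self-similarity of $\gamma^\omega$ under appending $\gamma$: the rightmost $k|\gamma|$ characters of both $\beta\gamma^k$ and $\gamma^\omega$ are exactly $\gamma^k$ (read identically). From this, I will argue that for any $\gamma' \in \Sigma^*\cup\Sigma^\omega$ with $\gamma'\succ\gamma^\omega$, letting $i$ be the smallest position from the right at which $\gamma'[i]\succ\gamma^\omega[i]$, choosing $k$ large enough that $k|\gamma|\ge i$ makes $\beta\gamma^k$ and $\gamma^\omega$ agree on positions $1,\dots,i$, so that the co-lex comparison of $\beta\gamma^k$ with $\gamma'$ is decided at position $i$ in favor of $\beta\gamma^k\prec\gamma'$. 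This shows that no $\gamma'\succ\gamma^\omega$ can be a lower bound of $I_u$, so every lower bound $L$ of $I_u$ must satisfy $L\preceq\gamma^\omega$, and in particular $\inf I_u\preceq\gamma^\omega$.

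Step (ii) is then immediate: since $\inf I_u\preceq \gamma^\omega$ and $\inf I_u\ne\gamma^\omega$ by hypothesis, we have $\inf I_u\prec\gamma^\omega$. For step (iii), because $\inf I_u$ is by Definition~\ref{def:inf sup} the greatest lower bound of $I_u$ and $\gamma^\omega\succ\inf I_u$, the string $\gamma^\omega$ cannot itself be a lower bound of $I_u$; equivalently, some $\alpha\in I_u$ satisfies $\alpha\prec\gamma^\omega$, yielding the desired witness.

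The main technical subtlety I anticipate is in step (i): rigorously formalizing that $\beta\gamma^k$ and $\gamma^\omega$ coincide on their rightmost $k|\gamma|$ characters (which boils down to the periodicity identity that stripping $\gamma$ from the right of $\gamma^\omega$ yields $\gamma^\omega$), and correctly arguing that the first-disagreement position $i$ between $\gamma'$ and $\gamma^\omega$ is well-defined in all subcases. In particular, if $\gamma'$ is finite, one must rule out the possibility that $\gamma'$ runs out of characters while still matching $\gamma^\omega$'s suffix of length $|\gamma'|$; but in that scenario the co-lex definition gives $\gamma'\prec\gamma^\omega$, contradicting $\gamma'\succ\gamma^\omega$, so such an $i \le |\gamma'|$ must indeed exist. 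The $\sup$ statement is obtained by dualizing: one shows $\sup I_u\succeq\gamma^\omega$ by the same construction with $\gamma' \prec \gamma^\omega$, and then uses the least-upper-bound property to extract $\alpha\in I_u$ with $\gamma^\omega\prec\alpha$.
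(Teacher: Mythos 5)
Your proposal is correct and is essentially the paper's argument recast in direct form: the paper assumes no such $\alpha$ exists, deduces $\gamma^\omega\prec\inf I_u$, and contradicts this by noting $\alpha\gamma^k\in I_u$ with $\alpha\gamma^k\prec\inf I_u$ for large $k$ --- exactly the same convergence-of-$\beta\gamma^k$-to-$\gamma^\omega$ mechanism you use in step (i), combined with the greatest-lower-bound property. Your version just spells out the suffix-matching detail that the paper leaves implicit, so no substantive difference.
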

\begin{proof}
    We consider the case $\gamma^\omega\ne\inf I_u$, the other case is analogous. Assume, for  a contradiction, that for every $\alpha\in I_u$ it holds that $\gamma^\omega\prec\alpha$. Then $\gamma^\omega$ is a lower bound of $I_u$. Since $\gamma^\omega\ne\inf I_u$, we have $\gamma^\omega\prec\inf I_u$. Hence, for every $\alpha\in I_u$, there must exist a sufficiently large $k\ge 0$ such that $\alpha\gamma^k\prec\inf I_u$. However, since $\delta(u,\gamma)=u$, then also $\alpha\gamma^k\in I_u$. Thus, we reached a contradiction, as $\alpha\gamma^k\prec\inf I_u$ is not possible.
\end{proof}

Next, we observe two useful properties of entangled states. For every set of entangled states, (1) their corresponding co-lex intervals always overlap, and (2) the entanglement is propagated to some of their predecessors.

\begin{lemma}
    For a DFA $\mathcal{A}=(Q, \Sigma, \delta, s, F)$, let $S=\{u_1,\cdots,u_p\} \subseteq Q$ be a set of entangled states. Then:
    \begin{enumerate}
        \item \label{lemma: langwidth: sub: intervals of entangled states are overlapping} For every distinct $i,j\in [p]$, $\mathcal{I}(u_i)\cap\mathcal{I}(u_{j})\ne\emptyset$. 
        \item \label{lemma: langwidth: sub: entangled predecessor exists} There exists a set $S'\subseteq Q$ of states such that (i) for some $c\in\Sigma$, $S=\{\delta(u,c):u\in S'\}$, (ii) $|S'|=|S|$ and (iii) $S'$ is entangled. We call such a set $S'$ an \emph{entangled predecessor} of $S$ with $c$.
    \end{enumerate}
    \label{lemma: langwidth: entangled properties}
\end{lemma}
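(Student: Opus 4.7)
The plan is to handle the two parts separately, both exploiting the monotone witness sequence $(\alpha_k)_{k \in \NN}$ guaranteed by the entanglement of $S$. Without loss of generality I assume $(\alpha_k)$ is non-decreasing, and for Part~\ref{lemma: langwidth: sub: entangled predecessor exists} that every $\alpha_k$ is non-empty (dropping at most one $\epsilon$ term).

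For Part~\ref{lemma: langwidth: sub: intervals of entangled states are overlapping}, I fix two distinct $u_i, u_j \in S$. Since both are reached infinitely often along $(\alpha_k)$, I can pick alternating indices $k_1 < k_2 < k_3 < k_4$ with $\alpha_{k_1}, \alpha_{k_3} \in I_{u_j}$ and $\alpha_{k_2}, \alpha_{k_4} \in I_{u_i}$. Monotonicity gives $\alpha_{k_1} \preceq \alpha_{k_2} \preceq \alpha_{k_3} \preceq \alpha_{k_4}$, and each consecutive pair is in fact strict, because the two strings reach distinct states of the DFA and equal strings would contradict determinism. Then $\inf I_{u_i} \preceq \alpha_{k_2}$, $\sup I_{u_i} \succeq \alpha_{k_4}$, $\inf I_{u_j} \preceq \alpha_{k_1}$, and $\sup I_{u_j} \succeq \alpha_{k_3}$, so any finite string $z \in \Sigma^*$ with $\alpha_{k_2} \prec z \prec \alpha_{k_3}$ lies in $\mathcal{I}(u_i) \cap \mathcal{I}(u_j)$; such a $z$ always exists because $\Sigma^*$ is dense in the co-lex order between any two distinct finite strings.

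The key observation for Part~\ref{lemma: langwidth: sub: entangled predecessor exists} is that the last character of each $\alpha_k$ must eventually stabilize. Writing $\alpha_k = \beta_k c_k$, the co-lex order compares last characters first, so non-decreasingness of $(\alpha_k)$ forces $(c_k)$ to be non-decreasing in the alphabet order; since $\Sigma$ is finite, $c_k = c$ for all $k \geq k_0$ and some fixed $c \in \Sigma$. Discarding the initial segment leaves a monotone tail $(\alpha_k = \beta_k c)_{k \geq k_0}$ that still hits every $u_i \in S$ infinitely often, and stripping the common trailing $c$ preserves the co-lex order, so $(\beta_k)_{k \geq k_0}$ is itself monotone. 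For each $u_i$, the intermediate state $v_k := \delta(s, \beta_k)$ lies in the finite set $V_i := \delta^{-1}(u_i, c) = \{v \in Q : \delta(v,c) = u_i\}$ whenever $\delta(s, \alpha_k) = u_i$, so pigeonhole yields some $v_i^* \in V_i$ reached by $\beta_k$ infinitely often. Setting $S' := \{v_1^*, \ldots, v_p^*\}$, the identity $\delta(v_i^*, c) = u_i$ forces the $v_i^*$ to be pairwise distinct, so $|S'| = |S|$ and $S = \{\delta(v, c) : v \in S'\}$, while entanglement of $S'$ is witnessed by the sequence $(\beta_k)_{k \geq k_0}$.

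The step I expect to be the main subtle point is the uniformity of the character $c$ across all of $S$ in Part~\ref{lemma: langwidth: sub: entangled predecessor exists}: a priori one might worry that different $u_i$ are reached only along transitions labeled by different characters, forcing a case split per state. The resolution is that the \emph{last-character-stabilizes} argument applies to the entire monotone sequence at once, independently of which target state each $\alpha_k$ reaches, so a single $c$ simultaneously serves as the common incoming label of one chosen predecessor per $u_i$. Everything else reduces to pigeonholing on finite sets and routine co-lex bookkeeping.
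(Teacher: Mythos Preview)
Your argument follows essentially the same route as the paper's for both parts: exploit the monotone witness sequence directly, and in Part~\ref{lemma: langwidth: sub: entangled predecessor exists} use that the last character of a co-lex monotone sequence must stabilize (finite alphabet), then pigeonhole on the finitely many $c$-predecessors of each $u_i$. Your identification of the ``single $c$ for the whole sequence'' as the crux is exactly what the paper relies on.

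There is, however, one concrete error in Part~\ref{lemma: langwidth: sub: intervals of entangled states are overlapping}: the claim that $\Sigma^*$ is dense in the co-lex order between any two distinct finite strings is false. Over $\Sigma=\{0,1\}$ there is no string strictly between $0$ and $00$: any candidate must end in $0$, reducing to finding a non-empty string strictly co-lex smaller than $0$, which is impossible. So you cannot manufacture the witness $z$ by density. The repair costs nothing, since entanglement supplies infinitely many hits on each state: pick one further index $k_5>k_4$ with $\alpha_{k_5}\in I_{u_j}$, and take $z:=\alpha_{k_3}$ itself. Then
\[
\inf I_{u_i}\preceq\alpha_{k_2}\prec\alpha_{k_3}\prec\alpha_{k_4}\preceq\sup I_{u_i}
\quad\text{and}\quad
\inf I_{u_j}\preceq\alpha_{k_1}\prec\alpha_{k_3}\prec\alpha_{k_5}\preceq\sup I_{u_j},
\]
so $\alpha_{k_3}\in\mathcal I(u_i)\cap\mathcal I(u_j)$ directly. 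The paper's version avoids the issue implicitly because it indexes the witness as $\alpha_{i+kp}$ for all $k\ge 0$, so additional strings on either side are always at hand.
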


\begin{proof}

    Since $S=\{u_1,\cdots,u_p\}$ is entangled, without loss of generality, we can assume that there exists an increasing monotone sequence $(\alpha_t)_{t\in \mathbb{N}}$ (the decreasing case is symmetric) such that $\alpha_{i+kp}\in I_{u_{i}}$ for every $i \in [p]$ and $k\ge 0$.
    
    \begin{enumerate}
        \item Let $i, j\in [p]$ be distinct and assume, w.l.o.g., that $i<j$. Then, from the given monotone sequence, we can obtain strings $\alpha_i,\alpha_{i+p}\in I_{u_i}$ and $\alpha_{j},\alpha_{j+p}\in I_{u_{j}}$, for which it holds that $\alpha_i\prec\alpha_{j}\prec\alpha_{i+p}\prec\alpha_{j+p}$. By the definition of the infimum and supremum, we have (i) $\inf I_{u_i} \preceq \alpha_i \prec \alpha_{j} \prec \alpha_{i+p}\preceq \sup I_{u_i}$ and (ii) $\inf I_{u_{j}}\preceq\alpha_{j}\prec\alpha_{i+p}\prec\alpha_{j+p}\preceq\sup I_{u_{j}}$. Therefore, $\max\{\inf I_{u_i},\inf I_{u_{j}}\}\preceq \alpha_{j}\prec\alpha_{i+p}\preceq\min\{\sup I_{u_i},\sup I_{u_{j}}\}$, which implies that the two intervals $\mathcal{I}(u_i)=(\inf I_{u_i},\sup I_{u_i})$ and $\mathcal{I}(u_{j})=(\inf I_{u_{j}},\sup I_{u_{j}})$ are overlapping.
        
        \item Since the alphabet is finite, the sequence $(\alpha_t)_{t\in \mathbb{N}}$ is infinite, increasing and monotone, from a sufficiently large $k\ge 0$, all strings $\alpha_{i+k'p}$ must share the last character. Therefore, w.l.o.g., we can assume that every $\alpha_t$ ends with the same character $c\in\Sigma$.
        For $i\in [p]$ and $v\in Q$, let 
        \[
            P_{i,v}
            :=\{
                t\in \mathbb{N}: t = i + kp \text{ for some } k\ge 0 \text{ and }\alpha_{t}=\alpha'c \text{ for some }\alpha'\in I_v 
            \} 
        \]
        be the set of string indices $t = i + kp$ in the given infinite sequence such that the string $\alpha_t$ can be read by reaching $u_i$ through a transition from $v$. Note that, for every $i \in [p]$, it holds that $P_i:=\bigcup_{v\in Q} P_{i,v} = \{t\in \mathbb{N} : t = i+kp \text{ for some }k\ge 0\}$, i.e., $P_i$ is an infinite set. Hence, there exists $u'_i$ such that $P_{i,u'_i}$ is an infinite set. Let $S':=\{u'_i: i \in [p] \}$. The set $S'$ obviously satisfies (i). Also (ii) must be satisfied due to determinism. We will now argue that (iii) holds, i.e., $S'$ is entangled. Consider the infinite sequence $(\alpha'_t)_{t\in\mathbb{N}}$ such that $\alpha'_t c=\alpha_t$ for every $t\in \mathbb{N}$. Observe that, for every $i \in [p]$, there are infinitely many $k\ge 0$ such that $\alpha'_{i+kp}\in I_{u'_i}$. 
        Hence, for every $i\in [p]$ and $t\in \mathbb{N}$, there exists $t'\ge t$ such that $\alpha_{t'}\in I_{u'_i}$ and hence there exists a monotone sequence as in Definition~\ref{def: entangle} and thus $S'$ is entangled.\qedhere
    \end{enumerate}
    
\end{proof}

We are now ready to prove Theorem~\ref{thm: main: width - cycle} using the above two results.

\begin{proof}[Proof of Theorem \ref{thm: main: width - cycle}]

$(\Leftarrow)$
Assume that there exists a non-empty string $\gamma$ such that $\mathcal A_{\min}$ contains $p$ pairwise distinct nodes $u_i$, $i \in [p]$, with $\delta(u_i, \gamma)=u_i$ for all $i\in [p]$ and $\mathcal I(u_i) \cap \mathcal I(u_j) \neq \emptyset$ for all $i,j\in [p]$. 
Note that $\inf I_{u_i}\preceq \gamma^\omega \preceq \sup I_{u_i}$ so it must either hold that (i) $\gamma^\omega\ne \inf I_{u_i}$ for all $i \in [p]$ or (ii) $\gamma^\omega\ne \sup I_{u_i}$ for all $i\in [p]$. Otherwise $\inf I_{u_i}\prec\sup I_{u_i}=\inf I_{u_j}\prec \sup I_{u_j}$ for some $i,j\in[p]$ implying $\mathcal{I}(u_i)\cap\mathcal{I}(u_j)=\emptyset$ by definition. 
Without loss of generality, let us assume that (i) holds, the case where (ii) holds is symmetric.
For $i\in [p]$, let $\alpha_i\in I_{u_i}$ be a string obtained by applying Lemma~\ref{lemma: langwidth: string greater than loop exists} for state $u_i$, respectively. 
Without loss of generality, we can assume that $\alpha_1\prec\alpha_2\prec\cdots\prec\alpha_p\prec\gamma^\omega$.
Note that, since $\mathcal{A}$ is a DFA, for every $i,j\in[p]$, it holds that $\alpha_i\neq\alpha_{j}$ for $i\neq j$.
Moreover, since $\gamma$ labels a cycle, we can assume $\gamma$ is sufficiently long such that, for every $i\in[p]$, (a) $\alpha_i\prec\gamma$ and (b) $\gamma$ is not a suffix of $\alpha_i$; it is worth noting that the condition (b) is important for the symmetric case where (ii) holds. This implies that $\alpha_p\prec\alpha_1\gamma$, and consequently, we obtain $\alpha_1\prec\alpha_2\prec\cdots\prec\alpha_p\prec\alpha_1\gamma$.

Observe that appending the same string $\gamma$ at the end of each of these strings does not affect their relative co-lex order. Therefore, we have that $\alpha_1\gamma^{k}\prec\cdots\prec\alpha_p\gamma^{k}
\prec \alpha_1\gamma^{(k+1)}\prec\cdots\prec\alpha_p\gamma^{(k+1)}$ for every integer $k\ge 0$, which yields an infinite sequence. Since $\alpha_i\gamma^{k}\in I_{u_i}$ for every $i\in[p]$ and $k\ge 0$, this infinite sequence shows that $\{u_1,\cdots,u_p\}$ is entangled. Now, from Theorem~\ref{thm: widthD L equals ent Amin} it follows that $\width^D(\mathcal L)\ge p$.

$(\Rightarrow)$ 
Since $\width^D(\mathcal L)\ge p$, from Theorem~\ref{thm: widthD L equals ent Amin} we know that there exists a set of entangled states of size at least $p$. Let $S_1=\{v_1,v_2,\cdots,v_p\}$ be such a set. For $k\in[n^p+1]$, let $S_{k+1}$ be an entangled predecessor of $S_k$ with $c_k\in\Sigma$ that can be obtained by Lemma~\ref{lemma: langwidth: entangled properties}~(\ref{lemma: langwidth: sub: entangled predecessor exists}). Note that for $k\ge 1$, it holds that (i) $|S_k|=|S_1|$ and (ii) $S_k$ is entangled by definition. Let $f_k(v_i)$ be the state $w\in S_{k}$ such that $v_i=\delta(w,\gamma_k)$ where $\gamma_k=c_{k-1}\cdots c_1$. Observing that $S_k=\{f_k(v_i):i\in[p]\}$ and $|S_k|=|S_1|$, for every $k\ge 1$, we have that $f_k$ is a one-to-one function. Now, for a fixed $k\in[n^p+1]$, consider the set $\{(f_k(v_1),\cdots,f_k(v_p)): k\in [n^p + 1]\}$. As the image of $f_k$ is a subset of the states in $\mathcal{A}_{\min}$, the number of such $p$-tuples is at most $n^p$. Therefore by the pigeonhole principle, there must exist two integers $1\le k'<k''\le n^p+1$ such that $(f_{k'}(v_1),\cdots,f_{k'}(v_p)) = (f_{k''}(v_1),\cdots,f_{k''}(v_p))$. Let $u_i=f_{k'}(v_i)=f_{k''}(v_i)$ and $\gamma=c_{k''-1}c_{k''-2}\cdots c_{k'}$. By definition, it holds that $u_i=\delta(u_i,\gamma)$ for all $i\in [p]$ and recall that $S_{k'}=\{u_i:i\in[p]\}$ is entangled. Therefore, by Lemma~\ref{lemma: langwidth: entangled properties}~(\ref{lemma: langwidth: sub: intervals of entangled states are overlapping}), it holds that $\mathcal I(u_i) \cap \mathcal I(u_{j}) \neq \emptyset$ for every distinct $i,j\in[p]$. 
\end{proof}

\section{Algorithms for the \textsc{DFAD\MakeLowercase{et}W\MakeLowercase{idth}} Problem}\label{sec:algorithm}
In Subsection \ref{sec: optimal algo} we present a simple algorithm for \textsc{DFADetWidth} running in $O(m^p)$ time and $O(n^p)$ space in the word RAM model with word size $w$, under the assumption that the working space does not exceed the model's space budget of $2^w$ words\footnote{Note that this restriction is due to the exponential working space of the algorithm. We leave it as an open problem to determine whether \textsc{DFADetWidth} can be solved in polynomial space and $O(m^p)$ time, for \emph{all} values of $p$.} (this is true, for instance, when $p$ is a constant). Under the Strong Exponential Time Hypothesis (SETH), this running time is optimal (even for constant $p$) by the results that we present in Section \ref{sec: hardness}.  
Then, in Subsection~\ref{sec: algo optimization} we optimize this algorithm by using randomization and further combinatorial observations.

\subsection{A Simple Optimal-Time Algorithm}\label{sec: optimal algo}

Our algorithm is based on Theorem~\ref{thm: main: width - cycle}:
by that theorem, we have to look for $p$ cycles $\delta(u_i,\gamma)=u_i$ in $\mathcal A_{\min}$ (for pairwise distinct $u_1, \dots, u_p$) such that the co-lex intervals $\mathcal I(u_i)$ of their starting states pairwise intersect. 
We can easily achieve this as follows. Given an input DFA $\mathcal A$ with $n$ states and $m$ transitions:
\begin{itemize}
    \item[(1)] We build $\mathcal A_{\min}$ ($O(m\log n)$ time by Hopcroft's algorithm~\cite{hopcroft1971n}), 
    \item[(2)] we compute $\mathcal I(u)$ for each state $u$ of $\mathcal A_{\min}$ ($O(m\log n)$ time~\cite{BeckerCCKKOP23}; read below for more details), and 
    \item[(3)] we test acyclicity of the semi-DFA $\mathcal B$ described below, via a DFS visit.
\end{itemize}

The semi-DFA $\mathcal B$ is a pruned version of the power semi-DFA $\mathcal A_{\min}^p$ (see Definition~\ref{def:power semiDFA}), such that $\mathcal B$ has a cycle if and only if the conditions of Theorem~\ref{thm: main: width - cycle} are satisfied:

\begin{definition}\label{def:semi-DFA B}
    Let $\mathcal A_{\min} = (Q, \Sigma, \delta, s, F)$ be a minimum DFA. We define the following semi-DFA $\mathcal B = (Q',\Sigma, \delta)$:
    \begin{itemize}
        \item $Q' = \{\bar Q \subseteq Q\ :\ |\bar Q| = p \wedge (\forall u,v\in \bar Q)(\mathcal I(u) \cap \mathcal I(v) \neq \emptyset) \}$
        \item We overload notation with respect to $\delta$ by extending it to sets (see also Section \ref{sec:DFAs}): for $\{u_1, \dots, u_p\}, \{v_1, \dots, v_p\} \in Q'$ and $a\in \Sigma$, we define $\delta(\{u_1, \dots, u_p\}, a) = \{\delta(u_1,a), \dots, \delta(u_p,a)\}$.
    \end{itemize}
\end{definition}

See Figure~\ref{fig:langwidth:unordered} for two examples of $\mathcal{B}$ constructed from the automaton $\mathcal{A}_{\min}$ of Figure~\ref{fig:dfawidth} for $p=2$ and $p=3$, respectively.
By the very definition of $\mathcal B$, this semi-DFA has a cycle labeled with a string $\gamma$ if and only if $\mathcal A_{\min}$ contains $p$ cycles $\delta(u_i,\gamma)=u_i$ 
(for pairwise distinct $u_1, \dots, u_p$) such that the co-lex intervals $\mathcal I(u_i)$ of their starting states pairwise intersect. 
The DFS visit of $\mathcal B$ allows us to detect cycles in this semi-DFA, thereby allowing us to determine whether the conditions of Theorem \ref{thm: main: width - cycle} are satisfied. Below, we discuss steps (2) and (3) in more detail and analyze the algorithm's complexity. 

\begin{figure}[ht!]
    \centering
    \subfloat[$p=2$]{
        \resizebox{0.3\textwidth}{!}{
            \begin{tikzpicture}[
        scale=0.8,
        x=2.0cm,y=2.0cm,
        v/.style={circle, draw=black, minimum size=10mm},
    ]
    
    \node[v]  (v23) at (0.0, 2.2) {$2,3$};
    \node[v]  (v24) at (0.0, 0.8) {$2,4$};
    \node[v]  (v46) at (0.8, 0.0) {$4,6$};
    \node[v]  (v36) at (2.2, 0.0) {$3,6$};
    \node[v]  (v25) at (3.0, 0.8) {$2,5$};
    \node[v]  (v45) at (3.0, 2.2) {$4,5$};
    \node[v]  (v26) at (2.2, 3.0) {$2,6$};
    \node[v]  (v56) at (0.8, 3.0) {$5,6$};
    
    \path [-stealth, thick]
        (v24) edge [bend left, left] node {1} (v23)
        (v25) edge [bend right, right] node {0} (v45)
        (v25) edge [bend left, below] node {1} (v36)
        (v26) edge [bend right, above] node {1} (v56)
        (v45) edge [bend right, right] node {1} (v26)
        (v46) edge [bend right, right] node {1} (v23)
        (v56) edge [bend left, right] node {0} (v46)
        (v23) edge [bend left, above] node {0} (v56)
        ;
\end{tikzpicture}
        }
    }
    \hspace{15mm}
    \subfloat[$p=3$]{
        \resizebox{0.23\textwidth}{!}{
            \begin{tikzpicture}[
        x=2.0cm,y=2.0cm,
        v/.style={circle, draw=black, minimum size=10mm},
    ]
    
    \node[v]  (v245) at (0,1.2) {$2,4,5$};
    \node[v]  (v236) at (1,2.0) {$2,3,6$};
    \node[v]  (v456) at (2,1.2) {$4,5,6$};
    \node[v]  (v256) at (1.7,0) {$2,5,6$};
    \node[v]  (v246) at (0.3,0) {$2,4,6$};
    
    \path [-stealth, thick]
        (v245) edge [above] node {1} (v236)
        (v456) edge [above] node {1} (v236)
        (v256) edge [right] node {0} (v456)
        ;
\end{tikzpicture}
        }
    }
    
    \caption{
    Pruned power semi-DFA $\mathcal B$ of Definition \ref{def:semi-DFA B} constructed from the automaton $\mathcal{A}_{\min}$ of Figure~\ref{fig:dfawidth} for $p=2$ (a) and $p=3$ (b). By  Theorem~\ref{thm: main: width - cycle} 
    and by Definition \ref{def:semi-DFA B},
    we obtain that $\width^D(\mathcal{L}(\mathcal{A}_{\min}))=2$ since (a) contains a cycle (i.e.,  $\width^D(\mathcal{L}(\mathcal{A}))\ge 2$) while (b) is acyclic (i.e., $\width^D(\mathcal{L}(\mathcal{A}_{\min}))<3$).}
\label{fig:langwidth:unordered}
\end{figure}
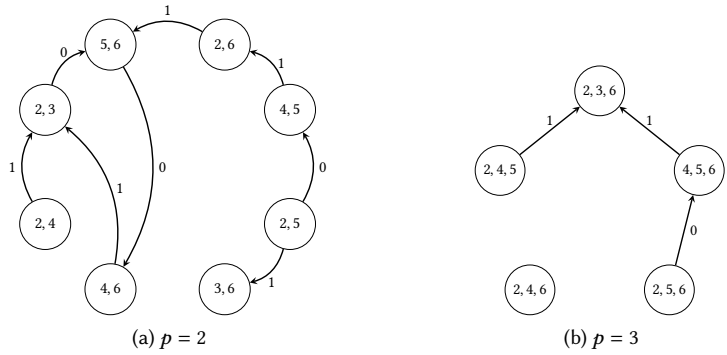

\paragraph{Computing the intervals $\mathcal I(u)$}

In order to compute the intervals $\mathcal I(u)$ for all states $u$ in $\mathcal A_{\min}$, we use the partition refinement algorithm described by Becker et al.~\cite{BeckerCCKKOP23}, running in time $O(m \log n)$. 
Observe that the set $\{\inf I_u, \sup I_u\ :\ u\in Q\}$ contains at most $2n$ elements; Becker et al.'s algorithm ~\cite{BeckerCCKKOP23} implicitly sorts these strings and represents each of them with their rank --- an integer in the range $[2n]$ --- in this sorted list.
For simplicity, we do not introduce new notation 
for the intervals $\mathcal I(u)$: in what follows, they have to be interpreted as pairs over $[2n]$ as above described. We remark that Becker et al.'s algorithm ~\cite{BeckerCCKKOP23} assumes two additional properties for $\mathcal A_{\min}$: (i) input-consistency (i.e.\ each incoming edge of any given state is required to have the same label), and (ii) the source state does not have incoming transitions. While $\mathcal A_{\min}$ may not necessarily satisfy these conditions, the properties can be restored easily by adding a new dummy source state and removing incoming transitions with non-minimum (or non-maximum) labels. We give a detailed account of this transformation of $\mathcal A_{\min}$ in Appendix~\ref{sec: appendix: infsup}.

\paragraph{Testing acyclicity of $\mathcal B$}

We test if $\mathcal B$ is acyclic by DFS-visiting it \emph{on-the-fly}, i.e.\ without explicitly building it. 
Let us denote $\mathcal A_{\min} = (Q, \Sigma, \delta, s, F)$.
By the folklore DFS-based algorithm for testing acyclicity \cite{leiserson1994introduction}, this step requires us to keep a color vector $C : Q' \rightarrow \{black, white, gray\}$, for each state $q \in Q'$ of $\mathcal B$ (white nodes have not been visited, gray nodes have been entered but not yet exited, and black nodes have been entered and exited). 
We implement $C$ with direct addressing. First of all, without loss of generality we may assume that $Q = [n']$, with $n' = |Q| \le n$; this can be achieved with a simple pre-processing of $\mathcal A_{\min}$.
Given any $q \in Q'$, where $q = \{u_1, u_2, \dots, u_p\} \subseteq Q$ ($q$ is not necessarily given in sorted order), 
the entry of $C$ associated with $q$ is obtained by sorting $q$ in $O(p\log p)$ time (without loss of generality, $u_1 < u_2 < \dots <  u_p$) and using the tuple $(u_1, u_2, \dots, u_p)$ as an address in memory\footnote{Notice that this is possible since in our word-RAM model we assume (see Section \ref{sec:RAM model}) that the total working space $O(n^p)$ of our algorithm must not exceed the model's space budget of $2^w$. In particular, this implies that the tuple $(u_1, u_2, \dots, u_p)$ fits in one memory word and can be used as an address.}.
Array $C$ uses in total $O(n^p)$ words of space. We initialize with color \emph{white} all the cells of $C$ in $O(n^p)$ time.

At each step of the DFS visit, we are visiting some node $q = \{u_1, u_2, \dots, u_p\}$ of $\mathcal B$. By keeping $\mathcal A_{\min}$ in adjacency-list format, 
we can iterate over all the candidate successors of $q$ (each identified by a combination of $p$ outgoing transitions from $u_1, u_2, \dots, u_p$) by keeping $p$ pointers inside the $p$ adjacency lists of $u_1, u_2, \dots, u_p$ (i.e.\ the $i$-th pointer points to a successor $\delta_{c_i}(u_i)$ of $u_i$, for some $c_i \in \Sigma$).
Let $u_1', u_2', \dots, u_p'$ be the successors of $u_1, u_2, \dots, u_p$, respectively, identified by those pointers. We push state $\{u_1', u_2', \dots, u_p'\}$ on the DFS stack if and only if $u_1', u_2', \dots, u_p'$ (1) are pairwise distinct, (2) are reached by the same label from $u_1, u_2, \dots, u_p$, and (3) have pairwise-intersecting co-lex intervals $\mathcal I(u_1'), \mathcal I(u_2'), \dots, \mathcal I(u_p')$. These three conditions can be easily checked in $O(p\log p)$ time by sorting $u_1', u_2', \dots, u_p'$ and their co-lex intervals by their first coordinate. 

Note that, while this procedure spends $O(p\log p)$ time also in cases where $\{u_1', u_2', \dots, u_p'\}$ is not a state of $\mathcal B$, the total number of tuples $(u_1', u_2', \dots, u_p')$ 
that we test is upper-bounded by $\binom{m}{p}$. It follows that, overall, testing all these tuples takes time $O\big(p\log p\cdot \binom{m}{p}\big) \subseteq O(m^p)$. 
Note also that, along with each node $q = \{u_1, u_2, \dots, u_p\}$ of $\mathcal B$ pushed on the DFS stack ($O(p)$ space), we need to store its associated $p$ pointers inside the adjacency lists of $u_1, u_2, \dots, u_p$ (in order to continue the visit with the next candidate successor of $q$). Since $\mathcal B$ has at most $\binom{n}{p}$ states, the total stack size never exceeds $O\big(p\cdot \binom{n}{p}\big) \subseteq O(n^p)$ words. 
In total, we spend time $O\big(p\log p \cdot \binom{n}{p}\big) \subseteq O(n^p) \subseteq O(m^p)$ operating on the stack.

We obtain: 

\begin{theorem}\label{thm:simple algorithm}
    In the word-RAM model with word size $w$, we can solve \textsc{DfaDetWidth} with parameter $p \geq 2$ on any DFA $\mathcal A$ with $n$ states and $m$ transitions 
    in $O(m^p)$ time and $O(n^p)$ words of working space, provided that the working space does not exceed the space budget $2^w$ of the model.
\end{theorem}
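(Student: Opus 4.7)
The plan is to reduce \textsc{DfaDetWidth} to cycle detection in the pruned power semi-DFA $\mathcal B$ of Definition~\ref{def:semi-DFA B}, using Theorem~\ref{thm: main: width - cycle}. By that characterization together with the definition of $\mathcal B$, the language satisfies $\width^D(\mathcal L(\mathcal A)) \ge p$ exactly when $\mathcal B$ contains a directed cycle, so once (i) $\mathcal A_{\min}$ and (ii) all intervals $\mathcal I(u)$ are available, the decision reduces to an acyclicity test on $\mathcal B$; the question is how to do (i), (ii), and (iii) within the claimed time/space budget without ever materializing $\mathcal B$.

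For step (i) I would invoke Hopcroft's minimization algorithm, obtaining $\mathcal A_{\min}$ in $O(m\log n)$ time. For step (ii), after the input-consistency/dummy-source adjustment of Appendix~\ref{sec: appendix: infsup}, I would run the partition-refinement procedure of Becker et al.\ to produce, in $O(m\log n)$ time, the ranks of all strings $\inf I_u$ and $\sup I_u$ inside a sorted list of size at most $2n$. Each interval $\mathcal I(u)$ then becomes a pair of integers in $[2n]$, so intersection of two intervals can be tested in $O(1)$.

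For step (iii) I would run an on-the-fly three-color DFS on $\mathcal B$. Each state $q \in Q'$ is represented by a sorted $p$-tuple $(u_1 < u_2 < \dots < u_p)$; under the hypothesis that the working space $O(n^p)$ fits into the model's $2^w$-word budget, each such tuple fits in one machine word and serves directly as an address into a color array $C$ of size $O(n^p)$. When the DFS visits $q = \{u_1,\dots,u_p\}$, candidate successors are enumerated by iterating $p$ parallel pointers through the adjacency lists of $u_1,\dots,u_p$; a candidate $(u_1',\dots,u_p')$ is pushed onto the stack only if (1) its components are pairwise distinct, (2) all $p$ traversed edges share a common label, and (3) the intervals $\mathcal I(u_i')$ intersect pairwise, all three conditions being verifiable in $O(p\log p)$ time by sorting the tuple on its first coordinate.

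The complexity argument is then essentially a counting one. The number of $p$-tuples of outgoing transitions examined across the whole DFS is bounded by $\binom{m}{p}$, giving $O\bigl(p\log p \cdot \binom{m}{p}\bigr) = O(m^p)$ for all successor filtering. The DFS stack stores at most one frame per state of $\mathcal B$, i.e.\ $O\bigl(p\cdot \binom{n}{p}\bigr) = O(n^p)$ words, which together with $C$ yields the claimed $O(n^p)$ working space; adding the $O(m\log n)$ preprocessing to the $O(m^p)$ DFS cost gives the stated time bound. The subtlest point, in my view, is the direct-addressing trick of step (iii): it relies crucially on the hypothesis $n^p \le 2^w$ so that a sorted $p$-tuple packs into a single machine word and $C$ admits $O(1)$ lookups; this is exactly the "provided that the working space does not exceed $2^w$" caveat in the statement, and once it is granted, all per-successor work is $O(p \log p)$ and the $\binom{m}{p}$ bound absorbs everything.
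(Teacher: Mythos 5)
Your proposal is correct and follows essentially the same route as the paper's own proof: Hopcroft minimization, the Becker et al.\ interval computation, and an on-the-fly three-color DFS on the pruned power semi-DFA $\mathcal B$ with direct addressing of the color array via sorted $p$-tuples, with the $O\bigl(p\log p \cdot \binom{m}{p}\bigr) \subseteq O(m^p)$ accounting for successor filtering and the $O\bigl(p\cdot\binom{n}{p}\bigr) \subseteq O(n^p)$ bound on the stack and color array. You also correctly isolate the role of the $2^w$ space caveat in making single-word addressing of $C$ legitimate, which is exactly the point the paper's footnote stresses.
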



\subsection{Optimized Algorithm}\label{sec: algo optimization}

While in Theorem \ref{thm:simple algorithm} we used the simple upper-bound $\binom{n}{p}$ for the number of states of the semi-DFA $\mathcal B$ of Definition \ref{def:semi-DFA B}, we now observe that the size of $\mathcal B$ could be much smaller than that. This is due to the fact that every state $\{u_1, \dots, u_p\}$ of $\mathcal B$ must be such that the intervals $\mathcal I(u_1), \dots, \mathcal I(u_p)$ pairwise overlap. Now, recall (Corollary \ref{cor: intervals vs colex width}) that 
the width of $\mathcal A_{\min}$ upper-bounds the size of the largest subset of $\{\mathcal I(u)\ :\ u\in Q\}$ whose members pairwise overlap; as a result, the size of $\mathcal B$ depends on $\width(\mathcal A_{\min})$. 
By using Karp-Rabin hashing \cite{KR} (see also Section \ref{subsec: randomization}), we can furthermore represent any subset of $p$ states of $\mathcal A_{\min}$ (forming one state of $\mathcal B$) in $O(1)$ memory words, with only a low probability $2^{-\Omega(w)}$ of collisions. 
Then, $\mathcal B = (Q',\Sigma, \delta)$ can be represented in adjacency-list format as a map (a dictionary associating every state to its adjacency list) supporting constant-time lookups and insertions w.h.p. \cite{Dietzfelbinger90}.
Ultimately, these observations lead to an optimized algorithm whose running time is bounded by the size of $\mathcal B$ --- which we show to be $O\big(m\cdot \binom{\width(\mathcal A_{\min})}{p}\big)$ --- plus the time for minimizing the input DFA.
In the most favorable scenario --- i.e., when $\mathcal A_{\min}$ is a Wheeler DFA or, more in general, if $\width(\mathcal A_{\min}) \in O(\log\log n)$ --- our algorithm terminates in time $O(m\log n)$ with high probability. 

Let us formalize these intuitions. 
Let $\mathcal A_{\min} = (Q, \Sigma, \delta, s, F)$, and let $G(T) = (Q,E)$ be the intersection graph of the intervals $T=\{\mathcal I(u)=(\inf I_u,\sup I_u)\ :\ u\in Q\}$, i.e.\ the graph with edge set $E = \{(u,v)\in Q^2\ :\ u\neq v \wedge \mathcal I(u) \cap \mathcal I(v) \neq \emptyset\}$.
By the very definition of $\mathcal B$, the number of nodes of $\mathcal B$ is equal to the number of pairwise distinct $p$-cliques in $G(T)$.

A similar reasoning holds for the number of transitions of $\mathcal B$.
For each $a\in\Sigma$ let $T_a = \{\mathcal I(u)\ :\ u\in Q \wedge \delta(u,a)\in Q\}$ be the set of intervals of all the states
of $\mathcal A_{\min}$
having an outgoing transition labeled with character $a$. 
Consider the intersection graph $G(T_a)$ defined by $T_a$, and let us fix some notation: 

\begin{itemize}
    \item let $n_a = |\{u\in Q\ :\ \delta(u,a)\in Q\}|$ be the number of states of $\mathcal A_{\min}$ having an out-going transition labeled with $a\in \Sigma$,
    \item $N_a$ be the number of subsets $S\subseteq Q$ of cardinality $|S|=p$  such that (i) the intervals $\mathcal I(u)$ of states $u\in S$ mutually overlap (in particular, $S$ is a state of $\mathcal B$), and (ii) $\delta(u,a)\in Q$ for each $u\in S$. Observe that, for such $S$, the triple  $(S,\delta(S,a),a)$ is a transition of $\mathcal B$ if and only if $|\delta(S,a)|=p$.
    \item $M = \sum_{a\in \Sigma} N_a$.
\end{itemize}

Since $\mathcal A_{\min}$ is a DFA, observe that $\sum_{a\in \Sigma}n_a = m$. 
By definition, observe also that $N_a$ upper-bounds the number of transitions of $\mathcal B$ labeled with character $a$, and $M$ upper-bounds the total number of transitions of $\mathcal B$.

\begin{remark}
    By the very definition of the interval graph $G(T_a)$, observe that $N_a$ is equal to the number of $p$-cliques in $G(T_a)$.  On the other hand, by the definition of $N_a$ it follows that the number of transitions labeled by $a$ in $\mathcal B$ is upper-bounded by $N_a$, and that the total number of transitions in $\mathcal B$ is upper-bounded by $M = \sum_{a\in \Sigma} N_a$.
\end{remark}

We now upper-bound the number of $p$-cliques in each graph $G(T_a)$, as a function of its size and the size of its largest clique.

\begin{lemma}\label{lem: number of cliques in interval graph}
    Let $G = (V,E)$ be an interval graph with $t = |V|$ vertices, each vertex $u\in V$ corresponding to an open interval $(\ell_u, r_u)$ with $\ell_u<r_u$. Moreover, assume that the maximum clique of $G$ is of size $p'$. Then, the number of cliques of size $p\le p'$ in $G$ is at most $t \cdot \binom{p'-1}{p-1}$.
\end{lemma}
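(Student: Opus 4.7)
The plan is to use a standard ``charging'' argument, charging each $p$-clique to a canonical interval within it. Fix an arbitrary total order $\prec$ on $V$ that refines the order by left endpoint (so $u\prec v$ whenever $\ell_u<\ell_v$, with ties broken arbitrarily). Every clique $C\subseteq V$ then has a unique $\prec$-maximum element $u^\star(C)$; I will count, for each fixed $u\in V$, the number of $p$-cliques $C$ with $u^\star(C)=u$, and sum over $u$.

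The key definition is the following. For each $u\in V$, let
\[
K_u \;=\; \{u\} \,\cup\, \{\, v\in V \setminus \{u\} : v\prec u \text{ and } r_v > \ell_u \,\}.
\]
I claim that $K_u$ is itself a clique in $G$. Indeed, every member of $K_u$ is an interval whose left endpoint is $\le \ell_u$ and whose right endpoint is $>\ell_u$, so each contains some common point $\ell_u+\epsilon$ for sufficiently small $\epsilon>0$ (this is really just the Helly property for intervals on a line). By hypothesis the clique number of $G$ is $p'$, so $|K_u|\le p'$.

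Next, I will show that the $p$-cliques $C$ with $u^\star(C)=u$ are exactly the sets of the form $\{u\}\cup S$ where $S\subseteq K_u\setminus\{u\}$ and $|S|=p-1$. The forward direction is immediate from the definitions: if $v\in C\setminus\{u\}$ then $v\prec u$ (because $u=u^\star(C)$) and, since the open intervals $(\ell_v,r_v)$ and $(\ell_u,r_u)$ intersect with $\ell_v\le\ell_u$, necessarily $r_v>\ell_u$; hence $v\in K_u$. Conversely, any $(p-1)$-subset $S$ of $K_u\setminus\{u\}$ yields a set $\{u\}\cup S\subseteq K_u$ that is a clique (since $K_u$ is), and $u$ is its $\prec$-maximum by construction. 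Consequently, the number of $p$-cliques charged to $u$ is exactly $\binom{|K_u|-1}{p-1}\le \binom{p'-1}{p-1}$.

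Summing over all $u\in V$ (every $p$-clique is charged to exactly one $u$) gives the desired bound of $t\cdot\binom{p'-1}{p-1}$. I don't foresee a real obstacle: the only subtlety is making sure the tie-breaking and open-endpoint issues align, but once the order $\prec$ is fixed to refine the left-endpoint order, the argument reduces to observing that $K_u$ is a clique, which follows directly from Helly's property on the line.
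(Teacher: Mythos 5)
Your proof is correct. It rests on the same core counting idea as the paper's proof --- every $p$-clique is attributed to the member whose left endpoint is largest, and the remaining $p-1$ members must all be intervals that ``straddle'' that left endpoint, of which there are at most $p'-1$ besides $u$ itself --- but your execution is genuinely more direct. The paper first passes to left-closed intervals, takes a chain decomposition into $p'$ antichain-free chains, enlarges intervals so each chain covers the whole line, and perturbs left endpoints to make them distinct, then sweeps left to right counting $\binom{p'}{p}$ cliques at the first point and $\binom{p'-1}{p-1}$ new ones at each subsequent point; these normalization steps are somewhat delicate (one must check the enlargement and perturbation only add edges). Your charging argument via the tie-broken order $\prec$ and the clique $K_u=\{u\}\cup\{v\prec u: r_v>\ell_u\}$ avoids all of that machinery: the only facts needed are that $K_u$ is a clique (all its intervals contain $\ell_u+\epsilon$ for small $\epsilon>0$) and that the $p$-cliques with $\prec$-maximum $u$ are exactly $\{u\}\cup S$ for $(p-1)$-subsets $S$ of $K_u\setminus\{u\}$, both of which you verify correctly, including the open-interval subtlety that $\ell_v\le\ell_u$ and $(\ell_v,r_v)\cap(\ell_u,r_u)\neq\emptyset$ force $r_v>\ell_u$. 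Both arguments yield the same final bound $t\cdot\binom{p'-1}{p-1}$; yours is arguably the cleaner write-up.
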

\begin{proof}
    We first note that, for every intersection graph over open intervals $\{(\ell_u, r_u):u\in V\}$ can be represented as the intersection graph over left-closed intervals $\{[\ell_u, r_u):u\in V\}$. To see this, observe that it holds for every $u,v(\ne u)\in V$ $(\ell_u, r_u)\cap(\ell_v,r_v)\ne\emptyset\Leftrightarrow r_u> \ell_v\Leftrightarrow [\ell_u, r_u)\cap[\ell_v,r_v)\ne\emptyset$. Hence, in the following, we assume to be given an interval graph over left-closed intervals of integers.
    
    We partition the vertices in $G$ into $p'$ \emph{chains} $P_1, \ldots, P_{p'}(\subseteq V)$ such that for every $i\in[p']$, all distinct vertices $u,v(\ne u)\in P_i$ in the same chain have disjoint intervals, i.e., $[\ell_u, r_u)\cap[\ell_v,r_v)=\emptyset$. It is clear that such a partition exists, take for example the $p'$ chains of a smallest chain decomposition of a maximal co-lex order on the states of $\mathcal A_{\min}$. Since we want to obtain an upper bound of the number of cliques of size $p$ in $G$, we can assume the following. 
    \begin{enumerate}
        \item We assume each chain $P_i$ completely covers the interval $[\min_{u\in V} \ell_u, \max_{u\in V} r_u)$. Otherwise, we can enlarge all intervals arbitrarily while keeping the intervals in the same chain not overlapping. Note that this enlarging transform can only add edges to $G$ and thus the number of cliques of size $p$ can only be increased. 
        \item We also assume that all the left endpoints of the intervals are distinct except the leftmost points at $\min_{u\in V} \ell_u$, i.e., the cardinality of the set $\{\ell_u:u\in V\}$ of all left endpoints is exactly $t-p'+1$. Otherwise, we can subtract each left endpoint $\ell_u$ by some value $\epsilon_u\ge 0$ (subtract also from the right endpoint of the adjacent interval on its left in the same chain to keep the previous assumption hold) in such a way that the edges in $G$ are preserved; i.e., we apply a perturbation that can only adds edges to $G$. Note that such values $\epsilon_u$'s always exist since we can choose arbitrarily small values.
    \end{enumerate}
    Now there is a one-to-one correspondence between cliques of size $p$ in $G$ and the sets of all $p$ intervals from $p$ out of the $p'$ chains that have a pairwise non-empty intersection. Let $x_0<x_1<\cdots<x_{t-p'+1}$ be the $t-p'+1$ pairwise distinct left endpoints. We count the maximal number of such $p$ intervals by counting them for each of the range $[x_i, x_{i + 1})$, with $i=0,\ldots, t - p'$, separately (here $x_{t - p' + 1}:=\max_{u\in V} r_u$). In the range $[x_0, x_1)$, we can find $\binom{p'}{p}$ many sets of overlapping sets of intervals of size $p$. In the range $[x_i, x_{i+1})$ for $i \in [t - p']$ instead,
    we can find $\binom{p'- 1}{p - 1}$ new cliques of size $p$ in $G$ that we have not seen before, namely exactly those cliques that result from combining the \emph{new vertex} (the vertex whose left endpoint is $x_i$) with any subset of size $p' - 1$ of the other $p - 1$  vertices whose intervals contain the point $x_i$. Hence, there are at most
    $
        \binom{p'}{p} + (t - p') \cdot \binom{p'- 1}{p - 1}
        \le t \cdot \binom{p'- 1}{p - 1}
    $
    cliques of size $p$ in $G$.
\end{proof}

By definition, each graph $G(T_a)$ has $n_a$ nodes. 
Moreover, let $p' = \width(\mathcal A_{\min})$. Then, by Corollary \ref{cor: intervals vs colex width}, the largest clique of $G(T_a)$ has size at most $p'$. 
If $p\leq p'$, we apply Lemma \ref{lem: number of cliques in interval graph} with $t = n_a$ and obtain that $G(T_a)$ has at most $n_a\binom{p'-1}{p-1}$ cliques. We conclude that $M=\sum_{a\in\Sigma}N_a\le \sum_{a\in\Sigma}n_a\binom{p'-1}{p-1}= m\binom{p'-1}{p-1}$. 
If, on the other hand,  $p > p'$, then there is no subset of $p$ states $u_1\, \dots, u_p$ of $\mathcal A_{\min}$ whose intervals $\mathcal I(u_1), \dots, \mathcal I(u_p)$ pairwise overlap, thus $\mathcal B$ is empty (i.e.\ it has zero states and zero transitions). 
To sum up: 

\begin{corollary}\label{cor: size of B}
    If $p \leq \width(\mathcal A_{\min})$, then $\mathcal B$ has at most $M \leq m\binom{\width(\mathcal A_{\min})-1}{p-1}$ transitions, where $m$ is the number of transitions in $\mathcal A_{\min}$. Otherwise ($p > \width(\mathcal A_{\min})$), $\mathcal B$ is empty. 
\end{corollary}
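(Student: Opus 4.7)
The plan is to split along the two cases in the statement and combine the remark preceding the corollary (bounding transitions of $\mathcal B$ by $M = \sum_{a \in \Sigma} N_a$, where $N_a$ counts $p$-cliques in the interval graph $G(T_a)$) with Corollary \ref{cor: intervals vs colex width} and Lemma \ref{lem: number of cliques in interval graph}. Essentially, the heavy lifting has already been done by the clique-counting lemma for interval graphs, so the proof here reduces to verifying that the hypotheses of that lemma hold and summing the per-letter bounds.

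First, I would handle the easier case $p > \width(\mathcal A_{\min})$. By Corollary \ref{cor: intervals vs colex width}, no $p$ states of $\mathcal A_{\min}$ can have pairwise-intersecting intervals $\mathcal I(\cdot)$. By the very definition of $\mathcal B$ (Definition \ref{def:semi-DFA B}), this means $Q' = \emptyset$, hence $\mathcal B$ has no states and in particular no transitions.

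For the main case $p \leq \width(\mathcal A_{\min})$, set $p' := \width(\mathcal A_{\min})$ and fix an arbitrary letter $a \in \Sigma$. The graph $G(T_a)$ is an interval graph on $n_a$ vertices (one per state of $\mathcal A_{\min}$ carrying an outgoing $a$-transition), and its maximum clique has size at most $p'$ by Corollary \ref{cor: intervals vs colex width} (since any clique in $G(T_a)$ is in particular a clique in $G(T)$). Invoking Lemma \ref{lem: number of cliques in interval graph} with $t = n_a$ and parameter $p$, the number of $p$-cliques in $G(T_a)$ — which equals $N_a$ — is at most $n_a \binom{p'-1}{p-1}$.

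It then remains to sum over $a \in \Sigma$. Since $\mathcal A_{\min}$ is a DFA, $\sum_{a \in \Sigma} n_a = m$, and therefore
\[
M \;=\; \sum_{a \in \Sigma} N_a \;\leq\; \binom{p'-1}{p-1}\sum_{a \in \Sigma} n_a \;=\; m\binom{\width(\mathcal A_{\min})-1}{p-1}.
\]
Since the number of transitions of $\mathcal B$ is bounded by $M$ (by the remark), this yields the claim. There is no real obstacle here: the only conceptual step is observing that the per-letter interval graphs $G(T_a)$ inherit the clique-size bound from $G(T)$, which is immediate because restricting the set of intervals can only shrink cliques; everything else is algebraic bookkeeping.
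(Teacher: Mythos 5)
Your proposal is correct and follows essentially the same route as the paper: both cases are handled identically, with the $p \le \width(\mathcal A_{\min})$ case reduced to applying Lemma~\ref{lem: number of cliques in interval graph} to each per-letter interval graph $G(T_a)$ (whose clique bound is inherited from Corollary~\ref{cor: intervals vs colex width}) and summing $\sum_{a\in\Sigma} n_a = m$. No gaps.
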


\subsubsection{Algorithm description}

We proceed by describing our final procedure: Algorithm \ref{algo: optimized algo}. The algorithm starts in Lines \ref{line: hopcroft} and \ref{line: intervals} by minimizing the input DFA $\mathcal A$ (thereby obtaining $\mathcal A_{\min} = (Q,\delta, \Sigma, s, F)$) and computing the co-lex intervals $\mathcal I(u)$ for each state of $\mathcal A_{\min}$, using Becker et al.'s procedure~\cite{BeckerCCKKOP23} (described in Section \ref{sec: optimal algo} - paragraph \emph{Computing the intervals $\mathcal I(u)$}). These steps run in $O(m\log n)$ time --- where $m$ and $n$ are the number of transitions and states, respectively, of $\mathcal A_{\min}$.

In Line \ref{line: init B} we initialize the (initially empty) data structure $\mathcal{\bar B}$ implementing the semi-DFA $\mathcal B$ of Definition \ref{def:semi-DFA B}.
$\mathcal{\bar B}$ is a map (dictionary) of adjacency lists $\mathcal{\bar B}  : \kappa(Q') \rightarrow (\kappa(Q') \times \Sigma)^*$ supporting lookups and insertions in constant time w.h.p.  \cite{Dietzfelbinger90}. 
Here, $\kappa$ denotes the Karp-Rabin hash function (see Section \ref{subsec: randomization}), $Q' \subseteq 2^Q$ is the set of states of $\mathcal B$ (see  Definition \ref{def:semi-DFA B}), and $\kappa(Q') = \{\kappa(T)\ :\ T\in Q'\}$ denotes the set of Karp-Rabin fingerprints of $\mathcal B$'s states. 
In other words, $\mathcal{\bar B}$ represents states of $\mathcal B$ compactly as their Karp-Rabin fingerprints. Ultimately, this allows us to speed up by a factor of $p$ (i.e.\ the cardinality of any state of $\mathcal B$) any operation involving manipulating a state/transition of $\mathcal{\bar B}$, at the cost of introducing randomization. 
Given state $T \in Q'$ of $\mathcal B$, after construction (Lines \ref{line:for each char}-\ref{line:power remove u2}) $\mathcal{\bar B}[\kappa(T)]$ will store the adjacency list of state $T$:  with high probability, $(\kappa(T'),a)\in \mathcal{\bar B}[\kappa(T)]$ if and only if $\mathcal B$ contains a transition $\delta(T,a)=T'$.
Importantly, as discussed in detail below, we are also able to build 
$\mathcal{\bar B}$ in $O(1)$ time per edge (with high probability of success).

After initializing $\mathcal{\bar B}$, in Line \ref{line: init KR exponents} we pre-compute $(x^i \mod q)$ for each $i\in [|Q|]$ in $O(|Q|) = O(n)$ time; here, $q\in \Theta(2^w)$ and $x\in [q]$ are the global parameters chosen for Karp-Rabin hashing, see also Section \ref{subsec: randomization}.

Lines \ref{line:for each char}-\ref{line:power remove u2} describe how to build $\mathcal{\bar B}$. For each $a\in\Sigma$, we build all transitions of $\mathcal B$ labeled with character $a$, as follows. 
First of all, in Line \ref{line: radixsort} we collect the co-lex intervals $\mathcal I(u) = (\ell_u,r_u)$ of all states $u$ with $\ell_u\ne r_u$ and having an out-going transition labeled with $a$, i.e.\ such that $\delta(u,a)\in Q$, and then re-arrange them into  triples of the form 
$(\ell_u, 1, u)$, $(r_u, 0, u)$, which are then lexicographically sorted in a list $L_a$. While we did not write this explicitly in the pseudocode in order to keep it simple, the set of all sorted lists 
$\{L_a\ :\ a\in \Sigma\}$ 
can be pre-computed in $O(m)$ time  immediately after the computation of $\mathcal A_{\min}$ in Line~\ref{line: hopcroft} and of the set $\{\mathcal{I}(u)\}$ in Line~\ref{line: intervals}, by radix-sorting the set of quadruples 
$\{(a,\ell_u, 1, u)$, $(a,r_u, 0, u)\ :\ a\in \Sigma, u\in Q, (\ell_u,r_u) = \mathcal I(u), \ell_u\ne r_u, \delta(u,a)\in Q \}$ 
and then, while visiting this sorted list of quadruples from the first to last, appending  $(x,b,u)$ to $L_a$ for each quadruple $(a,x,b,u)$.
We briefly discuss the purpose and meaning of the three components of the triples $(\ell_u, 1, u)$, $(r_u, 0, u)$.
As far as the first component is concerned, recall that $\ell_u$ and $r_u$ are  the ranks of $\inf I_u$ and $\sup I_u$ in the co-lexicographically sorted set $\{ \inf I_u, \sup I_u\ :\ u\in Q\}$, of size at most $2n$. These integers therefore satisfy $\ell_u, r_u \in [2n]$.
The second component (either the integer 0 or 1) of those triples has the role of handling cases where $r_u = \ell_v$ (i.e.\ $\sup I_u = \inf I_v$), for some pair of states $u\neq v$; in that case, these integers force the triple $(r_u, 0, u)$ to appear before $(\ell_v, 1, v)$ in the sorted list. This will, in turn, allow us to ``close'' the interval associated with $u$ before ``opening'' the interval associated with $v$ while scanning the sorted list of triples, thereby implementing the fact that intervals are open-ended.
Finally, the third component of those triples is the state $u \in [n]$ associated with the triple; this is needed later in order to reconstruct the sets of states of cardinality $p$ whose co-lex intervals mutually overlap.

The high-level intuition behind the following Lines \ref{line: new S}-\ref{line:power remove u2} is simple. Assume we have processed triples 
$L_a[1,i-1]$ and that we are about to process triple $L_a[i]$, for $i=1, \dots, n_a$. 
We maintain a set  $S$ containing all states whose intervals have been opened but not yet closed in 
$L_a[1,i-1]$ (in particular, whose intervals mutually overlap), i.e.: 
$S = \{u\ :\ (\ell_u,1,u)\in L_a[1,i-1] \wedge (r_u,0,u)\notin L_a[1,i-1]\}$. 
Set $S$ is implemented with a data structure whose details are described below; here, we first describe the way $S$ is used as a black-box in order to build the transitions of $\mathcal B$.
We show how to update and use $S$, according to the content of $L_a[i]$.
If $L_a[i] = (r_u,0,u)$ for some $u\in Q$ (Line \ref{line:power remove u2}), then we are about to close the interval associated with state $u$. We need to perform only one simple action: remove $u$ from $S$. As described in the next subsection, this operation takes constant time w.h.p.
If, on the other hand, $L_a[i] = (\ell_u,1,u)$ for some $u\in Q$ (Line \ref{line: if open interval}), then we are about to open the interval associated with state $u$. Before inserting $u$ into $S$ (Line \ref{line: insert}), we enumerate all subsets $S'$ of $S$, of cardinality $|S'|=p-1$ and update structure $\mathcal{\bar B}$ with the transition of  $\mathcal B$ starting from state  $S'\cup\{u\}$, labeled with $a$, and landing in state $\delta(S'\cup\{u\},a)$, if and only if $\delta(S'\cup\{u\},a)$ is indeed a state of  $\mathcal B$, i.e.\ if  $|\delta(S'\cup\{u\},a)|=p$. If this condition is satisfied, then this update of $\mathcal{\bar B}$ is performed by the insertion of Line \ref{line: insert}: $\mathcal{\bar B}[\kappa_1].$\textsc{append}$((\kappa_2, a))$, where $\kappa_1 = \kappa(S'\cup\{u\})$ and $\kappa_2 = \kappa(\delta(S'\cup\{u\},a))$. 
The (slightly) more challenging part is to show how to implement $S$ so that it allows enumerating all pairs of Karp-Rabin fingerprints $(\kappa_1,\kappa_2)$ satisfying the above condition (i.e.\ $|\delta(S'\cup\{u\},a)|=p$), in constant time w.h.p. each. 
The data structure for $S$ offering the above-described interface with the the claimed running times is described in Subsection \ref{sec:structure S}. To conclude, after $\mathcal{\bar B}$ has been built, a simple DFS visit can detect acyclicity in time proportional to $\mathcal B$'size (Line \ref{line:acyclic test}).

\begin{algorithm2e}[ht!]
	\caption{\textsc{DeterministicLanguageWidth}($\mathcal A, p$) --- optimized version}
	\label{algo: optimized algo}
	\Input{DFA $\mathcal A$ and integer $p\geq 2$.}
	\Output{``\emph{\textsc{Yes}}'' if $\width^D(\mathcal L(\mathcal A)) < p$, ``\emph{\textsc{No}}'' otherwise.}
    \medskip

    $\mathcal A_{\min} = (Q, \Sigma, \delta, s, F) \gets \textsc{Minimize}(\mathcal A)$\tcp*{Run Hopcroft's algorithm}\label{line: hopcroft}

    $\{\mathcal I(u)\ : u\in Q\} \gets \textsc{ComputeIntervals}(\mathcal A_{\min})$\tcp*{Compute co-lex intervals}\label{line: intervals}

    $\mathcal{\bar B} \gets \emptyset$ \tcp*{Init semi-DFA $\mathcal{\bar B}$: a map $\mathcal{\bar B}  : \kappa(Q') \rightarrow (\kappa(Q') \times \Sigma)^*$ of adjacency lists}\label{line: init B}

    Pre-Compute $(x^i \mod q)$ for each $i\in [|Q|]$\tcp*{$x, q$: the parameters used in Karp-Rabin hashing}\label{line: init KR exponents}

    \BlankLine

    \For{$a \in \Sigma$}{\label{line:for each char}

        $L_a\gets $ \textsc{Sort}( $\{(\ell_u,1,u), (r_u,0,u)\ :  u \in Q \wedge (\ell_u, r_u) = \mathcal I(u) \wedge \ell_u\ne r_u \wedge \delta(u,a)\in Q\}$ )\;\label{line: radixsort}
            
        $S\gets \emptyset$ \tcp*{current queue of co-lex overlapping states}\label{line: new S}

        \For{$(x, b, u)\in L_a$\label{line: for triples}}{
        \uIf{$b=1$}{\label{line: if open interval}
                \BlankLine
                \tcp*[l]{The following loop runs in $O(1)$ time w.h.p. per iteration (see description)}
                \For{\emph{\textbf{each}} 
                $\Big(\kappa(S'), \kappa(\delta(S',a))\Big)$ \emph{\textbf{such that}}
                $S'\subseteq S \wedge |S'|=p-1$\label{line:enumerate pairs}}{
                
	                \If{ $|\delta(S'\cup\{u\},a)|=p$\label{line: check cardinality p}}{
	                	   
		               $\kappa_1 \gets \kappa(S') + x^u \mod q$ \tcp*{Karp-Rabin hash of $S'\cup\{u\}$}\label{line: compute KR1}
		
		                $\kappa_2 \gets \kappa(\delta(S',a)) + x^{\delta(u,a)} \mod q$\tcp*{Karp-Rabin hash of $\delta(S'\cup \{u\},a)$ }\label{line: compute KR2}
		                $\mathcal{\bar B}[\kappa_1].$\textsc{append}$((\kappa_2, a))$\tcp*{Update adjacency list of $S'\cup\{u\}$}\label{line: append}
	                }
                }
                $S.$\textsc{insert}$(u)$\;\label{line: insert}
        }
        \Else{
        $S.$\textsc{delete}$(u)$\label{line:power remove u2}
        }
    }
        
    }

    \If{\textsc{Acyclic}$(\mathcal{\bar B})$}{\label{line:acyclic test}
        \Return{\textsc{Yes}}
    }

    \Return{\textsc{No}}

\end{algorithm2e}

\subsubsection{Running time}\label{sec: running time}

As mentioned above, Lines \ref{line: hopcroft}-\ref{line: init KR exponents} run in $O(m\log n)$ time, where $m$ and $n$ are the number of transitions and states of the input DFA $\mathcal A$, respectively. Above, we have also shown that all calls to the sorting step in Line \ref{line: radixsort} cost overall $O(m)$ time. For a given $a\in \Sigma$ chosen in Line \ref{line:for each char}, the total number of iterations of the \texttt{for each} loop in Line \ref{line:enumerate pairs} is $N_a$ since we are enumerating all sets $S'\cup \{u\}$ of cardinality $|S'\cup \{u\}|=p$ such that (i) all states $v\in S'\cup \{u\}$ have an out-going label $a$ (i.e. $\delta(v,a)\in Q$), and (ii) the co-lex intervals $\mathcal I(v)$ of all  $v\in S'\cup \{u\}$ overlap. It follows that, overall, the \texttt{for each} loop in Line \ref{line:enumerate pairs} runs $M = \sum_{a\in\Sigma} N_a$ times. Since, as we show below, set $S$ can be implemented with a data structure supporting all operations involved in Lines \ref{line:enumerate pairs}-\ref{line: append} in $O(1)$ time w.h.p. for each loop iteration, overall the \texttt{for each} iterations cost $O(M)$ time w.h.p. In the next subsection we also show that the data structure for $S$ supports the operations in Lines \ref{line: insert} and \ref{line:power remove u2} in $O(1)$ time w.h.p. Overall, these operations cost therefore $O(n)$ time w.h.p. Finally, testing acyclicity of $\mathcal B$ in Line \ref{line:acyclic test} costs linear time w.h.p. in its size (at most $M$). 
By Corollary \ref{cor: size of B}, we conclude that Algorithm \ref{algo: optimized algo} runs in $O(m\log n + M) \subseteq O\left(m\log n + m\binom{\width(\mathcal A_{\min})-1}{p-1}\right)$ time w.h.p. and $O(m + M) \subseteq O\left(m + m\binom{\width(\mathcal A_{\min})-1}{p-1}\right)$ words of space.

\subsubsection{Data structure for set $S$}\label{sec:structure S}

We now describe how to implement set $S$ used by Algorithm \ref{algo: optimized algo}.
In the rest of the section, character $a$, chosen in Line \ref{line:for each char}, is a global constant: note that we initialize a new set $S$ every time character $a$ changes.

Algorithm \ref{algo: optimized algo} requires the following operations to be supported by $S$:

\begin{enumerate}
	\item Initialization: $S\gets \emptyset$ (Line \ref{line: new S}).
	\item Insertion of an element not already belonging to $S$: $S.$\textsc{insert}$(u)$ (Line \ref{line: insert}).
	\item Deletion of an element belonging to $S$: $S.$\textsc{delete}$(u)$ (Line \ref{line:power remove u2}).
	\item Given a state $u\notin S$,  enumerate all pairs of fingerprints of the form $\Big(\kappa(S'), \kappa(\delta(S',a))\Big)$, for all subsets
	$S'\subseteq S$ of $S$ such that $|S'|=p-1$ (Line \ref{line:enumerate pairs}).
	\item For each pair $\Big(\kappa(S'), \kappa(\delta(S',a))\Big)$ enumerated in Operation (4), evaluate the condition $|\delta(S'\cup\{u\},a)|=p$ (Line \ref{line: check cardinality p}).
\end{enumerate}

In order for the running times claimed in Section \ref{sec: running time} to hold, 
operations (1-3) have to be supported in $O(1)$ time w.h.p. Operations (4-5) have to be supported in $O(1)$ time w.h.p. for each enumerated pair. We now show how to achieve this.

\paragraph{Intuition}
Clearly, the challenging operations are (4-5).
Intuitively (read below for a more formal description), we manage to support Operations (4-5) in constant time per enumerated pair by maintaining the elements of $S$ in a priority queue, sorted by time of insertion in $S$. 
Let $RECENT \subseteq S$ denote the set of the $p-1$ most recent elements in $S$.
We maintain the 
multi-set $\Delta(RECENT,a) = [\delta(u,a)\ :\ u\in RECENT]$ containing the images through $\delta(\cdot,a)$ of the states in $RECENT$, the Karp-Rabin fingerprint $\kappa_1$ of $RECENT$, and the Karp-Rabin fingerprint $\kappa_2$ of the set $\delta(RECENT,a)$. 
To enumerate the fingerprint pairs in Operation (4), 
corresponding to subsets $S'\subseteq S$ of cardinality $p-1$, 
we first output $\kappa_1$ --- the fingerprint of $S'=RECENT$ --- and then incrementally change $S'$, $\kappa_1$, and $\kappa_2$ until all such subsets of $S$ have been generated. Each generated subset $S'$ differs from the previously-generated one by at most two states, so this enumeration can be performed in constant time per element. Operation (5) is supported in constant time because, while enumerating subsets $S'\subseteq S$, we also incrementally maintain their images (multiset) $\Delta(RECENT,a)$. This allows us computing $|\delta(S'\cup\{u\},a)|=p$ in the claimed running time.

\paragraph{Data Structure}
Let us make more formal the above intuition. 
The data structure representing set $S$ in Algorithm \ref{algo: optimized algo} is formed by the following components:

\begin{enumerate}
    \item A doubly-linked list $L$ storing the elements of $S$ in reversed order of insertion, from the most recent to the oldest. We require for $L$ to be \emph{stable}, i.e.\ once a cell containing element $x\in S$ is created, its address in memory does not change until the cell is deleted.
    Notation $L[i]$ indicates the element of $S$ contained in the $i$-th cell of $L$ (i.e.\ $L[1]$ is the element of $S$ that was inserted most recently). Together with $L$, we always keep the address of its head (storing $L[1]$) and tail (storing $L[|L|]$).
    \item A dictionary denoted as ``$L^{-1}$'' associating to every $x\in S$, the address $L^{-1}[x]$ of the cell  of $L$ containing $x$.
    \item A pointer $PTR_R$ ($R$ stands for \emph{recent}) to the cell of $L$ containing the $(p-1)$-th most recent element of $S$, i.e. $L[p-1]$. If $L$ has less than $p-1$ elements, $PTR_R$ points to the last element, $L[|L|]$.
    \item A dictionary $RECENT$ storing the elements $L[1,\min\{p-1,|L|\}]$. In the description below, we will use the same name $RECENT$ both for this dictionary and for the set it represents. 
    \item A dictionary $\Delta(RECENT,a)$ storing the \emph{multiset} $[\delta(u,a)\ :\ u \in RECENT]$. Again, we will use the notation $\Delta(RECENT,a)$ both for the dictionary and the multiset it represents. 
    \item The Karp-Rabin fingerprint $\bar\kappa_1 = \kappa(RECENT)$.
    \item The Karp-Rabin fingerprint $\bar\kappa_2 = \kappa(\delta(RECENT,a))$. 
\end{enumerate}

We use the dictionary of \cite{Dietzfelbinger90} for $L^{-1}$, $RECENT$, and $\Delta(RECENT,a)$. This data structure supports updates, lookups, and cardinality\footnote{i.e.\ $|D|$ returns the number of \emph{distinct} elements stored in dictionary $D$.} operations in constant-time w.h.p. and uses space linear in the number of elements it contains. 
Importantly, note that dictionary $\Delta(RECENT,a)$ stores a \emph{multiset}; in particular, inserting/deleting an element from the dictionary increases/decreases its multiplicity by 1.

\paragraph{Supporting Operations on $S$}

\paragraph{(1) Initialization}

Operation (1), $S \gets \emptyset$, is implemented by:

\begin{itemize}
	\item Initializing empty $L$, $L^{-1}$, $RECENT$, $\Delta(RECENT,a)$.
	\item Initializing $PTR_R$ as a null pointer,
	\item Initializing $\bar \kappa_1 \gets \bar \kappa_2 \gets 0$. 
\end{itemize}

All these operations take constant time. Note that $\bar \kappa_1 = \bar \kappa_2 = 0$ are indeed the Karp-Rabin fingerprints of $RECENT = \delta(RECENT,a) = \emptyset$.

\paragraph{(2) Insertions} Operation $S.$\textsc{insert}$(u)$ is supported as follows. First of all note that, before calling $S.$\textsc{insert}$(u)$, state $u$ does not belong to $S$ (because the sorted list $L_a$ contains only one triple $(\ell_u,1,u)$, triggering a call to this insertion). 

We prepend $u$ at the beginning (head) of $L$ and insert the address of the new head of $L$ in $L^{-1}[u]$. We distinguish two cases:

\begin{itemize}
	\item[(A)] If $|RECENT|<p-1$: we insert $u$ in dictionary $RECENT$, and $\delta(u,a)$ in dictionary $\Delta(RECENT,a)$.  We update $\kappa_1 \gets \kappa_1 + x^u \mod q$. If the multiplicity of $\delta(u,a)$ in $\Delta(RECENT,a)$ is equal to 1, then we update $\kappa_2 \gets \kappa_2 + x^{\delta(u,a)} \mod q$.
    If $|RECENT|=1$, $PTR_R$ is initialized at the head of $L$. 
	\item[(B)] If $|RECENT|=p-1$. In this case, $RECENT$ is full: 
    after the insertion of $u$, the element pointed by $PTR_R$ will no longer be among the $p-1$ most recent elements in $S$. Let $v$ be the state contained in the cell of $L$ pointed by $PTR_R$. Then, we remove $v$ from $RECENT$, remove (one occurrence of) $\delta(v,a)$ from $\Delta(RECENT,a)$, and update $\kappa_1 \gets \kappa_1 - x^v \mod q$. If the multiplicity of $\delta(v,a)$ in $\delta(RECENT,a)$ is equal to 0, then we update $\kappa_2 \gets \kappa_2 - x^{\delta(v,a)} \mod q$.
    At this point, we run the operations described above in step (A) (that is, we insert $u$ and $\delta(u,a)$ in $RECENT$ and $\Delta(RECENT,a)$, respectively, updating $\kappa_1$ and $\kappa_2$ accordingly).
\end{itemize} 

\paragraph{(3) Deletions} Operation $S.$\textsc{delete}$(u)$ is supported as follows. First of all note that, before calling $S.$\textsc{delete}$(u)$, state $u$ does belong to $S$: because of the way Algorithm \ref{algo: optimized algo} operates,  and the fact that we excluded states $u\in Q$ with $\ell_u=r_u$, the triple $(r_u,0,u)$ --- triggering a call to $S.$\textsc{delete}$(u)$ --- always comes after triple $(\ell_u,1,u)$ --- triggering a call to $S.$\textsc{insert}$(u)$ --- in $L_a$. Moreover, these are the only two triples associated with state $u$.

If $u \notin RECENT$, then deleting $u$ is easy: we simply delete the cell of $L$ with address $L^{-1}[u]$, as well as the corresponding record in $L^{-1}$. On the other hand, if $u \in RECENT$, then $u$ is one of the $p-1$ most recent elements in $S$ so we also need to update $PTR_R$, $RECENT$, $\Delta(RECENT,a)$, $\kappa_1$, and $\kappa_2$.
First of all, we delete $u$ from $RECENT$ and decrease the multiplicity of $\delta(u,a)$ by one unit in dictionary $\Delta(RECENT,a)$.
We apply the corresponding modifications to $\kappa_1$ and $\kappa_2$: the former fingerprint is updated as $\kappa_1 \gets \kappa_1 - x^u \mod q$. The latter fingerprint is modified only if the multiplicity of $\delta(u,a)$ in $\Delta(RECENT,a)$ is equal to zero: in that case, we update $\kappa_2 \gets \kappa_2 - x^{\delta(u,a)} \mod q$.
At this point, 
we delete the cell of $L$ with address $L^{-1}[u]$, as well as the corresponding record in $L^{-1}$, and restore $PTR_R$ appropriately so that it points to $L[\min\{p-1,|L|\}]$. Let $v$ be the node pointed now by $PTR_R$. If $v \notin RECENT$, then we insert $v$ in $RECENT$ and insert $\delta(v,a)$ in $\Delta(RECENT,a)$. We finally update correspondingly the fingerprints $\kappa_1$ and $\kappa_2$: the former fingerprint is updated as $\kappa_1 \gets \kappa_1 + x^v \mod q$. The latter fingerprint is modified only if the multiplicity of $\delta(v,a)$ in $\Delta(RECENT,a)$ is equal to one: in that case, we update $\kappa_2 \gets \kappa_2 + x^{\delta(v,a)} \mod q$.

\paragraph{(4-5) Pair Enumeration and checking $|\delta(S'\cup\{u\},a)|=p$}

Observe that operations (4-5) need to be supported only if $|P|\geq p-1$ (otherwise, there are no pairs of fingerprints to enumerate). We can therefore assume that $PTR_R$ points to the cell of $L$ containing $L[p-1]$.

We describe a recursive procedure enumerating all subsets $S'\subseteq S$, with $|S'|=p-1$, starting from $S'=RECENT$ (of which we already know $\kappa(S') = \kappa_1$ and $\kappa(\delta(S',a)) = \kappa_2$).
Our enumeration procedure will have the following property.
Let $S'_1, S'_2, S'_3\dots,$ be the sequence of enumerated sets.
This sequence of sets is obtained by modifying incrementally $RECENT$, $\Delta(RECENT,a)$, and the corresponding fingerprints $\kappa_1$ and $\kappa_2$. 
In this sense, in the description below $S'_i$ is represented by the current state of those structures at step $i=1, 2, \dots$.
For each $i>1$, set $S'_i$ will be created from $S'_{i-1}$ by removing an element $u$ from $S'_{i-1}$, inserting a new element $v\neq u$ ($v\notin S'_{i-1}$) into it, and finally renaming $S'_{i-1}$ into $S'_i$.
After each such pair of one insertion and one deletion, 
fingerprints $\kappa_1$ and $\kappa_2$, as well as dictionaries $RECENT$ and $\Delta(RECENT,a)$, can be updated in constant time w.h.p. as described in the paragraphs above (\emph{(2) Insertions} and \emph{(3) Deletions}), so that it always holds $\kappa_1 = \kappa(S'_i)$ and $\kappa_2 = \kappa(\delta(S'_i,a))$:

\begin{itemize}
    \item Insertion of a state $u$: we insert $u$ in dictionary $RECENT$, and $\delta(u,a)$ in dictionary $\Delta(RECENT,a)$.  We update $\kappa_1 \gets \kappa_1 + x^u \mod q$. If the multiplicity of $\delta(u,a)$ in $\Delta(RECENT,a)$ is equal to 1, then we update $\kappa_2 \gets \kappa_2 + x^{\delta(u,a)} \mod q$.
    \item Deletion of a state $u$: we delete $u$ from $RECENT$ and decrease the multiplicity of $\delta(u,a)$ by one unit in dictionary $\Delta(RECENT,a)$. We update $\kappa_1 \gets \kappa_1 - x^u \mod q$ and, if the multiplicity of $\delta(u,a)$ in $\Delta(RECENT,a)$ is equal to zero, we update $\kappa_2 \gets \kappa_2 - x^{\delta(u,a)} \mod q$.
\end{itemize}

Once all these structures have been updated for set $S'_i$, notice that it is very easy also to check in $O(1)$ time w.h.p. whether $|\delta(S'_i\cup\{u\},a)|=p$ (with $u\notin S'_i$), i.e. to solve operation (5): this condition is true if and only if (1) $|\Delta(RECENT,a)|=p-1$, and $\delta(u,a)\notin \Delta(RECENT,a)$. 

All we are left to do is to show how to enumerate all subsets $S'_1(=S'), S'_2, S'_3\dots,$ of $S$ (i.e. of list $L$) of cardinality $p-1$ in constant amortized time (w.h.p.) per subset. Intuitively, starting from $S'_1 = L[1,p-1] = RECENT$, we iteratively remove element $L[j]$ from the current set $S'_i$ and insert into it element $L[j+1]$, for $j = p-1, p, \dots, |L|-1$. After this insertion/removal pair, we recursively enumerate all subsets of cardinality $p-2$ of $L[1,j]$ (the base case of the recursion being the enumeration of empty subsets from some range of $L$). In Algorithm \ref{algo: enumerate} we give the pseudocode of this recursive procedure and later argue that it runs in $O(1)$ amortized delay (w.h.p.) per enumerated subset. 
In order to solve Operations (4-5), this procedure must be called as \textsc{EnumerateSubsets}$(L, p-1, PTR_R, RECENT)$.

While in Algorithm \ref{algo: enumerate} we use an array-like interface for $L$ for readability reasons, the reader should keep in mind that $L$ is a doubly-linked list. 
In particular, the \texttt{for} loop in Line \ref{line:for loop enumerate}, accessing $L[j]$ for $j=j,\dots, |L|-1$, is simulated on the linked list by starting from the cell of $L$ pointed by $PTR$, i.e. $L[k]$, and moving iteratively to the next cells $PTR.next$ until we reach the end of the list. 

In Line \ref{line:adjust enumerate} we restore $S'$ to its initial state: before this line, it holds $S' \cap L = L[|L|-k+1,|L|]$ ($S'$ contains the last $k$ elements of $L$). After the operation in Line \ref{line:adjust enumerate}, $S'$ satisfies again $S' \cap L = L[1,k]$ as at the beginning of the procedure (in the main Algorithm \ref{algo: optimized algo}, we also perform the corresponding updates on $\Delta(S',a)$, $\kappa_1$, $\kappa_2$, also in $O(|L|-k)$ time w.h.p.). 
This operation can be easily implemented on the linked list $L$ in $O(|L|-k)$ time, as follows. If $k\le|L|/2$, then $k\le|L|-k$ so we can afford spending $O(k) \subseteq O(|L|-k)$ time deleting the last $k$ elements of $L$ from $S'$, and adding the first $k$ elements of $L$ to $S'$ (recall that we always keep a pointer to the first and last cells of $L$). If, on the other hand, $k>|L|/2$, then the elements $L[|L|-k+1,k]$ belong to $L[|L|-k+1,|L|] \cap L[1,k]$ so they do not need to be touched (i.e. removed and then re-inserted from/into $S'$). As a result, we only need to remove $L[k+1,|L|]$ from $S'$ and insert $L[1,|L|-k]$ into $S$, in total $O(|L|-k)$ time. 
To conclude, we observe that the running time $O(|L|-k)$ of Line \ref{line:adjust enumerate} can be charged to the $|L|-k$ iterations of the \texttt{for} loop. 
We conclude that Algorithm \ref{algo: enumerate} spends $O(1)$ amortized time per enumerated subset of cardinality $k$ of $L$.

\begin{algorithm2e}[ht!]
    \caption{\textsc{EnumerateSubsets}($L, k, PTR, S'$)}
	\label{algo: enumerate}
	\Input{Doubly-linked list $L$ whose elements are all distinct, integer $0<k\leq |L|$, pointer $PTR$ to the cell 
    of $L$ containing $L[k]$, and set $S'$ such that $S'\cap L = L[1,k]$. 
    When the procedure is called for the first time, $S'$ must be equal to $L[1,k]$.
    All arguments are passed by reference.}
	\Output{Enumerate (by reference) all subsets of $L$ of cardinality $k$ in constant amortized delay per subset.}
    \medskip

    \textbf{output} $S'$\;

    \If{$k = 0$ \textbf{or} $k = |L|$}{   
      \Return{}
    }
    
    \For{$j=k,\dots,|L|-1$\label{line:for loop enumerate}}{

        $S' \gets (S' \setminus \{L[j]\})\cup \{L[j+1]\}$\;
        \textsc{EnumerateSubsets}$(L[1,j],  k-1, PTR.prev, S')$\tcp*{$PTR.prev$ is the previous cell of $PTR$.}\label{line:recursive call enumerate}
        
    }

    $S' \gets (S' \setminus L[|L|-k+1,|L|] ) \cup L[1,k]$\;\label{line:adjust enumerate}
    
\end{algorithm2e}

\paragraph{Putting Everything Together}

We can conclude the analysis of Algorithm \ref{algo: optimized algo}. 
As noted previously, if $p' = \width(\mathcal A_{\min}) < p$ then $\mathcal B$ is empty and the running time of Algorithm \ref{algo: optimized algo} is dominated by Lines \ref{line: hopcroft} (minimization) and  \ref{line: intervals} (intervals computation). 
We finally obtain:

\begin{theorem}\label{thm: parameterized running time}
    Let $\mathcal A$ be an input DFA with $m$ transitions and $n$ states, and let 
    $p'=\width(\mathcal{A}_{\min})$ be the width of the minimum automaton equivalent to $\mathcal A$.
    In the word-RAM model with word size $w$, we can solve \textsc{DfaDetWidth} with parameter $p\geq 2$ on $\mathcal A$ 
    in $O\left(m\binom{p'-1}{p-1} + m\log n\right)$ time w.h.p.  and $O\left(m\binom{p'-1}{p-1}\right)$ words of working space, provided that $p\le p'$. If, on the other hand,  $p> p'$, then the running time and working space become $O(m\log n)$ and $O(m)$ words, respectively, both in the worst case. This result holds provided that the working space does not exceed the space budget $2^w$ of the model.
\end{theorem}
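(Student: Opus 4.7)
The plan is to bound the running time and space of Algorithm~\ref{algo: optimized algo} by combining: (i)~the correctness argument, which reduces to verifying that the data structure $\mathcal{\bar B}$ constructed by the algorithm faithfully represents (up to Karp--Rabin collisions) the semi-DFA $\mathcal B$ of Definition~\ref{def:semi-DFA B}, so that by Theorem~\ref{thm: main: width - cycle} the acyclicity of $\mathcal B$ is equivalent to $\width^D(\mathcal L(\mathcal A))<p$; and (ii)~the size bound $M\le m\binom{p'-1}{p-1}$ on $\mathcal B$ given by Corollary~\ref{cor: size of B}. Correctness is immediate once we argue that the distinct fingerprints $\kappa(T)$ stored by $\mathcal{\bar B}$ are in bijection with states of $\mathcal B$ with high probability: since $\mathcal B$ has at most $M$ states and $M$ transitions, and each fingerprint is computed modulo a prime $q\in \Theta(2^w)$, a standard union bound over all pairs of distinct subsets manipulated during the execution yields a collision probability of $2^{-\Omega(w)}\le N^{-c}$ for any constant $c$.

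For the running time, Lines~\ref{line: hopcroft}--\ref{line: init KR exponents} cost $O(m\log n)$ by Hopcroft's algorithm and Becker et al.'s interval-computation routine, and the radix-sort pre-processing of all lists $\{L_a\}_{a\in\Sigma}$ costs $O(m)$ in total. For each $a\in\Sigma$, the number of iterations of the \texttt{for each} loop at Line~\ref{line:enumerate pairs} equals, by definition, $N_a$; summed over $\Sigma$ this gives $M$ iterations. The core technical step is to argue that each such iteration costs $O(1)$ time w.h.p., together with the auxiliary insertions and deletions on $S$, which sum to $O(n)$ iterations each costing $O(1)$ w.h.p. The final DFS acyclicity test on $\mathcal{\bar B}$ then runs in time linear in the size of $\mathcal{\bar B}$, i.e.\ $O(M)$ w.h.p. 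Plugging in Corollary~\ref{cor: size of B} gives $O\bigl(m\log n + m\binom{p'-1}{p-1}\bigr)$ when $p\le p'$, and $O(m\log n)$ otherwise (since then $\mathcal B$ is empty so the main loop contributes nothing beyond the sort).

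The hard part will be justifying the $O(1)$ w.h.p.\ cost per loop iteration, which requires verifying the data-structure for $S$ described in Section~\ref{sec:structure S}. The steps are: (a)~show that the linked list $L$ plus the dictionaries $L^{-1}$, $RECENT$, $\Delta(RECENT,a)$ (all implemented with the structure of~\cite{Dietzfelbinger90}) can be maintained under insertions and deletions of a single element in constant time w.h.p., including the maintenance of the fingerprints $\bar\kappa_1,\bar\kappa_2$ via the incremental update rules of Karp--Rabin hashing; (b)~show that, given any $u\notin S$, the enumeration of the pairs $(\kappa(S'),\kappa(\delta(S',a)))$ over all $(p-1)$-subsets $S'\subseteq S$ can be performed with $O(1)$ amortized delay per pair by Algorithm~\ref{algo: enumerate}, after charging the cost of the restoration step in Line~\ref{line:adjust enumerate} to the $|L|-k$ iterations of the enclosing \texttt{for} loop; (c)~verify that the predicate $|\delta(S'\cup\{u\},a)|=p$ in Line~\ref{line: check cardinality p} reduces to checking $|\Delta(RECENT,a)|=p-1$ and $\delta(u,a)\notin \Delta(RECENT,a)$, both in $O(1)$ time w.h.p.\ using the dictionaries maintained incrementally alongside the enumeration.

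For the space bound, we observe that the dictionary $\mathcal{\bar B}$ stores one record per transition of $\mathcal B$, for a total of $O(M)$ words, while $L$, $L^{-1}$, $RECENT$, $\Delta(RECENT,a)$, and the DFS stack occupy $O(n)$ words; all precomputed tables of Line~\ref{line: init KR exponents} take $O(n)$ words as well. Hence the total working space is $O(m+M)\subseteq O\bigl(m\binom{p'-1}{p-1}\bigr)$ when $p\le p'$, and $O(m)$ otherwise. The assumption $\log_2(W+1)\le w$, discussed in Section~\ref{sec:RAM model}, guarantees that every memory address used by the algorithm fits into $O(1)$ machine words, so all atomic operations of the word-RAM model remain constant time. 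This completes the proof.
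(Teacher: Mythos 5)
Your proposal is correct and follows essentially the same route as the paper: it bounds the preprocessing by $O(m\log n)$, charges the main loop to the $M=\sum_{a}N_a$ iterations with $O(1)$ w.h.p.\ cost each via the incremental Karp--Rabin maintenance and the amortized subset enumeration (including charging the restoration step to the enclosing loop), and then invokes Corollary~\ref{cor: size of B} to get $M\le m\binom{p'-1}{p-1}$, with the $p>p'$ case degenerating to the empty $\mathcal B$. The only negligible slip is attributing $O(n)$ words to the DFS stack on $\mathcal{\bar B}$ (it can reach $O(M)$), but this is absorbed by the $O(m+M)$ total you already account for.
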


\begin{remark}
    If $p' \in O(\log\log n)$ (which includes the case where $\mathcal A_{\min}$ is Wheeler, i.e.\ $\width(\mathcal A_{\min})=1$), then the running time of Theorem \ref{thm: parameterized running time} reaches its minimum of $O(m \log n)$. In any case, observe that $m\binom{p'-1}{p-1} \in O(m(p'-1)^{p-1}) \subseteq O(m^p)$ (since $p'\leq n \leq m$), so this running time is not worse than that of Theorem \ref{thm:simple algorithm}.
\end{remark}

\begin{remark}\label{remark: exp-space in word RAM}
  One reasonable way to remove the limitation that the working space must not exceed $2^w$, is to use a word-RAM model with infinite memory and multiple addressing (Hagerup \cite{HagerupRAM}, for example, assumes infinite memory and double addressing). We stress out that, however, to the best of our knowledge the behavior of exponential-space algorithms in the word RAM model is not properly discussed in the literature (see also \cite{bille2015regular}: a common misconception is that de-referencing a pointer always takes constant time, regardless of the fact that, if the working space is exponential, a pointer takes linear space in the input's size). Since the working space of our algorithm does not exceed $O(m^p)$ and $w \geq \log m$, this model would need $p$ words to access any cell in the working space. Ultimately, this means that the space gets multiplied by a factor of $p$ (since $\mathcal B$ is stored using a pointer-based data structure), and that the running time gets multiplied by no more than $\mathtt{poly}(p)$ (e.g., multiplication between $p$-words integers --- required in the computation of $O(\log(m^p))$-bits Karp-Rabin fingerprints --- now takes no more than $O(p^2)$ time). 
  Even when multiplying the working space and running time of Theorem \ref{thm: parameterized running time} by a $\mathtt{poly}(p)$ factor, we obtain an algorithm running in $O(m^p)$ time and $O(m^p)$ words of working space. This running time still matches the conditional lower bound provided in the next section. 
\end{remark}


\section{Hardness Results}\label{sec: hardness}
In this section, we give hardness results for \textsc{DfaDetWidth} based on different assumptions, showing that the algorithm of Section \ref{sec:algorithm} is conditionally optimal. 
We start by recapping some basics of fine-grained complexity.

\paragraph{ETH and SETH}
We start by recalling the exponential time hypothesis (ETH). In the SAT problem, we are given a Boolean formula $\Phi$ in conjunctive normal form over Boolean variables, i.e., $\Phi$ is a conjunction of clauses, where a clause is a disjunction of literals that are either equal to a Boolean variable $x_i$ or its negation $\neg x_i$. In $k$-SAT each clause is restricted to contain at most $k$ literals. We call $N$ the number of Boolean variables and $M$ the number of clauses.

\begin{definition}[ETH]
    The \emph{exponential time hypothesis (ETH)} states that there exists $\delta>0$ such that 3-SAT on $N$ variables cannot be solved in time $O(2^{\delta N})$.
\end{definition}

The sparsification lemma by Impagliazzo et al.~\cite{ImpagliazzoPZ01} states that we can assume that $M=O(N)$ when building reductions from $k$-SAT. More precisely, this result by Impagliazzo et al.\ entails an algorithm that takes an arbitrary $k$-SAT formula $\Phi$ and outputs formulae $\Phi_1,\ldots, \Phi_t$ such that $\Phi$ is satisfiable if and only if one of the formulae $\Phi_1,\ldots, \Phi_t$ is satisfiable. Moreover, $t\le 2^{\eps N}$ for some $\eps>0$ and those formulae have the property that they are sparse, i.e., have at most $O(N)$ clauses (the constant hidden in the $O$ may depend on $\eps$). The crux of the construction is that an algorithm for $k$-SAT that runs in $O(2^{\delta N})$ time for some $\delta>0$ can be employed on all $t$ formulae $\Phi_i$, resulting in an overall running time of $O(2^{(\delta + \eps) N})$ for checking whether $\Phi$ is satisfiable. Hence an algorithm for sparse $k$-SAT contradicting ETH yields an algorithm for arbitrary $k$-SAT contradicting ETH. We proceed with the strong exponential time hypothesis.

\begin{definition}[SETH]
    The Strong Exponential Time Hypothesis (SETH) states that, for all $\eps>0$, there exists $k\ge 3$ such that $k$-SAT cannot be solved in time $O(2^{(1-\eps) N})$.
\end{definition}

\paragraph{Hardness Result}
Our goal in this section is to prove the following theorem.
\begin{theorem}\label{thm: hardness}
    \textsc{DfaDetWidth} 
    cannot be solved in:
    \begin{enumerate}
        \item\label{enum 1} $\poly(n,m)$ time, unless $P=NP$,
        \item\label{enum 2} $2^{o(\sqrt{m})}$ time, unless ETH fails
        \item\label{enum 3} $m^{o(p)}$ time for any $p\le m^c$, where $c<1/2$ is any constant, unless ETH fails,
        \item\label{enum 4} $O(m^{p -\epsilon})$ time for any constant $p$, where $\epsilon > 0$ is any constant, unless SETH fails.
    \end{enumerate}
    All claims even hold if the alphabet is binary.
\end{theorem}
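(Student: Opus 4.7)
The plan is to prove all four lower bounds via reductions that lean on Theorem~\ref{thm: main: width - cycle}: showing $\width^D(\mathcal L(\mathcal A)) \ge k$ amounts to exhibiting, in $\mathcal A_{\min}$, $k$ pairwise distinct states sharing a common self-loop word $\gamma$ and having pairwise overlapping co-lex intervals. Parts (1)-(3) will follow from a reduction from (multicolored) $k$-Clique --- which is NP-hard and, under ETH, cannot be solved in $n^{o(k)}$ time --- and part (4) from the analogous reduction from $k$-Orthogonal Vectors, for which SETH forces $n^{k-o(1)}$ time already at constant $k$.

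The central construction takes a colorful $k$-Clique instance $G = (V_1 \uplus \dots \uplus V_k, E)$ and builds, over the binary alphabet, a DFA $\mathcal A$ with $m = O((nk + |E|)\log n)$ transitions such that $\mathcal A_{\min}$ admits a $k$-witness iff $G$ has a colorful $k$-clique. The blueprint: introduce one \emph{vertex state} $s_v$ per vertex $v$, each carrying a self-loop on a single universal character $a$ (so $\gamma = a$ satisfies the cycle condition at every $s_v$ automatically); then shape the incoming paths, through binary codes encoding $v$'s color and neighborhood and through auxiliary ``interval-carrier'' states indexed by edges, so that $\mathcal I(s_u) \cap \mathcal I(s_v) \ne \emptyset$ iff $(u,v) \in E$. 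A pairwise-distinguishing suffix gadget attached to each $s_v$ ensures the vertex states survive minimization. The $k$-OV reduction follows the same template, encoding each vector in a binary signature of length $d = \polylog n$ so that intersection of $k$ intervals corresponds to coordinate-wise orthogonality, yielding $m = \tilde O(n)$. Then part (1) is immediate; for part (2) one composes with the Impagliazzo-Paturi-Zane sparsification of 3-SAT to obtain $m = O(N^2)$ and $k = \Theta(N)$, so a $2^{o(\sqrt m)} = 2^{o(N)}$ algorithm for \textsc{DfaDetWidth} would break ETH; part (3) follows by taking $p = k = m^c$ for any $c < 1/2$ and invoking the $n^{o(k)}$ ETH lower bound on $k$-Clique; part (4) follows since an $O(m^{k-\epsilon})$ algorithm would, via $m = \tilde O(n)$, solve $k$-OV in $n^{k-\epsilon}\polylog n$ time, contradicting SETH.

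The principal obstacle is the interval-intersection encoding: arbitrary graphs are not interval graphs, so adjacency cannot be realized merely by intersections of 1-D intervals carried by the vertex states themselves. I plan to circumvent this by a layered construction in which each edge $(u,v) \in E$ contributes its own auxiliary co-lex window that locally forces $\mathcal I(s_u)$ and $\mathcal I(s_v)$ to overlap without causing any undesired intersection elsewhere, and to prove that any $k$-witness must consist of exactly one vertex state per color class corresponding to a genuine clique. A secondary subtlety is ruling out spurious $k$-witnesses arising from auxiliary states themselves: the cycle word $\gamma = a$ must be admitted only by the $s_v$'s, which can be arranged by letting $a$ appear on exactly one self-loop per vertex state and nowhere else in $\mathcal A$, so that any $k$ states closing a cycle on $\gamma$ are automatically of the form $s_{v_1}, \dots, s_{v_k}$.
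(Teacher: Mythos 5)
Your reduction hinges on realizing the adjacency relation of an arbitrary (multicolored) graph, respectively the orthogonality relation of a $k$-OV instance, as the pairwise-intersection relation of the co-lex intervals $\mathcal I(s_v)$. This step cannot work. By Definition~\ref{def:interval of u} each $\mathcal I(s_v)$ is a single open interval on the co-lexicographically ordered line, so the relation $\mathcal I(s_u)\cap\mathcal I(s_v)\neq\emptyset$ is always the edge relation of an \emph{interval graph}; moreover, by the Helly property in one dimension, $k$ intervals pairwise intersect if and only if they share a common point. Hence the family of $k$-subsets of vertex states with pairwise overlapping intervals is exactly the clique family of an interval graph and can never coincide with the cliques of an arbitrary $G$ (already $C_4$ is not an interval graph), and it certainly cannot encode the inherently $k$-ary orthogonality predicate of $k$-OV. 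The "per-edge auxiliary co-lex window" idea does not rescue this: auxiliary states may add strings to $I_{s_u}$, but they only move the two endpoints $\inf I_{s_u}$ and $\sup I_{s_u}$; the intersection pattern remains that of intervals. This is precisely why the paper's reduction inverts the roles of the two conditions of Theorem~\ref{thm: main: width - cycle}: the interval-overlap condition is made \emph{trivially true} for all candidate cycle states (each $v_0^{r,a_\ell}$ receives two incoming strings suffixed by $\beta 00$ and $\beta 10$, which forces all these intervals to mutually interleave regardless of $\beta$), while the hard combinatorial content is carried entirely by the requirement that the $p$ cycles spell a \emph{common} word. Concretely, the paper reduces from $k$-SAT via a split-and-list construction: the $N$ variables are partitioned into $p$ blocks, one cycle gadget is built for each of the $2^{N/p}$ assignments to each block, and position $j$ of the common cycle label names a literal certifying clause $C_j$; a common label for $p$ cycles, one per block, then exists iff $\Phi$ is satisfiable.

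Two further points. First, even granting a working clique gadget, your source problems do not yield all four bounds: part~(2) requires an instance with $m=O(N^2)$ and $p=\Theta(N)$ obtained directly from (sparsified) 3-SAT, and part~(4) requires the exponent of the lower bound to match $p$ exactly, which $k$-Clique does not provide (its SETH status is different from $k$-OV's); the paper gets all four claims uniformly from the single $k$-SAT reduction by tuning $p$ (e.g.\ $p=N$ for parts (1)--(2), constant $p$ with the full $2^{N/p}$ blow-up for part~(4)). Second, your sketch omits the two non-trivial verifications the paper must carry out: that the constructed DFA is already minimal (achieved by making the reversed automaton deterministic, so states are Myhill--Nerode distinguishable), and that the binary-alphabet recoding of the labels $[k]\cup\{\#\}$ preserves both forward and reverse determinism and does not create spurious matching cycles. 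As it stands, the proposal's central encoding is impossible, so the argument does not go through.
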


\paragraph{Overview}
Given a $k$-SAT formula $\Phi$, we build the following input DFA $\mathcal A=(Q, \Sigma, \delta, s, F)$ for \textsc{DfaDetWidth} with parameter $p>2$. We start by describing a reduction with $\Sigma=\{0, \ldots, k, \#\}$ and later on argue how the construction can be transformed into one with a binary alphabet. We start from a $k$-SAT formula $\Phi = C_1 \wedge \ldots \wedge C_M$ with $M$ clauses $C_j = l_{j, 1} \vee \ldots \vee l_{j, k}$ for $j\in [M]$ over $N$ variables $x_1,\ldots, x_N$. For a literal $l$, we define $x(l)$ as the unique variable occurring in the literal.
We split the $N$ variables into $p$ blocks of $N/p$ variables each (we assume for simplicity that $N$ and $p$ are both powers of 2 and thus also that $N$ is divisible by $p$), more precisely, for $r\in [p]$, the $r$'th block of variables consists of the variables $x_i$ with $i\in B_r := [(r - 1)\cdot N/p + 1, r\cdot N/p]$. We sometimes abuse notation and say that the variable $x_i\in B_r$ if $i\in B_r$. We use $A:=\{a_1, \ldots, a_T\}$ to denote the set of all possible, i.e., $T=2^{N/p}$ many, assignments to the $N/p$ variables of any of the $p$ blocks. 

Now, the set of states $Q$ of $\mathcal A$ consists of disjoint sets $V^{out}, V^{in}, I, C$. The core of our reduction is the set $C$ that consists of $p \cdot T$ cycle-like subgraphs $S_{r, a_\ell}$, one for each $r\in [p]$ and for each assignment $a_\ell$ with $\ell\in [T]$. The graph $S_{r, a_\ell}$ contains $M+N+1$ nodes $v^{r, a_\ell}_0, \ldots, v^{r, a_\ell}_{M + N}$. The two parts $v^{r, a_\ell}_0, \ldots, v^{r, a_\ell}_{M}$ and $v^{r, a_\ell}_{M + 1}, \ldots, v^{r, a_\ell}_{M + N}$ are constructed in two different ways:
\begin{enumerate}
    \item For $j\in [M]$, $r\in [p]$ and an assignment $a_\ell$, there are at most $k$ transitions between the nodes $v^{r, a_\ell}_{j - 1}$ and $v^{r, a_\ell}_j$ and their possible labels are $[k]$. Let $h\in [k]$, then there are two cases: (1) Assume that $x(l_{j, h})\notin B_r$, then $\delta(v^{r, a_\ell}_{j - 1}, h) = v^{r, a_\ell}_j$. (2) If $x(l_{j, h})\in B_r$ instead, we have $\delta(v^{r, a_\ell}_{j - 1}, h) = v^{r, a_\ell}_j$ if and only if assignment $a_\ell$ makes literal $l_{j, h}$ true. In other words, the $j$'th pair of the sequence $v^{r, a_\ell}_0, \ldots, v^{r, a_\ell}_{M}$ may be only connected by transitions with labels $[k]$. They are always connected with a transition labeled $h$ if literal $l_{j, h}$ does not contain a variable from the block $B_r$, otherwise they are connected with a transition labeled $h$ if and only if the assignment makes this literal and thus the clause true.
    \item For the sequence $v^{r, a_\ell}_{M + 1}, \ldots, v^{r, a_\ell}_{M + N}$ we have the following transitions between subsequent nodes. For $i$ such that $M + i\in [M + 1, M + N]$, we have $\delta(v^{r, a_\ell}_{M + i - 1}, b) = v^{r, a_\ell}_{M + i}$ for $b\in \{0,1\}$ if $i \notin B_r$. For $i\in B_r$ instead we have $\delta(v^{r, a_\ell}_{M + i - 1}, b) = v^{r, a_\ell}_{M + i}$ if and only if $a_\ell$ assigns $b$ to variable $x_i$. As a result the $i$'th portion of the second part of the sequence spells exactly the assignment $a_\ell$.
\end{enumerate}
In addition we have $\delta(v^{i, a_\ell}_{M + N}, \#) = v^{i, a_\ell}_0$. There are no other transitions within $S_{r, a_\ell}$. 
We refer the reader to Figure~\ref{fig: one reduction} for an illustration of this construction for an example of a $k$-SAT formula. 

\begin{figure}[ht!]
  \centering{
    \resizebox{0.9\columnwidth}{!}{
      \input{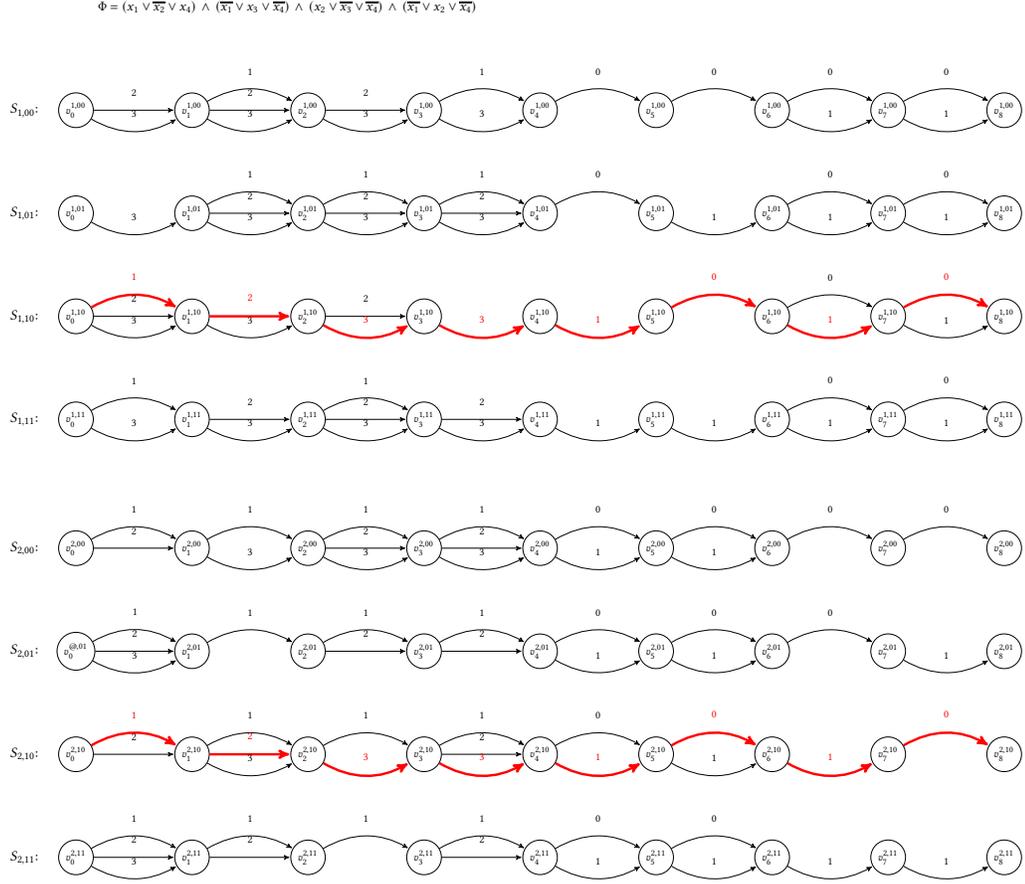}
    }
  }
\caption{The construction of the subgraphs in the set $C$ for $p=2$ for a 3-SAT formula $\Phi = (x_1 \vee \overline{x_2} \vee  x_4) \,\wedge\,(\overline{x_1} \vee x_3 \vee \overline{x_4}) \,\wedge\, (x_2 \vee \overline{x_3} \vee \overline{x_4})\,\wedge\, (\overline{x_1} \vee x_2 \vee \overline{x_4})$ with $N=4$ variables and $M=4$ clauses.
There are $p=2$ blocks, the first block consists of variables $x_1, x_2$, while the second block consists of variables $x_3, x_4$. Each block contains one subgraph for each of the $T = 2^{N/p} = 2^{4/2} = 4$ assignments. For better readability, we omit the transition labeled $\#$ from $v_{8}^{r, u}$ to $v_0^{r, u}$ for each block $r$ and assignment $u$. We observe that the assignment $x_1=1, x_2=0, x_3 = 1, x_4=0$ satisfies $\Phi$. Consequently, we find $p=2$ equally labeled cycles in the subgraphs $S_{1, 10}$ and $S_{2, 10}$. The string that is spelled by these cycles is equal to $\alpha = 1233\,1010\#$. Character $h$ at position $j$ (for $j\in [M]=[4]$) in this string indicates that clause $C_j$ is made valid by an assignment $u$ to the $h$'th literal of this clause. The assignment $u$ is that one for which we find the cycle in its corresponding subgraph. For example, the first $1$ in the string indicates that $C_1$ is made true by an assignment to its first literal; as we found the cycle labeled $\alpha$ in $S_{1, 10}$, we can conclude that the assignment to the variables of the first block that makes $C_1$ true is $10$. Indeed, the literal that makes $C_1$ true is $x_1$. For $\ell = M + i\in [M + 1, M + N]$ instead, the string $\alpha$ spells the assignment to variable $x_i$.}
\label{fig: one reduction}
\end{figure}

The rationale behind the construction is to make the following lemma hold.
\begin{lemma}\label{lemma: cycle iff kSAT}
    The set $\{S_{r, a_\ell}\}_{r\in [p], \ell\in [T]}$ contains $p$ disjoint cycles, one from each set $\{S_{1, a_\ell}\}_{\ell\in [T]}, \ldots, \{S_{p, a_\ell}\}_{\ell\in [T]}$, spelling the same string if and only if the $k$-SAT formula $\Phi$ is satisfiable. 
\end{lemma}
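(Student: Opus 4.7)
The plan is to prove both directions by translating between satisfying assignments of $\Phi$ and compatible cycles in the subgraphs. The key observation is that a cycle in any $S_{r, a_\ell}$ is a string of length exactly $M+N+1$, where the last character is forced to be $\#$, the first $M$ characters are in $[k]$ (each labeling a transition in the ``clause part''), and the next $N$ characters are in $\{0,1\}$ (each labeling a transition in the ``assignment part''). I would start by fixing this structural observation and noting that each variable $x_i$ belongs to exactly one block $B_{r^*}$.

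For the direction ($\Leftarrow$), suppose $\Phi$ is satisfied by a global assignment. For each block $r$, let $a_{\ell_r}$ be its restriction to $B_r$. I would define the string $\alpha=h_1 h_2 \cdots h_M\, b_1 b_2\cdots b_N \#$, where $b_i$ is the value the global assignment gives to $x_i$, and $h_j\in[k]$ is any index such that literal $l_{j,h_j}$ is made true by the global assignment (such $h_j$ exists because the assignment satisfies $C_j$). I would then verify that $\alpha$ labels a cycle in every $S_{r,a_{\ell_r}}$: for position $j$, the required transition exists because either $x(l_{j,h_j})\notin B_r$ (so the transition exists unconditionally) or $x(l_{j,h_j})\in B_r$, in which case $a_{\ell_r}$ agrees with the global assignment and hence makes $l_{j,h_j}$ true; for position $M+i$, similarly, either $i\notin B_r$ or $a_{\ell_r}(x_i)=b_i$. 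These $p$ cycles lie in pairwise disjoint subgraphs, so they are disjoint.

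For the direction ($\Rightarrow$), suppose we are given $p$ disjoint cycles, one in $S_{r,a_{\ell_r}}$ for each $r\in[p]$, all spelling a common string $\alpha$. I would first use the positions $M+1,\ldots,M+N$ of $\alpha$ to reconstruct a global assignment: for each $r$ and each $i\in B_r$, the transition in $S_{r,a_{\ell_r}}$ at position $M+i$ labeled $\alpha[M+i]$ can only exist if $a_{\ell_r}(x_i)=\alpha[M+i]$, so the bits of $\alpha$ in positions $M+1,\ldots,M+N$ encode a single, well-defined global assignment whose restriction to $B_r$ is exactly $a_{\ell_r}$. I would then show that this global assignment satisfies $\Phi$: for each clause $C_j$, let $h_j=\alpha[j]$ and let $r^*$ be the unique block with $x(l_{j,h_j})\in B_{r^*}$. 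The transition at position $j$ in $S_{r^*,a_{\ell_{r^*}}}$ exists precisely because $a_{\ell_{r^*}}$ makes $l_{j,h_j}$ true, and hence so does the global assignment, satisfying $C_j$.

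The main subtlety, and what the construction is designed to enforce, is ensuring that the choice of cycles encodes a \emph{single consistent} assignment across blocks rather than $p$ independent ones. The ``assignment part'' of length $N$ in each subgraph is precisely what pins this down: because each of the $p$ cycles spells the same $N$ bits, the per-block assignments $a_{\ell_r}$ must agree on a global $N$-bit string, and the clause part then guarantees that this global string satisfies every clause. I expect this consistency argument --- tying together the $p$ per-block cycles via the shared second half of $\alpha$ --- to be the only nontrivial step; the rest is bookkeeping on the transition rules defining $S_{r,a_\ell}$.
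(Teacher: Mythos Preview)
Your proposal is correct and follows essentially the same approach as the paper's proof: in both directions you translate between a satisfying assignment and the common label string $\alpha=h_1\cdots h_M\,b_1\cdots b_N\#$, using the clause part to witness satisfaction via the block containing the relevant variable and the assignment part to pin down a single consistent global assignment. The only cosmetic difference is that the paper, in the cycles-to-assignment direction, rederives that no two subgraphs $S_{r,u}$ and $S_{r,u'}$ in the same block can spell the same string (hence the $p$ cycles must lie in distinct blocks), whereas you take ``one from each set'' directly from the hypothesis; both are valid given the lemma statement.
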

\begin{proof}
    Let $\Phi = C_1 \wedge \ldots \wedge C_M $ be the $k$-SAT formula. Now assume that $\Phi$ has a satisfying assignment. Let $\{u_r\}_{r\in [p]}$ be the corresponding assignments to the $p$ blocks of variables. Clearly,  
    for every clause $C_j$, $j\in [M]$, there exists a literal, say $l_{j, h_j}$, such that $C_j$ is made true by the assignment to the block that contains variable $x(l_{j, h_j})$. Let us call this block $B_{r_j}$.
    It now follows that each of the $p$ subgraphs $S_{r, u_{r}}$ for $r\in [p]$ contains a cycle labeled $h_1 \ldots h_M \, u_1 \ldots u_p\#$. In order to see this, we start with the first part, i.e., the string $h_1 \ldots h_M$. Observe that, for every $j\in [M]$, each $S_{r, u_{r}}$ with $r\neq r_j$ contains an edge labeled $h_j$ by construction. Now, let $r=r_j$, i.e., $x_{h_j} \in B_{r}$, then $S_{r, u_{r}}$ is guaranteed to contain the edge labeled $h_j$ because $C_j$ is made true by the assignment $u_{r}$ to the variables of block $B_r$. Furthermore, it is clear that by construction all subgraphs $S_{r, u_{r}}$ contain a path labeled $u_1 \ldots u_p$ in order to spell the second part of the string.

    Conversely, assume that there exist $p$ disjoint cycles spelling the same string, say $\alpha$. For a given block $r\in [p]$, observe that the construction guarantees that no two subgraphs $S_{r, u}$ and $S_{r, u'}$ spell the same string, as they contain different transitions between the nodes $v^{r, u}_{M + i - 1}$ and $v^{i, u}_{M + i}$ and between $v^{r, u'}_{M + i - 1}$ and $v^{r, u'}_{M + i}$ respectively, for $i\in B_r$, i.e., in the portion of the second part of the subgraph that spells the assignments $u$ and $u'$. The set of $p$ cycles thus contains exactly one out of the $N/p$ subgraphs of each block, i.e., one subgraph $S_{r, u_r}$ for each variable $r\in [p]$. The set of cycles thus corresponds to an assignment, namely the assignment that assigns $u_r$ to the variables of the $r$'th block.
    Let now $C_j$ be any clause in $\Phi$ and let $h\in [k]$ be the $j$'th character in $\alpha$. This implies that each out of the $p$ subgraphs $\{S_{r, u_r}\}_{r\in [p]}$ contains a transition labeled $h$ at the $j$'th position (starting to count positions from nodes with index $0$). In particular, it follows that the subgraph $S_{r, u_r}$ for which $x(l_{j, h})\in B_r$ contains a transition labeled $h$ at position $j$ and thus it follows by the construction that the assignment $u_r$ makes $C_j$ true. We conclude that all clauses are satisfied and this concludes the proof.
\end{proof}

In order to complete the reduction, it remains to describe how to embed these subgraphs $\{S_{r, a_\ell}\}_{r\in [p], \ell\in [T]}$ into a DFA $\mathcal A$ in such a way that the existence of $p$ cycles labeled by the same string become a witness of $\mathcal L(\mathcal A)$ being of width at least $p$ following Theorem~\ref{thm: main: width - cycle}.
Our construction thus has to ensure that 
\begin{enumerate} [(1)]
    \item the graph is a connected DFA,
    \item corresponding (i.e., same distance from $\#$ in the subgraphs) nodes $v,w$ in any pair of subgraphs $S_{r, a_\ell}$ and $S_{r', a_{\ell'}}$ have a non-empty co-lexicographic intersection $\mathcal I(v) \cap \mathcal I(w)$,
    \item the DFA is indeed minimum for its recognized language, and 
    \item the alphabet can be reduced to $\{0,1\}$.
\end{enumerate}

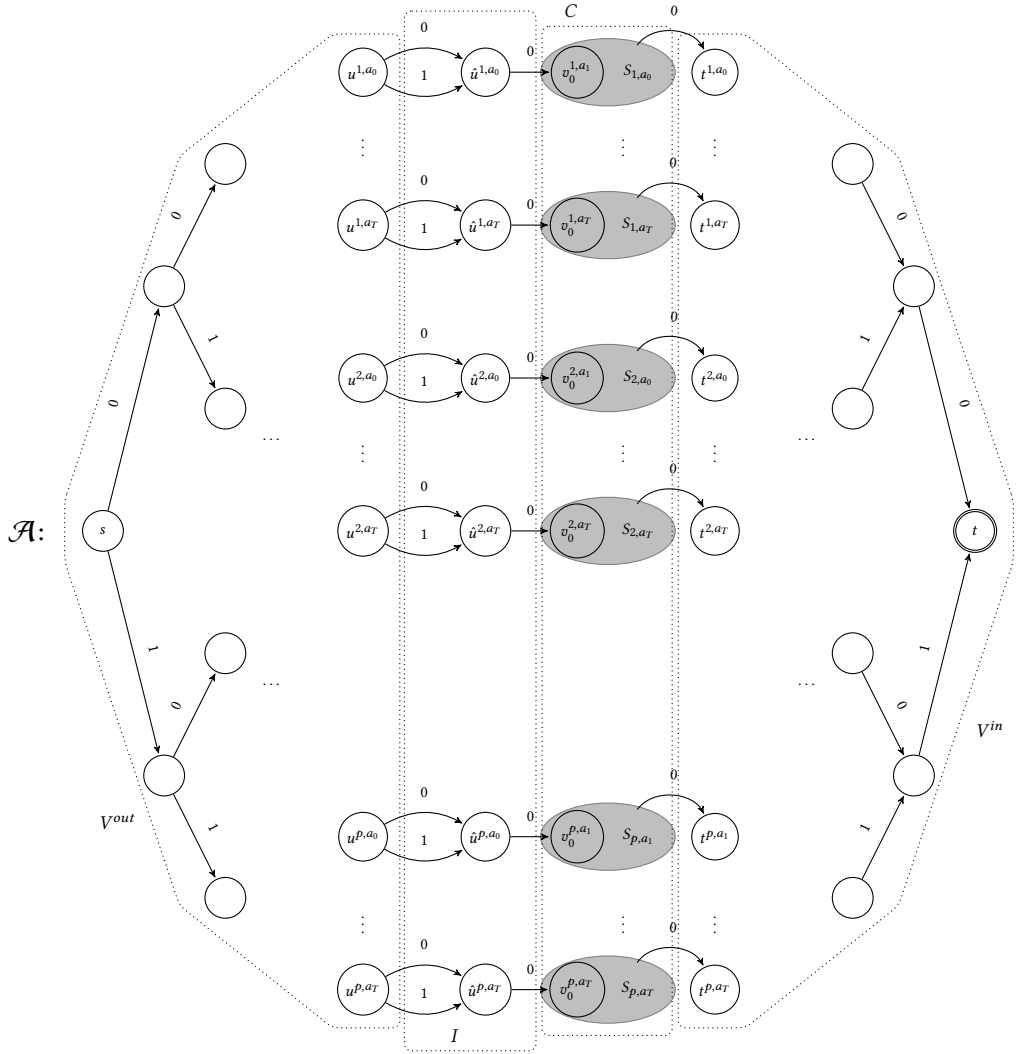
\begin{figure}[ht!]
  \centering{
    \resizebox{0.9\columnwidth}{!}{
      \begin{tikzpicture}[
          scale=.6,
          ->,
          >=stealth',
          shorten >=1pt,
          auto,
          semithick,
          every node/.style={minimum size=8mm}
    ]
    \node at (-2.0, 0) {\Huge $\mathcal A$:};

    \node[circle, draw] (s)   at (0.5,0)  {$s$};
    \node[circle, draw] (b)   at (2.5,-8)  {};
    \node[circle, draw] (a)   at (2.5,8)  {};
    \path (s) edge [sloped] node {$1$} (b);
    \path (s) edge [sloped] node {$0$} (a);
    \node[circle, draw] (a1)   at (4.5,12)  {};
    \node[circle, draw] (a2)   at (4.5,4)  {};
    \path (a) edge [sloped] node {$0$} (a1);
    \path (a) edge [sloped] node {$1$} (a2);
    \node[circle, draw] (b1)   at (4.5,-4)  {};
    \node[circle, draw] (b2)   at (4.5,-12)  {};
    \path (b) edge [sloped] node {$0$} (b1);
    \path (b) edge [sloped] node {$1$} (b2);

    \node (d) at (6, 3) {$\ldots$};
    \node (d) at (6, -5) {$\ldots$};

    \path [draw = black, rounded corners, inner sep=100pt, dotted]
        (-0.75, 1) -- (3.0, 12.25) -- (8.0, 16.25) -- (10.2, 16.25) -- (10.2, -16.25) -- (8.0, -16.25) -- (3.0, -12.25) -- (-0.75, -1) -- cycle;
    \node  (empty)    at (1.0, -9.5)  {\Large $V^{out}$};

    \node[circle, draw] (x1)   at (9, 15)  {$u^{1, a_0}$};
    \node at (9, 12.7) {$\vdots$};
    \node[circle, draw] (xN)   at (9, 10)  {$u^{1, a_{{T}}}$};
    
    \node[circle, draw] (x12)   at (9, 5)  {$u^{2, a_0}$};
    \node at (9, 2.7) {$\vdots$};
    \node[circle, draw] (xN2)   at (9, 0)  {$u^{2, a_{{T}}}$};
    
    \node[circle, draw] (y1)   at (9, -10)  {$u^{p, a_0}$};
    \node at (9, -12.7) {$\vdots$};
    \node[circle, draw] (yN)   at (9, -15)  {$u^{p, a_T}$};

    \begin{scope}[xshift=-2cm]
        \node[circle, draw] (a1p)   at (15, 15)  {$\hat u^{1, a_0}$};
        \node[circle, draw] (aNp)   at (15, 10)  {$\hat u^{1, a_T}$};

        \node[circle, draw] (c1p)   at (15, 5)  {$\hat u^{2, a_0}$};
        \node[circle, draw] (cNp)   at (15, 0)  {$\hat u^{2, a_T}$};
        
        \node[circle, draw] (b1p)   at (15, -10)  {$\hat u^{p, a_0}$};
        \node[circle, draw] (bNp)   at (15, -15)  {$\hat u^{p, a_T}$};        
    
        \begin{scope}[xshift=3cm]
            \draw[fill=lightgray, draw=gray] (16, 15) ellipse (2.2cm and 1.1cm); 
            \node () at (17, 15) {$S_{1, a_0}$};
            \node at (16.5, 12.7) {$\vdots$};
            \draw[fill=lightgray, draw=gray] (16, 10) ellipse (2.2cm and 1.1cm); 
            \node () at (17, 10) {$S_{1, a_T}$};

            \draw[fill=lightgray, draw=gray] (16, 5) ellipse (2.2cm and 1.1cm); 
            \node () at (17, 5) {$S_{2, a_0}$};
            \node at (16.5, 2.7) {$\vdots$};
            \draw[fill=lightgray, draw=gray] (16, 0) ellipse (2.2cm and 1.1cm); 
            \node () at (17, 0) {$S_{2, a_T}$};
            
            \draw[fill=lightgray, draw=gray] (16, -10) ellipse (2.2cm and 1.1cm); 
            \node () at (17, -10) {$S_{p, a_1}$};
            \node at (16.5, -12.7) {$\vdots$};
            \draw[fill=lightgray, draw=gray] (16, -15) ellipse (2.2cm and 1.1cm); 
            \node () at (17, -15) {$S_{p, a_T}$};
        
            \node[circle, draw] (a1)   at (15, 15)  {$v_0^{1, a_1}$};
            \node[circle, draw] (aN)   at (15, 10)  {$v_0^{1, a_T}$};

            \node[circle, draw] (c1)   at (15, 5)  {$v_0^{2, a_1}$};
            \node[circle, draw] (cN)   at (15, 0)  {$v_0^{2, a_T}$};
            
            \node[circle, draw] (b1)   at (15, -10)  {$v_0^{p, a_1}$};
            \node[circle, draw] (bN)   at (15, -15)  {$v_0^{p, a_T}$};
        
            \path   (y1) edge [sloped, bend left] node {$0$} (b1p)
                    (y1) edge [sloped, bend right] node {$1$} (b1p);
            \path   (yN) edge [sloped, bend left] node {$0$} (bNp)
                    (yN) edge [sloped, bend right] node {$1$} (bNp);
            
            \path   (x12) edge [sloped, bend left] node {$0$} (c1p)
                    (x12) edge [sloped, bend right] node {$1$} (c1p);
            \path   (xN2) edge [sloped, bend left] node {$0$} (cNp)
                    (xN2) edge [sloped, bend right] node {$1$} (cNp);
                    
            \path   (x1) edge [sloped, bend left] node {$0$} (a1p)
                    (x1) edge [sloped, bend right] node {$1$} (a1p);
            \path   (xN) edge [sloped, bend left] node {$0$} (aNp)
                    (xN) edge [sloped, bend right] node {$1$} (aNp);
            
            \path   (a1p) edge [sloped] node {$0$} (a1)
                    (aNp) edge [sloped] node {$0$} (aN)
                    
                    (c1p) edge [sloped] node {$0$} (c1)
                    (cNp) edge [sloped] node {$0$} (cN)
                    
                    (b1p) edge [sloped] node {$0$} (b1)
                    (bNp) edge [sloped] node {$0$} (bN);
        
            \path [draw = black, rounded corners, inner sep=100pt, dotted]
            (9.35, 17.0) -- (13.65, 17.0) -- (13.65, -17.0) -- (9.35, -17.0) -- cycle ;
            \node  (empty)    at (11, -16.5)  {\Large $I$};
        
            \path [draw = black, rounded corners, inner sep=100pt, dotted]
            (13.85, 16.5) -- (18.1, 16.5) -- (18.1, -16.5) -- (13.85, -16.5) -- cycle ;
            \node  (empty)    at (14.8, 17)  {\Large $C$};
        
            \node[circle, draw] (t1)   at (19.5, 15)  {$t^{1, a_0}$};
            \node at (19.5, 12.7) {$\vdots$};
            \node[circle, draw] (tN)   at (19.5, 10)  {$t^{1, a_T}$};

            \node[circle, draw] (t12)   at (19.5, 5)  {$t^{2, a_0}$};
            \node at (19.5, 2.7) {$\vdots$};
            \node[circle, draw] (tN2)   at (19.5, 0)  {$t^{2, a_T}$};
            
            \node[circle, draw] (z1)   at (19.5, -10)  {$t^{p, a_1}$};
            \node at (19.5, -12.7) {$\vdots$};
            \node[circle, draw] (zN)   at (19.5, -15)  {$t^{p, a_T}$};
        
            \node[] 		(CA1)   at (16.5, 15.2) {};
            \path (CA1) edge [sloped, bend left=60] node {$0$} (t1);
            \node[] 		(CAN)   at (16.5, 10.2) {};
            \path (CAN) edge [sloped, bend left=60] node {$0$} (tN);

            \node[] 		(CA12)   at (16.5, 5.2) {};
            \path (CA12) edge [sloped, bend left=60] node {$0$} (t12);
            \node[] 		(CAN2)   at (16.5, 0.2) {};
            \path (CAN2) edge [sloped, bend left=60] node {$0$} (tN2);
            
            \node[] 		(CB1)   at (16.5, -9.8) {};
            \path (CB1) edge [sloped, bend left=60] node {$0$} (z1);
            \node[] 		(CBN)   at (16.5, -14.8) {};
            \path (CBN) edge [sloped, bend left=60] node {$0$} (zN);
        
            \node[circle, draw, accepting] (t)   at (28,0)  {$t$};
            \node[circle, draw] (tb)   at (26,-8)  {};
            \node[circle, draw] (ta)   at (26,8)  {};
            \path (tb) edge [sloped] node {$1$} (t);
            \path (ta) edge [sloped] node {$0$} (t);
        
            \node (d) at (22.5, 3) {$\ldots$};
            \node (d) at (22.5, -5) {$\ldots$};
         
            \node[circle, draw] (ta1)   at (24,12)  {};
            \node[circle, draw] (ta2)   at (24,4)  {};
            \path (ta1) edge [sloped] node {$0$} (ta);
            \path (ta2) edge [sloped] node {$1$} (ta);
        
            \node[circle, draw] (tb1)   at (24,-4)  {};
            \node[circle, draw] (tb2)   at (24,-12)  {};
            \path (tb1) edge [sloped] node {$0$} (tb);
            \path (tb2) edge [sloped] node {$1$} (tb);
        
            \path [draw = black, rounded corners, inner sep=100pt, dotted]
            (29.25, 1) -- (25.5, 12.25) -- (20.5, 16.25) -- (18.3, 16.25) -- (18.3, -16.25) -- (20.5, -16.25) -- (25.5, -12.25) -- (29.25, -1) -- cycle ;
            \node  (empty)    at (28.5, -6.5)  {\Large $V^{in}$};
        \end{scope}
    \end{scope}
\end{tikzpicture}
    }
  }
\caption{Illustration of our construction of $\mathcal A$ for a $k$-SAT formula $\Phi$ involving $N$ Boolean variables $x_1, \ldots, x_N$. We split the variables into $p$ blocks, resulting in $T = 2^{N/p}$ different assignments $A = \{a_1, \ldots, a_T\}$ for each of the $p$ blocks, thus giving a total of $p\cdot T$ subgraphs $\{S_{r, a_\ell}\}_{r \in [p], \ell\in [T]}$. Those subgraphs are expanded in Figure~\ref{fig: one reduction} on a particular $k$-SAT formula.}
\label{fig: reduction overview}
\end{figure}

Before giving the detailed description of the construction and eventually the proof of Theorem~\ref{thm: hardness}, we proceed with a sketch on how we achieve the above properties, see Figure~\ref{fig: reduction overview} for an illustration. Property~(1) is achieved by connecting the above-described subgraphs to the source node $s$ through a binary out-tree with $p\cdot T$ leaves (we call these nodes $V^{out}$). This tree has depth $\log p + N/p$. Property~(2) is achieved by connecting $V^{out}$ to $C$ through the nodes in $I$ that ensure that nodes $v,w \in C$ in the same relative positions (i.e., same distance from $\#$) are reached by strings of alternating co-lexicographic order, i.e., for a suitable string $\tau$ (that is a portion of the string labeling the common cycles) and two suitable strings $\alpha$ and $\beta$, $I_v$ contains two strings suffixed by $\alpha 00\tau$ and $\alpha 10\tau$, while $I_w$ contains two strings suffixed by $\beta 00\tau$ and $\beta 10\tau$. Independent of the relative co-lexicographic order of $\alpha$ and $\beta$ this witnesses that $\mathcal I(v) \cap \mathcal I(w) \neq \emptyset$.
Property~(3) is achieved by connecting $v_{M+N}^{r, a_\ell}$ from each subgraph $S_{r, a_\ell}$ with an edge labeled 0 to a complete binary in-tree (we call these nodes $V^{in}$) with the root being the only accepting state $t$. 
As the reversed automaton (i.e., the automaton obtained by reversing the direction of all transitions) is \emph{deterministic}, any two nodes in the graph can reach $t$ through a distinct binary string, witnessing that $\mathcal A$ is indeed minimal by the Myhill-Nerode characterization of the minimum DFA~\cite{nerode1958linear}. 
Property~(4) can be satisfied by an opportune transformation that maintains both forward and backward determinism.

\paragraph{Reduction Details}
Our goal now is to give the remaining details such that we can prove the following proposition. 
\begin{proposition}\label{prop: reduction}
    Let $\Phi$ be a $k$-SAT formula with $N$ variables and $M$ clauses and let $p\in [2, N]$ be an integer. Then, we can in 
    $T(N, M, p, k) := O(p \cdot 2^{N/p} \cdot (M\cdot k + N))$ time construct a DFA $\mathcal A$ 
    with 
    $n = T(N, M, p, k)$ nodes and 
    $m = T(N, M, p, k)$ edges such that $\mathcal A$ is minimum for its language $\mathcal L(\mathcal A)$ and $\Phi$ is satisfiable if and only if 
    there exists a non-empty string $\alpha$ such that
    $\mathcal A$ contains $p$ pairwise distinct nodes $u_i\in Q$, $i \in [p]$, with $\delta(u_i, \alpha)=u_i$ for all $i\in [p]$ and $\mathcal I(u_i) \cap \mathcal I(u_j) \neq \emptyset$ for all $i,j\in [p]$.
\end{proposition}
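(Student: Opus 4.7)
The plan is to finish the construction and verify four items: size and construction time, minimality, the cycle characterization, and the binary-alphabet reduction. First I would formalize the three missing gadgets: $V^{out}$ is a complete binary out-tree rooted at $s$ with leaves $\{u^{r, a_\ell}\}$; $I = \{\hat u^{r, a_\ell}\}$ is attached to $V^{out}$ by two parallel edges labeled $0$ and $1$ from each $u^{r, a_\ell}$ to $\hat u^{r, a_\ell}$, and to $C$ by a single $0$-edge from $\hat u^{r, a_\ell}$ to $v_0^{r, a_\ell}$; and $V^{in}$ is the complete binary in-tree rooted at the unique accepting state $t$, with each leaf $t^{r, a_\ell}$ receiving a $0$-edge from $v_{M+N}^{r, a_\ell}$. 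Direct counting gives $n, m = O(p \cdot 2^{N/p} \cdot (Mk + N))$, and this is also the construction time; forward determinism is immediate by inspection.

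For minimality I would show that the \emph{reverse} automaton is also deterministic: $V^{in}$ reverses into a binary out-tree, every node in $V^{out} \cup I$ other than $s$ has a unique labeled predecessor, and within each $S_{r, a_\ell}$ every node has a unique in-neighbour (with $\#$ reserved for the back-edge from $v_{M+N}^{r,a_\ell}$ to $v_0^{r,a_\ell}$). Hence every state reaches $t$ via a distinct string, so the Myhill--Nerode criterion yields minimality.

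For the satisfiability equivalence I would invoke Lemma~\ref{lemma: cycle iff kSAT}. Since $V^{out} \cup I \cup V^{in}$ is acyclic, every cycle of $\mathcal A$ sits inside some $S_{r, a_\ell}$ where it is unique, and its second half uniquely encodes $a_\ell$; hence $p$ same-labeled cycles through pairwise distinct nodes must come from $p$ distinct subgraphs in $p$ distinct blocks, which is the hypothesis of Lemma~\ref{lemma: cycle iff kSAT} and produces a satisfying assignment of $\Phi$. Conversely, a satisfying assignment yields, via the lemma, $p$ cycles with a common label $\alpha$ in subgraphs $S_{r, a_{u_r}}$, and I set $u_r := v_0^{r, a_{u_r}}$. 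Letting $\bar\sigma_r$ denote the label of the $V^{out}$-path from $s$ to $u^{r, a_{u_r}}$, the double $0/1$ gadget in $I$ places both $\bar\sigma_r \cdot 0 \cdot 0 \cdot \alpha^k$ and $\bar\sigma_r \cdot 1 \cdot 0 \cdot \alpha^k$ into $I_{u_r}$ for every $k \ge 0$, so $\mathcal I(u_r)$ contains the open interval $(\bar\sigma_r 00\alpha^k, \bar\sigma_r 10\alpha^k)$. For any two indices (WLOG $\bar\sigma_r \preceq \bar\sigma_{r'}$), the intersection of these intervals equals $(\bar\sigma_{r'} 00\alpha^k, \bar\sigma_r 10\alpha^k)$, which is non-empty: since co-lex is compared right-to-left, after matching the common suffix $\alpha^k$ and the subsequent $0$, the left endpoint carries $0$ while the right endpoint carries $1$; a concrete finite witness is $1\bar\sigma_{r'} 00\alpha^k$. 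I expect this interval sandwich to be the main subtle step of the proof.

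Finally, to reduce to $\Sigma = \{0,1\}$, I would encode each character in $\{0, \ldots, k, \#\}$ by a fixed prefix-free binary codeword of length $\lceil \log_2(k+2) \rceil$, subdividing every transition with fresh intermediate nodes. This preserves both forward and backward determinism (hence minimality), multiplies the size by an $O(\log k)$ factor absorbed into $O(Mk + N)$, and leaves the sandwich argument intact since the critical $0$ and $1$ labels in the $I$-gadget were already binary.
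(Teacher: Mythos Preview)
Your outline matches the paper's proof almost verbatim: the same four gadgets $V^{out}, I, C, V^{in}$, minimality via reverse determinism and Myhill--Nerode, the satisfiability equivalence via Lemma~\ref{lemma: cycle iff kSAT}, and the interval ``sandwich'' through the $0/1$ edges of $I$ (the paper does this with $k=0$, i.e.\ just $\beta 00$ and $\beta 10$, but your version with $\alpha^k$ appended is equally valid).

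The one genuine gap is your binary-alphabet step. Between $v_{j-1}^{r,a_\ell}$ and $v_j^{r,a_\ell}$ there can be up to $k$ \emph{parallel} transitions (one for each $h\in[k]$). If you ``subdivide every transition with fresh intermediate nodes'' you get up to $k$ paths all leaving $v_{j-1}^{r,a_\ell}$; with only two binary symbols available, at least two of these paths share their first edge label, so forward determinism is lost (and symmetrically at $v_j^{r,a_\ell}$ backward determinism is lost). Prefix-freeness of the code does not help here unless you also \emph{merge} the paths into a shared trie. The paper does exactly that: it encodes $h\in[k]$ as $0\,\rho_k(h)\,\rho_k(h)^{\rev}$ and $\#$ as $1^{2\log k+1}$, and implements the bundle of parallel edges as a binary out-trie glued to a binary in-trie. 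The palindromic shape $\rho_k(h)\rho_k(h)^{\rev}$ is what makes the in-trie (hence reverse determinism, hence minimality) go through, and the leading $0$ versus all-$1$ block for $\#$ guarantees that the $\#$-pattern occurs at a unique position in every cycle label. That last point is also what you implicitly rely on when arguing that $p$ pairwise distinct nodes fixed by the same $\alpha$ must live in $p$ distinct subgraphs: without a unique anchor, a periodic binarized cycle label could fix two distinct nodes inside the \emph{same} $S_{r,a_\ell}$.

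So the fix is not deep, but your sentence ``This preserves both forward and backward determinism'' is false as written; replace the naive subdivision by the trie-merged encoding above (or any fixed-length code implemented via shared out/in-tries) and the rest of your argument goes through.
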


The DFA $\mathcal A= (Q, \Sigma, \delta, s, F)$ constructed for the given $k$-SAT formula $\Phi$ involving $N$ Boolean variables $x_1, \ldots, x_N$ and for a parameter $p\in [2, N]$ is then precisely defined as follows. For simplicity, we assume that $N$ and $p$ are powers of two.
The states $Q$ of $\mathcal A$ consist of four disjoint sets $V^{out}$, $I$, $C$, and $V^{in}$. We will now define these four sets and the transitions connecting them (thus defining $\delta$). We refer the reader back to Figure~\ref{fig: reduction overview} for an illustration.
Throughout the description, for an integer $P$ that is a power of 2 and an integer $i\in [P]$, we denote by $\rho_P(i)$ the bit-string of length $\log (P)$ that represents the integer $i - 1$ in binary. We recall that $T=2^{N/p}$. 
\begin{description}
    \item[$V^{out}$:] The set $V^{out}$ is connected as a complete binary out-tree of depth $\log(p) + N/p$ with root being the source node $s$. This tree has $p \cdot T$ leaves (corresponding to $T$ different assignments to each of the $p$ blocks of variables) that we call $u^{r, a_\ell}$, for $r\in [p]$ and $\ell\in [T]$. Here, $a_1, \ldots, a_T$ are all $T$ possibles assignments to the $N/p$ variables of one block. Furthermore, the tree is such that a leaf node $u^{r, a_\ell}$ is reached from $s$ by a unique path labeled $\rho_p(r)\rho_T(\ell)$.
    \item[$I$:] The set $I$ consists of $p \cdot T$ nodes, $\hat u^{r, a_\ell}$ for $r\in [p]$ and $\ell\in [T]$. Any node $\hat u^{r, a_\ell}$ is reachable from $u^{r, a_\ell}\in V^{out}$ by two edges labeled $0$ and $1$.
    \item[$C$:] The set $C$ is composed of $p \cdot T$ disjoint sets $S_{r, a_\ell}$ for $r\in [p]$ and $\ell\in [T]$ that have been described already above. We repeat the construction for completeness. For each $r\in [p]$ and $\ell\in [T]$, we have a set $S_{r, a_\ell}$ that contains $M+N+1$ nodes $v^{r, a_\ell}_0, \ldots, v^{r, a_\ell}_{M + N}$ and we have the following transitions within $S_{r, a_\ell}$. Every node $v^{r, a_\ell}_0$ is reachable from node $\hat u^{r, a_\ell}$ by a transition labeled 0. For $j\in [M]$, $r\in [p]$, $h\in [k]$, and an assignment $a_\ell$, we have (1) $\delta(v^{r, a_\ell}_{j - 1}, h) = v^{r, a_\ell}_j$ if $x(l_{j, h})\notin B_r$ and, (2) if $x(l_{j, h})\in B_r$, we have $\delta(v^{r, a_\ell}_{j - 1}, h) = v^{r, a_\ell}_j$ if and only if assignment $a_\ell$ makes literal $l_{j, h}$ true. In addition, for $i$ such that $M + i\in [M + 1, M + N]$, we have $\delta(v^{r, a_\ell}_{M + i - 1}, b) = v^{r, a_\ell}_{M + i}$ for $b\in \{0,1\}$ if $i \notin B_r$. For $i\in B_r$ instead we have $\delta(v^{r, a_\ell}_{M + i - 1}, b) = v^{r, a_\ell}_{M + i}$ if and only if $a_\ell$ assigns $b$ to variable $x_i$.    
    In addition we have $\delta(v^\ell_{M + N}, \#) = v^\ell_0$. There are no other transitions within $S_{r, a_\ell}$, see Figure~\ref{fig: one reduction} for an example of this construction. 
    \item[$V^{in}$:] The set $V^{in}$ is connected as a complete binary in-tree of depth $\log(p) + N/p$ with root being a state $t$. This tree again has $p\cdot T$ leaves that we call $t^{r, a_\ell}$ for $r\in [p]$ and $\ell\in [T]$. This tree is such that a leaf node $t^{r, a_\ell}$ can reach the root $t$ by a unique path labeled $\rho_p(r)\rho_T(\ell)$. Every node $t^{r, a_\ell}$ is reachable from $v^{r, a_\ell}_{M + N}$ by an edge labeled $0$.
\end{description} 
Finally, we define the set of final states as $F:=\{t\}$. We proceed with the following observation: The constructed DFA $\mathcal A$ is minimal, i.e., $\mathcal A = \mathcal A_{\min}$. 
This is an easy observation that results from inspecting the reversed automaton $\mathcal A^{\rev}$ that has the same set of states, $t$ as the source node, and all transitions reversed compared to $\mathcal A$. The automaton $\mathcal A^{\rev}$ is also deterministic and thus every state in $\mathcal A^{\rev}$ is reached by a unique string from the source state $t$. It follows that every two states $u,v$ are clearly distinguishable in $\mathcal A$ by the Myhill-Nerode relation\footnote{Two states are Myhill-Nerode equivalent --- i.e.\ they can be collapsed in the minimum DFA --- if and only if they allow reaching final states with the same set of strings.} \cite{nerode1958linear} as they can reach the only final state $t$ using a unique string.

\paragraph{Transformation to Binary Alphabet}
Let $\mathcal A'$ be the automaton described above. We will now apply a transformation to $\mathcal A'$ that will result in an automaton over the binary alphabet $\{0, 1\}$. The only edges that are not already labeled with $\{0,1\}$ are within $C$, more specifically within the first part of every subgraph $S_{r,a_\ell}$ and the potential labels of the edges here are $[k]\cup \{\#\}$. The transformation that we apply here has to satisfy the properties that both forward- and reverse-determinism are maintained and that the pattern that the letter $\#$ is replaced with can not appear anywhere besides on the part that the edge labeled $\#$ was replaced by. Our idea is to transform the edge labeled $\#$ into a path of $2\log(k) + 1$ transitions labeled 1 (assume that $k$ is a power of two for simplicity) and to replace an edge labeled with $h\in[k]$ with a path of length $2\log(k) + 1$ labeled $0\rho_k(h)\rho_k(h)^{\rev}$ where $\rho_k(h)^{\rev}$ is the reversed string of $\rho_k(h)$. Consider some pair of nodes $v_{j - 1}^{r, a_\ell}, v_j^{r, a_\ell}$ in the subgraph $S_{r,a_\ell}$ for $j\in [M]$. This pair maybe connected with a set of transitions $X \subseteq [k]$. In any case however $|X| \le k$. The idea is to, instead of introducing parallel paths for edges labeled $h$, introduce these paths through a binary out- and in-trie that contains exactly the paths $\rho_k(h)\rho_k(h)^{\rev}$ for $h\in X$, see Figure~\ref{fig: reduction transformation} for an example. This allows us to maintain forward- and reverse-determinism and at the same time keeps the number of newly introduced nodes and transitions bounded by $O(k)$ for each pair of nodes that the transformation gets applied to.

\begin{figure}[ht!]
  \centering{
    \resizebox{0.85\columnwidth}{!}{
      \begin{tikzpicture}[
          scale=.9,
          ->,
          >=stealth',
          shorten >=1pt,
          auto,
          semithick,
          every node/.style={minimum size=6mm}
        ]

            \node[circle, draw] (u)   at (-2,0)  {};
            \node[circle, draw] (v)   at (2,0)  {};
            \path (u) edge node {$1,2,6,7$} (v);

            
            \begin{scope}[xshift=5cm]
                \node[circle, draw] (v1)   at (0,0)  {};
                \node[circle, draw] (v2)   at (2,0)  {};
                \node[circle, draw] (v3)   at (4,1.714)  {};
                \node[circle, draw] (v4)   at (4,-1.714)  {};
                \node[circle, draw] (v5)   at (6,2.571)  {};
                \node[circle, draw] (v7)   at (6,-0.857)  {};
                \node[circle, draw] (v8)   at (6,-2.571)  {};
                \node[circle, draw] (v9)   at (8,3)  {};
                \node[circle, draw] (v10)   at (8,2.143)  {};
                \node[circle, draw] (v14)   at (8,-1.286)  {};
                \node[circle, draw] (v15)   at (8,-2.143)  {};
                \node[circle, draw] (w5)   at (10,2.571)  {};
                \node[circle, draw] (w7)   at (10,-0.857)  {};
                \node[circle, draw] (w8)   at (10,-2.571)  {};
                \node[circle, draw] (w3)   at (12,1.714)  {};
                \node[circle, draw] (w4)   at (12,-1.714)  {};
                \node[circle, draw] (w2)   at (14,0)  {};

                \path (v1) edge [sloped] node {$0$} (v2);
                \path (v2) edge [sloped] node {$0$} (v3);
                \path (v2) edge [sloped] node {$1$} (v4);
                \path (v3) edge [sloped] node {$0$} (v5);
                \path (v4) edge [sloped] node {$0$} (v7);
                \path (v4) edge [sloped] node {$1$} (v8);
                \path (v5) edge [sloped] node {$0$} (v9);
                \path (v5) edge [sloped] node[below] {$1$} (v10);
                \path (v7) edge [sloped] node {$1$} (v14);
                \path (v8) edge [sloped] node {$0$} (v15);

                \path (v9) edge [sloped] node {$0$} (w5);
                \path (v10) edge [sloped] node[below] {$1$} (w5);
                \path (v14) edge [sloped] node {$1$} (w7);
                \path (v15) edge [sloped] node {$0$} (w8);
                \path (w5) edge [sloped] node {$0$} (w3);
                \path (w7) edge [sloped] node {$0$} (w4);
                \path (w8) edge [sloped] node {$1$} (w4);
                \path (w3) edge [sloped] node {$0$} (w2);
                \path (w4) edge [sloped] node {$1$} (w2);
            \end{scope}
      \end{tikzpicture}
    }
  }
\caption{Transformation example for a $8$-SAT formula. Assume the formula $\Phi$ contains a clause $C_j$ such that there are parallel transitions with labels $X = \{1, 2, 6, 7\}$ between a pair of nodes $v_{j - 1}^{r, a_\ell}, v_j^{r, a_\ell}$ in $\mathcal A'$ and assume that $x(l_{j,1})\in B_r$, but $x(l_{j,2}), x(l_{j,6}), x(l_{j,7})\notin B_r$. Given that a transition labeled $1$ is present we can conclude that the assignment $a_\ell$ to the variables of $B_r$ makes $l_{j, 1}$ true. We can also conclude that the variables contained in the literals $l_{j, h}$ for $h\in \{3, 4, 5, 8\}$ are contained in the block $B_r$ as well but the assignment $a_\ell$ does not make these literals true.
The parallel transitions labeled $\{1, 2, 6, 7\}$ would be replaced by a subgraph of maximal out- and in-degree 2 that contains 4 paths labeled $0\; 000\; 000$ (for $1$), $0\; 001\; 100$ (for $2$), $0\; 101\; 101$ (for $6$), $0\; 110\; 011$ (for $7$). We note that the number of newly introduced nodes and transitions is $O(k)$.}
\label{fig: reduction transformation}
\end{figure}
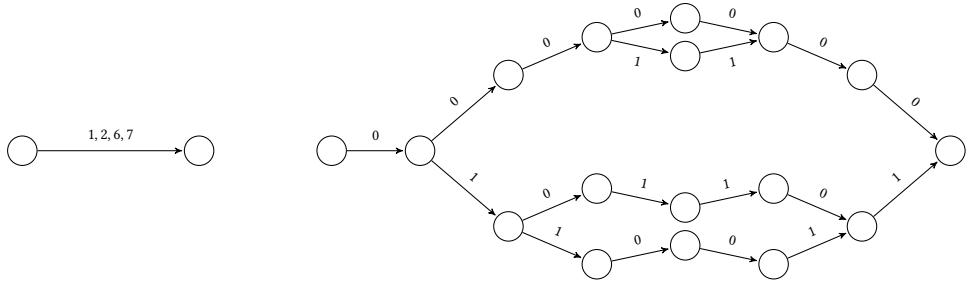

It is clear now that every path that originates from an integer $i\in [k]$ starts with a zero, while the path resulting from the letter $\#$ contains $2\log(k) + 1$ consecutive ones. As a consequence, the pattern $1^{2\log(k) + 1}$ appears only on the paths that originally corresponded to $\#$ and thus two transformed cycles match if and only if they used to match before the transformation.  
It is clear that this construction maintains both forward- and reverse- determinism and thus minimality. The nodes $v^{r,a_\ell}_0\in C$ have exactly two in-edges in $\mathcal A'$, one labeled $0$ and one labeled $\#$, in $\mathcal A$ these nodes will still have two in-edges, one labeled $0$ and one labeled $1$. Analogously, the nodes $v^{r,a_\ell}_{M + N}$ have two out-edges, one labeled $0$ and one labeled $\#$ and hence the transformation yields two out-edges, one labeled $0$ and one labeled $1$. 

We are now ready to prove Proposition~\ref{prop: reduction}.

\begin{proof}[Proof of Proposition~\ref{prop: reduction}]
    In order to bound the number of states in the constructed DFA $\mathcal A$, let us first consider the automaton $\mathcal A'$ before the transformation to the binary alphabet. We observe that both the set of nodes $V^{out}$ and $V^{in}$ are of cardinality at most $2pT$. The cardinality of $I$ is trivially bounded by $pT$ and the cardinality of $C$ by $pT \cdot (M + N)$. In total, the number of nodes in $\mathcal A'$ is thus bounded $O(pT \cdot (M + N))$. The number of transitions in $\mathcal A'$ instead is $O(pT \cdot (M\cdot k + N))$. As we have argued above the transformation of $\mathcal A'$ to $\mathcal A$ replaces $x\le k$ parallel edges between any two nodes of the first part of each subgraph $S_{r, a_\ell}$ by $O(k)$ new nodes and edges. In summary, we get that $\mathcal A$ contains at most $n=O(pT \cdot (M \cdot k + N)) = T(N, M, p, k)$ nodes and $m=T(N, M, p, k)$ transitions. 
    Clearly, the construction of $\mathcal A$ can be done in $T(N, M, p, k)$ time as well. We have argued above that $\mathcal A$ is minimal for the language $\mathcal L(\mathcal A)$ it accepts. Finally, using Lemma~\ref{lemma: cycle iff kSAT} we conclude that the set $\{S_{r, a_\ell}\}_{r\in [p], \ell\in [T]}$ and thus $\mathcal A$ contains $p$ disjoint 
    cycles spelling the same string, say $\alpha$, if and only if the $k$-SAT formula $\Phi$ is satisfiable. Due to the construction of $\mathcal A$ (the only directed cycles are the ones contained in $C$), it is clear that $p$ cycles spelling $\alpha$ exist if and only if there are $p$ nodes $u_1,\ldots, u_p$ with $\delta(u_r, \alpha) = u_r$ for each $r\in [p]$ and, w.l.o.g., we can assume that $u_r=v^{r, a_{i_r}}_{0}$ for some suitable assignment $a_{i_r}$. It remains to argue that $\mathcal I (u_i) \cap \mathcal I (u_j) \neq \emptyset$ for all $i\neq j$ holds if $\Phi$ is satisfiable. Let $i, j\in [p]$ be distinct. Notice that there is a path from the source node to $u_i$ spelling a string $\beta00$ as well as $\beta10$ -- the 0 and 1 being due to the transitions reaching the nodes in the set $I$ and $\beta$ being the string spelled by the path through $V^{out}$. Analogously, there is a path from the source node to $u_j$ spelling a string $\gamma 00$ as well as $\gamma 10$ and it must hold that $\beta \neq \gamma$. Now assume that $\beta \prec \gamma$, then $\beta 00 \prec \gamma 00 \prec \beta 10 \prec \gamma 10$ and thus $\mathcal I(u_i) \cap \mathcal I(u_j) \neq \emptyset$. Otherwise, if $\gamma \prec \beta$, we have $\gamma 00 \prec \beta 00 \prec \gamma 10\prec \beta 10$ and again $\mathcal I(u_i) \cap \mathcal I(u_j) \neq \emptyset$. This concludes the proof.
\end{proof}

We are now ready to prove Theorem~\ref{thm: hardness}. 

\begin{proof}[Proof of Theorem~\ref{thm: hardness}]
    Given a $k$-SAT formula $\Phi$ with $N$ variables and $M$ clauses, we apply the described reduction for some value $p\in [2, N]$.
    Following Proposition~\ref{prop: reduction}, the corresponding \textsc{DfaDetWidth} instance $\mathcal A$ can be built in $T(N, M, p, k) := O(p \cdot 2^{N/p} \cdot (M\cdot k + N))$ time and it has $n=T(N, M, p, k)$ states and $m=T(N, M, p, k)$ transitions. 
    Theorem~\ref{thm: main: width - cycle} implies that $\width^D(\mathcal L(\mathcal A))\ge p$ if and only if there exists a non-empty string $\alpha$ such that
    $\mathcal A$ contains $p$ pairwise distinct nodes $u_i\in Q$, $i \in [p]$, with $\delta(u_i, \alpha)=u_i$ for all $i\in [p]$ and $\mathcal I(u_i) \cap \mathcal I(u_j) \neq \emptyset$ for all $i,j\in [p]$.
    Proposition~\ref{prop: reduction} states that this is the case if and only if the $k$-SAT instance is a YES-instance. 
    Altogether it follows that the $k$-SAT instance is a YES-instance if and only if $\width^D(\mathcal L(\mathcal A))\ge p$.
    
    For the statements~\ref{enum 1} and \ref{enum 2}, choose $k=3$, $p=N$, and assume that $M=O(N)$ (this is w.l.o.g., see the sparsification lemma above). Observe that this choice implies $T(N, M, p, k) = O(N^2)$.
    \begin{enumerate}
        \item Assume that there exists an algorithm for \textsc{DfaDetWidth} with running time $\poly(n, m)$. We can employ this algorithm to check whether $\width^D(\mathcal L(\mathcal A))<p=N$ and thus solve the 3-SAT instance in $\poly(N)$ time implying that $P=NP$.
        \item Now, assume that there exists an algorithm for \textsc{DfaDetWidth} with running time $2^{o(\sqrt{m})}$. We can now employ this algorithm to check whether $\width^D(\mathcal L(\mathcal A))<N$ and thus solve the 3-SAT instance in $2^{o(N)}$ time implying that ETH fails.
        \item Choose $k=3$ and assume that $M=O(N)$ (again w.l.o.g., according to the sparsification lemma), and let $p\le m^c$ for any constant $c<1/2$. We now first show that this assumption on $p$ implies $p = O( N /\log N)$. Let $\eps>0$ be the constant such that $c = 1/2 - \eps$. 
        We now distinguish two cases. (1) Assume $\frac{N}{p} \ge \eps \cdot (\log p + \log N)$. In this case we immediately get $p \le 1/\eps \cdot N / \log N = O(N / \log N)$.
        (2) Now, assume the opposite, i.e., $\frac{N}{p} < \eps \cdot (\log p + \log N)$ and
        recall that $m\le C\cdot p \cdot 2^{N/p} \cdot N$ for a sufficiently large constant $C$. Hence 
        \[
            p
            \le m^{c}
            \le \Big(C\cdot p \cdot 2^{N/p} \cdot N\Big)^c
            = C^c \cdot 2^{c \cdot \big(\log p + \log N + \frac{N}{p}\big)}
            \le C^c \cdot 2^{c \cdot (1 + \eps) \cdot \big(\log p + \log N \big)}
            = C^c \cdot (pN)^{c \cdot (1 + \eps)}.
        \]
        This implies $p \le N^{\frac{c(1 + \eps)}{1 - c(1 + \eps)}}$. Now recall that $c = 1/2 -\eps$ and observe that thus
        \[
           \frac{c(1 + \eps)}{1 - c(1 + \eps)}
           = \frac{1/2 - \eps/2 - \eps^2}{1/2 + \eps/2 + \eps^2}
           < 1,
        \]
        and hence $p = O( N /\log N)$ also in this case.
        Now, assume that there exists an algorithm for \textsc{DfaDetWidth} with running time $ m^{o(p)}$. We can employ this algorithm to check whether $\width^D(\mathcal L(\mathcal A))<p$ and thus solve the 3-SAT instance in $O((p \cdot 2^{N/p} \cdot N)^{o(p)}) =  2^{o(N)}$ time, where we used that $p = O(N/\log N)$. This contradicts ETH.
        \item Now, let $p$ be any constant. Then, $T(N, M, p, k) = O(2^{N/p} \cdot (M\cdot k + N)$. Assume that there is an algorithm for \textsc{DfaDetWidth} with parameter $p$ that decides if $\width^D(\mathcal L(\mathcal A))<p$ in running time $O(m^{p -\epsilon})$ for some constant $\epsilon >0$. We can employ this algorithm to check whether $\width^D(\mathcal L(\mathcal A))<p$ and thus solve the $k$-SAT instance in
        \[
            O(m^{p -\epsilon}) 
            = O((2^{N/p} \cdot (M\cdot k + N))^{p - \epsilon})
            = O(2^{N \cdot (1 - \epsilon / p)} \cdot \poly(M, N))
        \]
        time, contradicting SETH.\qedhere
    \end{enumerate}
\end{proof}

\section*{Acknowledgments}

Ruben Becker, Davide Cenzato, Sung-Hwan Kim, Bojana Kodric and Nicola Prezza: Funded by ERC StG ``REGINDEX: Compressed indexes for regular languages with applications to computational pan-genomics'' grant nr 101039208. Views and opinions expressed are however those of the author(s) only and do not necessarily reflect those of the European Union or the European Research Council Executive Agency. Neither the European Union nor the granting authority can be held responsible for them. Alberto Policriti: Supported by project National Biodiversity Future Center-NBFC (CN\_00000033, CUP G23C22001110007) under the National Recovery and Resilience Plan of Italian Ministry of University and Research funded by European Union-NextGenerationEU.

\bibliographystyle{ACM-Reference-Format}
\bibliography{main}

\appendix

\section{Sorting infima and suprema strings}
\label{sec: appendix: infsup}

In \cite{BeckerCCKKOP23}, Becker et al. show how to compute the (co-lex) order of infima and suprema strings of any NFA in $O(m\log n)$ time. Their algorithm works by pruning transitions of the input automaton and produces as output a pseudo-forest (i.e., every state has an indegree of at most 1) with the property that the unique backward walk entering each state $v$ encodes $\inf I_v$: the infimum of state $v$ in the original automaton. The procedure for suprema strings is symmetric. Ultimately, infima and suprema strings can be computed and sorted by calling twice the algorithm of \cite{BeckerCCKKOP23}; the output of the whole process are two pseudo-forests encoding infima and suprema strings, respectively.  
Sorting the infima strings and suprema strings all together can be done using the suffix doubling algorithm in \cite{KimOP23}, which runs in $O(n\log n)$ time in this particular case, as also mentioned in \cite{BeckerCCKKOP23}.

One issue to be concerned about is that the algorithm \cite{BeckerCCKKOP23} assumes that (i) the source state of the input automaton does not have any incoming transition, and (ii)  transitions leading to the same state are labeled with the same character (the so-called \emph{input-consistency}), which it does not necessarily hold after minimizing the DFA in our application. We solve these issues on in the infima case; the suprema case is symmetric.
(i) The assumption on the source state $s$ with no incoming transition can be resolved by adding a new source state $s'$, a dummy state $s''$ (this dummy node will be useful when we want to obtain a reachable DFA as explained in the next paragraph), a transition from $s'$ to $s''$ and another transition from $s''$ to $s$, both labeled with a special character $\$\notin\Sigma$ where $\$$ is assumed to be smaller than every character in $\Sigma$. Note that this pre-processing just prepends $\$\$$ to the beginning of finite strings (i.e. those starting in the source node), and does not change the order of infima strings.
(ii) The issue related with input-consistency can be resolved simply by removing incoming transitions with non-minimum labels before computing a pseudo-forest for infima strings.
To see why this removal procedure works, consider (not necessarily distinct) nodes $u',u'',v\in Q$ such that $v=\delta(u',a)=\delta(u'',b)$ with $a\prec b\in\Sigma$. Then the infimum string $\inf I_v$ of $v$ cannot end with $b$ because for every string $\alpha\in I_{u'}$, $\alpha a\in I_v$ reaches $v$, and this string is always co-lexicographically smaller than every string ending with $b$. Thus, removing the transition labeled with $b$ does not affect the infimum string of $v$. In other words, for each node we can remove all transitions but the ones bearing the smallest label, thereby obtaining an input-consistent DFA preserving the infima strings of the original DFA.

Note that the above procedure could disconnect the DFA;  
some states might not be reachable from the source state after removing transitions. 
In fact, the pruning algorithm \cite{BeckerCCKKOP23} still works regardless of the reachability of the automaton although it was not clearly stated therein. Nevertheless, in order to ensure the correctness, we shall also show how to enforce the reachability without affecting the infima and suprema strings.
The following modification ensures the reachability of each of the two DFAs (one for infima and one for suprema strings), while preserving the infima. The procedure for suprema strings is symmetric, so here we do not describe it. 
Let $v_1,v_2,\cdots,v_n\in Q$ be any arbitrary ordering of the $n$ states, and define $v_{0}:=s''$ as the dummy state $s''$ added in the above procedure. Note that, for $0\le i\le n$, the label $\lambda(v_i)$ of each incoming transition is uniquely defined. For every $1\le i\le n$, we add a new state $v_{i-1,i}$ and add a transition from $v_{i-1}$ to $v_{i-1,i}$ with label $\#\in\Sigma$, and another transition from $v_{i-1,i}$ to $v_i$ with label $\lambda(v_i)$. Here, $\#$ is chosen to be larger than every character in $\Sigma\cup\{\$\}$. Now, every state of the resulting automaton is reachable from the (new dummy) source state. Note that the transition from $v_{i-1,i}$ to $v_{i}$ cannot contribute the infima. To see why, observe that, for every $1\le i\le n$, there exists a state $u_i$ such that $\delta(u_i,\lambda(v_i))=v_i$ and $u_i\ne v_{i-1,i}$. Since $\lambda(u_i)\prec\#=\lambda(v_{i-1,i})$, every string reaching $v_{i-1,i}$ is larger than every string reaching $u_i$, thus $v_{i-1,i}$ cannot contribute to the infimum string of $v_i$.

\end{document}